\documentclass[a4paper]{article}

\usepackage{amsmath, amssymb, amsthm}
\usepackage{graphicx}
\usepackage{hyperref}
\usepackage{thmtools}
\usepackage{orcidlink}
\usepackage{authblk}

\newtheorem{theorem}{Theorem}
\newtheorem{claim}{Claim}[theorem]
\newtheorem{corollary}[theorem]{Corollary}

\newtheorem{proposition}[theorem]{Proposition}
\newtheorem{openquestion}{Open Question}
\DeclareMathOperator{\diam}{diam}
\newcommand{\dmaxproblem}{\textsc{$d_{\max}(\mathcal{G})$-Diameter}}

\usepackage{regexpatch}
\makeatletter
\xpatchcmd\thmt@restatable{%
\csname #2\@xa\endcsname\ifx\@nx#1\@nx\else[{#1}]\fi
}{%
\ifthmt@thisistheone
\csname #2\@xa\endcsname\ifx\@nx#1\@nx\else[{#1}]\fi
\else
\csname #2\@xa\endcsname[{Restated}]
\fi}{}{}
\makeatother

\title{The Complexity of Diameter on $H$-free graphs\thanks{A peer-reviewed extended abstract of this work appeared in the proceedings of WG 2024~\cite{OostveenPL25Diameter}. A full peer-reviewed version is to appear in SIDMA.}}
\author[1]{Jelle J. Oostveen\ \orcidlink{0009-0009-4419-3143}\thanks{J.J. Oostveen is supported by the NWO grant OCENW.KLEIN.114 (PACAN).}}
\author[2]{Daniël Paulusma \orcidlink{0000-0001-5945-9287}}
\author[1]{Erik Jan van Leeuwen \orcidlink{0000-0001-5240-7257}}

\affil[1]{Dept.\ Information and Computing Sciences, Utrecht University, The Netherlands\\ \texttt{j.j.oostveen@uu.nl}, \texttt{e.j.vanleeuwen@uu.nl}}
\affil[2]{Dept.\ Computer Science, Durham University, United Kingdom\\ \texttt{daniel.paulusma@durham.ac.uk}}

\date{}

\begin{document}
	\maketitle
	
	\begin{abstract}
		The intensively studied {\sc Diameter} problem is to find the diameter of a given connected graph. We investigate, for the first time in a structured manner, the complexity of {\sc Diameter} for $H$-free graphs, that is, graphs that do not contain a fixed graph $H$ as an induced subgraph. We first show that if $H$ is not a linear forest with small components, then {\sc Diameter} cannot be solved in subquadratic time for $H$-free graphs under SETH\@. For some small linear forests, we do show linear-time algorithms for solving {\sc Diameter}. For other linear forests $H$, we make progress towards linear-time algorithms by considering specific diameter values. If $H$ is a linear forest, the maximum value of the diameter of any graph in a connected $H$-free graph class is some constant $d_{\max}$ dependent only on $H$. We give linear-time algorithms for deciding if a connected $H$-free graph has diameter $d_{\max}$, for several linear forests $H$. In contrast, for one such linear forest~$H$, {\sc Diameter} cannot be solved in subquadratic time for $H$-free graphs under SETH\@. Moreover, we even show that, for several other linear forests $H$, 
		one cannot decide in subquadratic time if a connected $H$-free graph has diameter $d_{\max}$ under SETH\@.
	\end{abstract}

	\section{Introduction}
	
	The {\sc Diameter} problem asks to find the diameter of an undirected, unweighted graph, that is, the longest of the shortest paths between all pairs of nodes. We denote for a graph $G = (V,E)$ the quantities $n = |V|$ and $m = |E|$. A trivial algorithm executes a Breadth First Search (BFS) from every node in the graph, and has a running time of $O(nm)$. The best known matrix multiplication-based algorithms achieve a running time of $\tilde{O}(n^{\omega})$~\cite{CyganGS15,ShoshanZ99,Zwick02} to find the diameter of a graph, where $\tilde{O}$ hides logarithmic factors and $\omega$ is the matrix multiplication constant, with current known value $\omega < 2.371866$~\cite{DuanWZ23}. A search for improvement led to a hardness result; that, under the Strong Exponential Time Hypothesis (SETH), one cannot decide between diameter 2 or 3 on (sparse) split graphs in $O(n^{2-\epsilon})$ time, for any $\epsilon > 0$~\cite{RodittyWilliams13}. SETH is a hypothesis that states that {\sc Satisfiability} cannot be solved in $2^{(1-\epsilon)n}$ time, for any $\epsilon > 0$, where $n$ is the number of variables~\cite{ImpagliazzoP01,ImpagliazzoPZ01}.
	On other simple graph classes like constant degree graphs, truly subquadratic time algorithms for {\sc Diameter} are also ruled out under SETH~\cite{EvaldDahlgaard16}.
	Directed versions of the {\sc Diameter} problem admit similar barriers under SETH~\cite{AbboudWW16}. No clear bound is known for {\sc Diameter} on dense graphs, and no such lower bound can be derived when $\omega=2$, but we do know that there is a subcubic equivalence between {\sc Diameter} and computing the reach centrality of a graph, that is, a truly subcubic algorithm for one implies such an algorithm for the other and vice versa~\cite{AbboudGW23}.
	
	Given that the hardness results are based on long-standing conjectures, it is natural to approach diameter computation and other similar problems on restricted graph classes. Related literature also concerns computation of eccentricities, as computing the diameter of a graph is equivalent to computing the largest eccentricity over all vertices. A conceptually simple algorithm called LexBFS can solve {\sc Diameter} in $O(n+m)$ time for distance-hereditary chordal graphs and interval graphs~\cite{dragan1997lexbfs}. Distance-hereditary graphs have been studied separately, and admit linear-time algorithms of all eccentricities~\cite{Dragan94,DraganG20,DraganN00}. Interval graphs admit computation of the eccentricity of the center of the graph in linear time, next to linear-time diameter computation~\cite{Olariu90a}. Subquadratic algorithms for {\sc Diameter} and computing eccentricities have been studied for more graph classes, including asteroidal triple-free graphs~\cite{ducoffe2022diameter}, directed path graphs~\cite{CorneilDHP01}, strongly chordal graphs~\cite{dragan1989centers}, dually chordal graphs~\cite{BrandstadtCD98,Dragan93}, Helly graphs and graphs of bounded Helly number~\cite{DraganDG21,Ducoffe23Helly,DucoffeD21}, $\alpha_i$-metric graphs~\cite{DraganD23MetricGraphs}, retracts~\cite{Ducoffe21BeyondHelly}, $\delta$-hyperbolic graphs~\cite{ChepoiDEHV08,ChepoiDHVA19,DraganG20,DraganHV18}, planar graphs~\cite{AbboudMW23,Cabello19,GawrychowskiKMS21}, and outerplanar graphs~\cite{FarleyP80}. {\sc Diameter} was also studied from the parameterized perspective, see e.g.~\cite{AbboudWW16,bringmann2020multivariate,CoudertDP19,ducoffe2022optimal,DucoffeHV22}, and a large body of work exists on approximation algorithms, see e.g.~\cite{AbboudWW16,BackursRSWW18,ChechikLRSTW14,CorneilDHP01,CorneilDK03,DraganD23MetricGraphs,RodittyWilliams13,WeimannY16}.

	For graph classes with forbidden patterns, Ducoffe et al.~\cite{DucoffeHV22} show subquadratic-time computation of {\sc Diameter} for $\mathcal{H}$-minor-free graphs, where the precise exponent depends on $\mathcal{H}$, improved upon by Le and Wulff-Nilsen~\cite{LeW24}. Johnson et al.~\cite{FrameworkPaper} showed for $\mathcal{H}$-subgraph-free graphs that there is a dichotomy for {\sc Diameter} between almost-linear time solvability and quadratic-time conditional lower bounds depending on the family $\mathcal{H}$. Also note that many of the studied graph classes listed above can be characterized as $\mathcal{H}$-free graphs for a family of graphs $\mathcal{H}$, but for each one, $\mathcal{H}$ has size 2 or larger. As far as we are aware, a structured study into forbidden (monogenic) induced patterns is absent in the literature.
	
	\paragraph{Our Contributions.} We initiate a structured study into diameter computation on $H$-free graphs, where $H$ is a single graph. Recall that a graph is $H$-free if it does not contain $H$ as an induced subgraph. The question we consider is:
	
	\medskip\noindent
	\emph{For which $H$-free graph classes $\mathcal{G}$ can the diameter of an $n$-vertex graph $G\in \mathcal{G}$ be computed in time $O(n^{2-\epsilon})$?}

	\medskip\noindent
	Our first result analyses existing lower bounds to find hardness for $H$-free graph classes. Here, $P_t$ denotes the path on $t$ vertices, and $sP_t$ denotes the disjoint union of $s$ copies of a $P_t$. For graphs $G,H$ let $G+H$ denote their disjoint union. Recall that a \emph{linear forest} is a disjoint union of one or more paths.

	\begin{restatable}{theorem}{HardnessGeneral}\label{thm:hardnessgeneral}
		Let $H$ be a graph that contains an induced~$2P_2$ or is not a linear forest. Under SETH, {\sc Diameter} on $H$-free graphs cannot be solved in $O(n^{2-\epsilon})$ time for any $\epsilon > 0$.
	\end{restatable}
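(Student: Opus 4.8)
The plan is to reduce every case to a small number of existing SETH-based lower bounds through one elementary observation: if a graph $G$ is $F$-free and $H$ contains $F$ as an induced subgraph, then $G$ is $H$-free, so the class of $F$-free graphs is contained in the class of $H$-free graphs. Hence it suffices to exhibit, for each ``atomic'' forbidden pattern $F$ that an $H$ of the stated form must contain, an $\Omega(n^{2-\epsilon})$ lower bound for \textsc{Diameter} on (sparse) $F$-free graphs. First I would record the structural step: if $H$ is not a linear forest, then either $H$ contains a cycle, and hence an induced $C_k$ for some $k\ge 3$, or $H$ is a forest with a vertex of degree at least $3$, and hence contains an induced $K_{1,3}$. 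Moreover $C_k$ contains an induced $2P_2$ for every $k\ge 6$ (two vertex-disjoint edges of the cycle with no edge between them exist once $k\ge 6$). So, together with the hypothesis, we are reduced to proving hardness of \textsc{Diameter} on $2P_2$-free graphs, on $C_k$-free graphs for $k\in\{3,4,5\}$, and on $K_{1,3}$-free graphs.

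Three of these five patterns are handled simultaneously by the Roditty--Williams construction~\cite{RodittyWilliams13}: it produces sparse \emph{split} graphs on which distinguishing diameter $2$ from $3$ needs time $\Omega(n^{2-\epsilon})$ under SETH, and split graphs are exactly the $(2P_2,C_4,C_5)$-free graphs, hence in particular $2P_2$-free, $C_4$-free, and $C_5$-free. For the triangle-free ($C_3$-free) case I would lightly transform the same instances: take a Roditty--Williams split graph $G$ and subdivide every edge once to obtain the incidence graph $S(G)$, which is bipartite (so triangle-free) and has $O(|V(G)|+|E(G)|)=\tilde O(n)$ vertices. Distances between original vertices double, so if $\diam(G)=3$ then $\diam(S(G))\ge 6$; conversely, if $\diam(G)=2$, then bounding the distances that involve the new subdivision vertices and crucially using that $G$ is $2P_2$-free (so any two vertex-disjoint edges of $G$ have an endpoint pair at distance $1$, which keeps midpoint-to-midpoint distances at most $4$), one gets $\diam(S(G))\le 5$. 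Thus deciding whether $\diam(S(G))\le 5$ decides the original diameter-$\{2,3\}$ instance and inherits its hardness, giving the lower bound on triangle-free graphs.

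For the remaining pattern $K_{1,3}$, i.e.\ claw-free graphs, the idea is to pass to line graphs, which are always claw-free. Starting from a sparse, bounded-degree graph $G$ on which \textsc{Diameter} is SETH-hard~\cite{EvaldDahlgaard16}, the line graph $L(G)$ has $|E(G)|=O(|V(G)|)$ vertices, is claw-free, and satisfies $\diam(G)-1\le \diam(L(G))\le \diam(G)+1$, since the distance in $L(G)$ between two distinct edges of $G$ equals one plus the minimum distance in $G$ between their endpoint sets. Provided the source construction separates two diameter values differing by at least $3$ (the generic situation for these constructions; if necessary one stretches the instance first), the $\pm 1$-shifted thresholds on $\diam(L(G))$ still separate the two cases, yielding the lower bound on claw-free graphs. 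In every case the produced graphs are sparse, so $\Omega(n^{2-\epsilon})$ is the appropriate form of the bound, and Theorem~\ref{thm:hardnessgeneral} follows by combining these pieces through the containment observation above.

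The step I expect to be the main obstacle is the claw-free case. The tempting shortcut of taking the line graph of the (possibly subdivided) Roditty--Williams split graph fails: because split graphs have no induced $2P_2$, every two edges have nearby endpoints, so their line graph has diameter $O(1)$ regardless of the instance. One therefore genuinely needs a bounded-degree source instance of large diameter, and must then control the additive $\pm 1$ slack of the line-graph transformation against the diameter gap of that instance so that a clean decision threshold survives. The triangle-free case is technically easier but requires the same care in the distance bookkeeping for the subdivision vertices, which again rests on the $2P_2$-freeness of split graphs.
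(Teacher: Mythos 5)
Your overall skeleton matches the paper exactly: reduce via the containment observation to a handful of atomic patterns ($2P_2$, short induced cycles, $K_{1,3}$) and invoke a SETH-hard construction for each, with the Roditty--Williams split-graph instances \cite{RodittyWilliams13} covering $2P_2$, $C_4$ and $C_5$ (the paper in fact uses chordality to cover \emph{all} $C_k$ with $k\geq 4$ at once, whereas you split off $k\geq 6$ via the induced $2P_2$ inside $C_k$ --- both work). Where you genuinely diverge is in the two remaining atoms. For triangle-freeness the paper simply cites an existing bipartite hardness construction \cite{AbboudWW16}; you instead subdivide every edge of the split-graph instance and use $2P_2$-freeness to keep midpoint-to-midpoint distances at $4$, turning the $2$-vs-$3$ gap into a $\leq 5$-vs-$\geq 6$ gap. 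This is correct and self-contained (the subdivided graph has $\Theta(M)$ vertices, so the bound transfers), at the cost of redoing distance bookkeeping the paper gets for free by citation. For claw-freeness the paper modifies the Evald--Dahlgaard construction \cite{EvaldDahlgaard16} internally, adding chords to its binary trees so the instance itself becomes claw-free with the same diameter gap; you instead pass to the line graph of their bounded-degree instance and absorb the resulting $\pm 1$ additive slack. Your approach is more black-box but hinges on the source construction separating diameter values at least $3$ apart; this does hold for \cite{EvaldDahlgaard16} (their gap is multiplicative, roughly $3/2$, on instances of large diameter), but it is the one point you assert generically rather than verify, and your fallback of ``stretching'' by subdivision needs care, since subdividing also shifts the diameter additively through the new vertices and does not automatically amplify a gap of $1$ into a gap of $3$. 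The paper's in-place modification avoids this slack entirely. In short: same decomposition, same split-graph core, but two of your five atomic lower bounds are obtained by graph transformations (subdivision, line graph) where the paper uses a citation and a tailored modification; your version is more self-contained, the paper's is shorter and dodges the gap-versus-slack issue in the claw-free case.
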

	
	Theorem~\ref{thm:hardnessgeneral} shows the most prominent gap to be for graph classes which exclude a small linear forest.
	Ducoffe~\cite{ducoffe2022diameter} proved a hardness result for {\sc Diameter} on AT-free graphs that holds under the hypothesis that the currently asymptotically fastest (combinatorial) algorithms for finding simplicial vertices (vertices with a complete neighbourhood) are optimal, which we refer to as the \emph{Simplicial Vertex Hypothesis}. Because the vertex set of the graph in Ducoffe's hardness construction can be partitioned into four cliques, the result of~\cite{ducoffe2022diameter} can be formulated as follows.

	\begin{theorem}[\cite{ducoffe2022diameter}]\label{thm:simp}
		For any fixed $k\geq 5$, under the Simplicial Vertex Hypothesis, there does not exist a combinatorial algorithm for {\sc Diameter} on $kP_1$-free graphs that runs in $O(m^{3/2 - \epsilon})$ time for any $\epsilon>0$.
	\end{theorem}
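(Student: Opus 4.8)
The plan is to establish this conditional lower bound by reducing the problem of detecting a simplicial vertex to \textsc{Diameter} on $kP_1$-free graphs. It suffices to do so for $k=5$: for every $k\ge 5$ the $5P_1$-free graphs (those of independence number at most $4$) form a subclass of the $kP_1$-free graphs, so a lower bound for the former transfers to the latter. Concretely, from an $n$-vertex graph $G$ I would construct a graph $G'$ whose vertex set is a union of four cliques --- hence $G'$ is $5P_1$-free --- and whose diameter takes one of two consecutive constant values $d,d+1$ according to whether $G$ has a simplicial vertex. I would rely on the standard characterisation that $v$ is simplicial in $G$ iff $N_G(v)$ induces a clique, i.e.\ iff \emph{no} non-edge of $G$ has both endpoints in $N_G(v)$; equivalently, $v$ is non-simplicial iff there are $a,b\in N_G(v)$ with $ab\notin E(G)$.

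For the construction I would use (in essence) the graph underlying Ducoffe's hardness result for AT-free graphs, the only property needed here being that its vertex set splits into four cliques. One places $O(n)$ vertices into four cliques and encodes the adjacency and non-adjacency relation of $G$ on the edges \emph{between} the cliques, so that $G'$ is connected, has $\Theta(n^2)$ edges, and is written down in $O(n^2)$ time from the adjacency matrix of $G$. The design goal is that, for each $v\in V(G)$, two associated ``probe'' vertices be joined by a short path precisely when $N_G(v)$ is not a clique: such a path would traverse a copy of some $a\in N_G(v)$ and then a copy of some $b\in N_G(v)$ with $ab\notin E(G)$, each hop being a relation of $G$ checkable in $O(1)$ time so the instance stays cheap to build. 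For a simplicial $v$ this path is unavailable, so its probes lie one step farther apart.

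Proving correctness is where I expect the real difficulty. The easy direction is that a simplicial vertex $v$ of $G$ forces $\diam(G')\ge d+1$ through its own probe pair. The hard direction is the converse: if $G$ has no simplicial vertex, one must show that \emph{every} pair of vertices of $G'$ is at distance at most $d$. A naive encoding does not achieve this --- for instance, the probes of two distinct vertices $v\ne w$ whose neighbourhoods are ``compatible'' in $G$ (every cross pair an edge, which can happen even when neither $v$ nor $w$ is simplicial) would be spuriously far apart --- so the construction has to add enough further edges among the four cliques, or preprocess $G$ appropriately, to force all such distances down to $d$ without reopening a short path for any simplicial vertex. Checking that this balance is met is the technical heart of the argument.

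The rest is bookkeeping. Pick $G$ with $m_G=\Theta(n^2)$ edges. Then $G'$ has $\Theta(n)$ vertices in four cliques, so it is $5P_1$-free with $m':=|E(G')|=\Theta(n^2)=\Theta(m_G)$, and $(m')^{3/2}=\Theta(n^3)$. If \textsc{Diameter} on $kP_1$-free graphs had a combinatorial algorithm running in $O((m')^{3/2-\epsilon})$ time for some $\epsilon>0$ (WLOG $\epsilon<\tfrac12$), then building $G'$ (in $O(n^2)$ time) and comparing $\diam(G')$ with $d$ would decide whether $G$ has a simplicial vertex in combinatorial time $O(n^{3-2\epsilon})=O((m_G)^{3/2-\epsilon})$, contradicting the Simplicial Vertex Hypothesis. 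Since the case $k=5$ yields all $k\ge 5$, this proves the theorem.
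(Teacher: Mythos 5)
Your proposal takes essentially the same route as the paper: the paper does not reprove Ducoffe's reduction but simply observes that the vertex set of his AT-free hardness construction partitions into four cliques, hence the graphs are $5P_1$-free and therefore $kP_1$-free for every $k\geq 5$, which is exactly the observation your argument hinges on. Your attempt to re-engineer the reduction from simplicial-vertex detection (with its admitted unverified ``technical heart'') is unnecessary for this statement, since both you and the paper ultimately defer the construction and its correctness to \cite{ducoffe2022diameter}.
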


	To complement the hardness results, we show linear-time algorithms for several classes of $H$-free graphs for which $H$ is a small linear forest. Here, $H \subseteq_i H'$ denotes that $H$ is an induced subgraph of $H'$.

	\begin{restatable}{theorem}{DiamAlgGeneralOverview}\label{thm:diamalggeneraloverview}%
		Let $H \subseteq_i P_2 + 2P_1$, $P_3 + P_1$, or $P_4$. Then {\sc Diameter} on $H$-free graphs can be solved in $O(n+m)$ time.
	\end{restatable}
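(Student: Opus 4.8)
The plan is to reduce to the three maximal classes and, in each, to compute the diameter from a constant number of breadth‑first searches plus linear‑time preprocessing. Since $H \subseteq_i H'$ implies that every $H$‑free graph is $H'$‑free, it suffices to give $O(n+m)$‑time algorithms for $(P_2+2P_1)$‑free, $(P_3+P_1)$‑free, and $P_4$‑free graphs. We may assume the input $G$ is connected (tested in $O(n+m)$ time; otherwise the diameter is infinite) and $n \ge 2$. The $P_4$‑free case is immediate: if a connected graph has two vertices at distance $\ge 3$, a shortest path between them induces a $P_4$; hence a connected $P_4$‑free graph has diameter at most $2$, and it has diameter $1$ exactly when it is complete, which is decided in $O(n+m)$ time.

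For the other two classes, the first step is the easy bound on the diameter. If a connected graph contains a shortest path $v_0 v_1 \cdots v_\ell$, the non‑adjacencies forced by shortest‑path‑hood produce a forbidden induced subgraph when $\ell$ is large: for $\ell=4$, $\{v_0,v_1,v_2\}$ induces a $P_3$ and $v_4$ is nonadjacent to all three, giving an induced $P_3+P_1$; for $\ell=5$, the edge $v_2v_3$ together with the nonadjacent pair $v_0,v_5$ (each nonadjacent to $v_2$ and $v_3$) gives an induced $P_2+2P_1$. Hence a connected $(P_3+P_1)$‑free graph has $\diam \le 3$ and a connected $(P_2+2P_1)$‑free graph has $\diam \le 4$; diameters $0$ and $1$ are recognised trivially, so it remains only to separate the ``large'' values.

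For $(P_3+P_1)$‑free graphs I would use a constant‑size dominating set together with a ``type'' argument. If $G$ is not complete, pick a non‑edge $\{x,y\}$; if $x,y$ have no common neighbour they are at distance exactly $3$ and we output $3$; otherwise, for any $w \in N(x)\cap N(y)$, the set $\{x,w,y\}$ induces a $P_3$ and $(P_3+P_1)$‑freeness forces it to dominate $V$. Classify every vertex $v$ by its \emph{type} $\tau(v)$, the set of vertices of $\{x,w,y\}$ that $v$ equals or is adjacent to; there are at most seven nonempty types and $w$ itself has the full type. The key observation is that two vertices sharing a type element lie at distance $\le 2$ (that element is a common vertex at distance $\le 1$ from each), so any two vertices at distance $\ge 3$ have disjoint types. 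Consequently, if $\diam(G)=3$ witnessed by a pair $(u,v)$, then $\tau(u)$ is not the full type, and the chosen representative of the class $\tau(u)$ has the same type, hence is also at distance $\ge 3$ from $v$, i.e.\ at distance exactly $3$; so some representative has eccentricity $3$. Running BFS from one representative of each type class (a constant number of searches) and outputting $\max\bigl(2,\ \text{the largest eccentricity found}\bigr)$ therefore yields the diameter in $O(n+m)$ time.

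The $(P_2+2P_1)$‑free case follows the same template but is more delicate, since $\diam$ can be as large as $4$. For any edge $\{p,q\}$, the set $R$ of vertices nonadjacent to both $p$ and $q$ is a clique (a non‑edge inside $R$ together with $pq$ would induce $P_2+2P_1$); if $R=\emptyset$ then $\{p,q\}$ dominates $V$, which already forces $\diam(G)\le 3$, and otherwise $\{p,q,r\}$ dominates $V$ for any $r\in R$. Classifying vertices by type to this dominating set, the ``shared type element'' argument again shows that distance‑$\ge 3$ pairs have disjoint types, so BFS from one representative of each of the constantly many types detects $\diam(G)\ge 3$ exactly, and if some eccentricity equals $4$ we output $4$. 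The main obstacle, and the technical heart of the proof, is to decide between diameter $3$ and $4$ when no representative reaches eccentricity $4$: because $pq$ is an edge, a distance‑$4$ pair cannot have $p$ in one endpoint's type and $q$ in the other's, which forces one endpoint of any distance‑$4$ pair to have type exactly $\{r\}$ (hence to lie in the clique $R$) and the other to be adjacent to $p$ or $q$ but not $r$. Since $R$ is a clique, all its eccentricities differ by at most $1$, and a structural analysis of $(P_2+2P_1)$‑freeness around a diametral path of length $4$ — whose middle three vertices form a dominating induced $P_3$ with ``clique ends'', because a vertex nonadjacent to all three would be adjacent to both endpoints of the path and so contract its length — is used to show that the dominating set can be chosen so that a constant‑size set of BFS sources still witnesses eccentricity $4$ whenever $\diam(G)=4$. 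This case analysis is unavoidable: separating diameters $2$, $3$, and $4$ is exactly the regime known to be SETH‑hard in general, so the forbidden induced subgraph must be used essentially. In all cases only $O(1)$ breadth‑first searches are performed, giving an $O(n+m)$‑time algorithm.
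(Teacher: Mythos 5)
There is a genuine gap, and it sits at the core of both non-trivial cases: the step asserting that a vertex with the same \emph{type} (adjacency pattern to the constant-size dominating set) as a far vertex is itself far. You correctly prove that two vertices at distance $\ge 3$ have disjoint types, but you then use the reverse-flavoured implication ``the representative of the class $\tau(u)$ has the same type, hence is also at distance $\ge 3$ from $v$.'' Two vertices of the same type agree only on their adjacencies to the three dominating vertices; they can differ arbitrarily elsewhere, so one can be eccentric while the other is not. Concretely, take $V=\{u,x,w,s\}\cup\{v,y,t\}$ with $\{u,x,w,s\}$ a $K_4$, $\{v,y,t\}$ a triangle, and the two cross edges $wy$ and $st$. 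Every vertex's non-neighbourhood is a clique, so the graph is $3P_1$-free and hence $(P_3+P_1)$-free, and $d(u,v)=3$. Your algorithm may pick the non-edge $\{x,y\}$ and common neighbour $w$; then $\tau(u)=\tau(s)=\{x,w\}$ and $\tau(v)=\tau(t)=\{y\}$, but $d(s,\cdot)\le 2$ and $d(t,\cdot)\le 2$ for every vertex. If $s$ and $t$ are the chosen representatives, every BFS reports eccentricity $2$ and the algorithm outputs $2$ instead of $3$. The same flawed interchangeability argument underlies your claim that the type-class BFS ``detects $\diam(G)\ge 3$ exactly'' in the $(P_2+2P_1)$-free case.

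In addition, the part you yourself identify as ``the technical heart'' of the $(P_2+2P_1)$-free case --- separating diameter $3$ from $4$ --- is only described as something ``a structural analysis \ldots is used to show,'' with no actual argument; this is precisely where the paper has to work hardest (its Theorem~\ref{thm:2P1+P2free} needs twin removal, a BFS from a \emph{minimum-degree} root to kill the private-neighbour part of $N(u)$, and a sequence of claims exploiting that the second neighbourhood induces a complete multipartite graph in which each $b\in N(u)$ has at most one non-neighbour per part). The paper also handles $(P_3+P_1)$-free graphs quite differently: via Olariu's paw-free characterization of the complement, the graph is a join of parts each of which is $P_3$-free or $3P_1$-free, which reduces the whole case to the $(P_2+2P_1)$-free algorithm. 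Your reduction to the three maximal classes, the diameter bounds $\diam\le 3$ and $\diam\le 4$, and the $P_4$-free case are all fine; it is the replacement of genuine structural analysis by the type-representative heuristic that does not survive scrutiny.
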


	We achieve Theorem~\ref{thm:diamalggeneraloverview} by careful structural analysis of the graph class and then show that a constant number of Breadth First Searches suffice algorithmically. Note that a running time of $O(n+m)$ clearly beats the naive algorithm of $O(nm)$ time and the matrix multiplication algorithms of $\tilde{O}(n^\omega)$ time, but also rules out any quadratic lower bound in $n$, as the classes of graphs contain abitrarily large families of sparse graphs, e.g.\ stars.
	
	Combining Theorems~\ref{thm:hardnessgeneral},~\ref{thm:simp} and~\ref{thm:diamalggeneraloverview}, the only open cases for the complexity of {\sc Diameter} on $H$-free graphs are $H=4P_1$, $H=P_2+3P_1$, $H = P_3+2P_1$, $H=P_4+2P_1$, and $H = P_4+P_1$. 
	The smallest graph $H$ that is an open case is that of $H = 4P_1$. As a `hardness' result for $4P_1$-free graphs, one could try to take the split graph construction of Roditty and Williams~\cite{RodittyWilliams13} (see Theorem~\ref{thm:RodWilHardness} in Appendix~\ref{appendix}), and add edges to make the graph consist of three cliques (as in~\cite{CorneilDHP01}). Conceptually, this would seem to work: the diameter distinction is still 2 or 3 and translates to a SAT positive or negative answer. However, this approach fails due to the quantity of edges one adds to the graph. The lower bound shows that no $O(n^{2-\epsilon})$ time algorithm may exist for this new instance, which is now a relatively empty lower bound: the graph has a quadratic number of edges, so this lower bound does not even rule out an $O(n+m)$ time algorithm. The density of graphs matters in relation to lower bounds, and seems to provide a barrier to finding a lower bound that rules out a linear-time algorithm.
	
	However, if we adopt the perspective from the other side, a linear-time algorithm for $4P_1$-free graphs would still be surprising. Indeed, such an algorithm that can decide between diameter 2 or 3 on the three-clique instance described earlier implies an algorithm for {\sc Orthogonal Vectors} in time $O(n^2 + d^2)$, where $d$ is the dimension of the vectors and $n$ the size of the vector sets (see a discussion in Appendix~\ref{appendix}). Although lower bounds do not rule out this possibility, such a result would be highly non-trivial, as the best known algorithms for {\sc Orthogonal Vectors} do not achieve this running time for all $d$~\cite{AbboudWY15,ChanW21}. Any linear-time algorithm would even beat the best known matrix-multiplication algorithms of $\tilde{O}(n^{\omega})$ time, even if $\omega = 2$. It thus seems we are at an impasse to find or exclude a linear-time algorithm for computing the diameter of $4P_1$-free graphs.
	
	However, as it turns out, we \emph{can} decide in linear time whether the diameter of a $4P_1$-free graph is exactly 5. Our approach avoids the above barriers by focusing on specific diameter values instead of deciding on the diameter of a graph completely.

	In general, for a graph class $\mathcal{G}$, we call $d_{\max}(\mathcal{G})$ the maximum diameter that any graph in $\mathcal{G}$ can have; formally $d_{\max}(\mathcal{G}) = \sup_{G\in \mathcal{G}}\diam(G)$. We omit $\mathcal{G}$ when it is clear from context. In particular, for $4P_1$-free graphs, $d_{\max}$ is equal to 5. We define the \dmaxproblem{} problem as deciding for a graph $G\in \mathcal{G}$ whether it holds that $\diam(G) = d_{\max}(\mathcal{G})$. The research question we investigate is:
	
	\medskip\noindent
	\emph{For which $H$-free graph classes $\mathcal{G}$ can we solve \dmaxproblem{} in linear time?}
	
	\medskip\noindent
	For some classes $\mathcal{G}$, it is easy to see $d_{\max}(\mathcal{G})$ is bounded. For instance, the class of cliques has $d_{\max} = 1$. Any graph class that contains paths of arbitrary length has $d_{\max} = \infty$. For deciding whether the diameter of a graph is equal to $d_{\max}$, only classes with bounded $d_{\max}$ value are interesting to consider. It turns out that for classes of connected $H$-free graphs, $d_{\max}$ is bounded exactly when $H$ is a linear forest (see Theorem~\ref{thm:HLongestPath} in the preliminaries).

	Our contributions with respect to the \dmaxproblem{} problem are twofold. 
	Firstly, we find several examples of $H$-free classes $\mathcal{G}$ where $H$ is a linear forest of more than one path where we solve \dmaxproblem{} in linear time. Note that $d_{\max}$ can differ vastly for classes where $H$ is a linear forest, depending on $H$.

	\begin{restatable}{theorem}{AlgOverview}\label{thm:AlgOverview}%
		Let $H=2P_2+P_1$ or $H\subseteq_i P_2 + 3P_1$, $P_3+2P_1$, or $P_4 + P_1$, and let $\mathcal{G}$ be the class of $H$-free graphs. Then \dmaxproblem{} can be solved in $O(n+m)$ time.
	\end{restatable}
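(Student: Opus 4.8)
The plan is to treat the maximal forbidden forests one at a time. By Theorem~\ref{thm:diamalggeneraloverview}, whenever $H\subseteq_i P_2+2P_1$, $P_3+P_1$, or $P_4$ we can already compute $\diam(G)$ in $O(n+m)$ time, so it suffices to handle the five remaining classes, namely $H\in\{4P_1,\ 2P_2+P_1,\ P_2+3P_1,\ P_3+2P_1,\ P_4+P_1\}$. For each of these, the first step is to determine $d_{\max}$ exactly. The upper bound is an extremal argument on a shortest path $v_0\cdots v_d$: since $v_i\not\sim v_j$ whenever $|i-j|\ge 2$, a long enough shortest path contains every small linear forest as an induced subgraph --- for instance $v_0,v_2,v_4,v_6$ induce $4P_1$ once $d\ge 6$, $v_0,v_1,v_3,v_5,v_7$ induce $P_2+3P_1$ once $d\ge 7$, and $v_0,v_1,v_2,v_3,v_5$ induce $P_4+P_1$ once $d\ge 5$ --- which forces $d$ below an explicit threshold. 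The matching lower bound is realized by the path $P_{d_{\max}+1}$, which one checks is $H$-free. This gives $d_{\max}=5$ for $H\in\{4P_1,\ P_3+2P_1,\ 2P_2+P_1\}$, $d_{\max}=6$ for $H=P_2+3P_1$, and $d_{\max}=4$ for $H=P_4+P_1$; boundedness of $d_{\max}$ for linear forests is Theorem~\ref{thm:HLongestPath}.

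The core of the argument is, for each such $H$, a rigidity lemma: if a connected $H$-free graph $G$ has $\diam(G)=d_{\max}$ and $P=v_0\cdots v_{d_{\max}}$ is any diametral path, then $G$ has essentially the shape of a ``path of bags'' along $P$. The mechanism is that $d_{\max}+1$ is large relative to $|V(H)|$, so that $P$, together with any off-path vertex, is on the verge of inducing $H$. Writing $H$ as the disjoint union of an edge-part and $s\ge1$ isolated vertices, an induced copy of the edge-part inside $P$ forces every off-path vertex (and every path vertex not used by that copy) to be adjacent to at least one of any $s$ pairwise-non-adjacent ``spread-out'' vertices of $P$ that it could otherwise complete with to an induced $sP_1$; since no shortcut $d(v_i,v_j)<|i-j|$ exists, such a vertex can then attach only to a short contiguous stretch of $P$. (The case $H=4P_1$ is the cleanest, since then $\alpha(G)\le3$ and the alternating vertices of $P$ form a \emph{maximum} independent set, so every vertex is dominated both by $\{v_0,v_2,v_4\}$ and by $\{v_1,v_3,v_5\}$.) Grouping $V$ by BFS distance from $v_0$ into bags $W_0,\dots,W_{d_{\max}}$, one deduces that only consecutive bags are adjacent, that the bags have bounded internal diameter, and that $H$-freeness forces strong completeness conditions between consecutive bags. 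Two consequences are what the algorithm needs: every vertex of $W_0$ is at distance exactly $d_{\max}$ from every vertex of $W_{d_{\max}}$, and for every choice of start vertex a BFS-farthest vertex lies in one of the two end bags $W_0,W_{d_{\max}}$.

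Given the rigidity lemma, the algorithm is a double BFS sweep (possibly followed by $O(1)$ further sweeps from vertices that the layering pins down up to a constant number of choices). Run a BFS from an arbitrary vertex $r$, let $u$ be a vertex farthest from $r$, run a BFS from $u$, and let $\ell=\mathrm{ecc}(u)$. If $\ell=d_{\max}$ then $\diam(G)\ge d_{\max}$, hence $\diam(G)=d_{\max}$, and we output \textsc{yes}. If $\ell<d_{\max}$ we output \textsc{no}; this is correct because if $\diam(G)$ were equal to $d_{\max}$ then by the rigidity lemma $u$ would lie in an end bag, hence $\mathrm{ecc}(u)=d_{\max}$, contradicting $\ell<d_{\max}$. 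Since every BFS costs $O(n+m)$ and only $O(1)$ of them are run (as $d_{\max}$ is a constant), the total time is $O(n+m)$.

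I expect the rigidity lemma to be the main obstacle, and the hardest case to be $H=2P_2+P_1$: that class is not contained in any $kP_1$-free class, so there is no bounded-independence shortcut, the bags need not be cliques, and the enumeration of how an off-path vertex may attach to the diametral path without completing an induced $2P_2+P_1$ has more cases than for the $4P_1$-type classes --- one ends up forcing most off-path vertices to be universal over a local neighbourhood, but making this precise takes care. A secondary technical point is to make the \textsc{no}-branch genuinely linear for every class: one must check, case by case, that a bounded number of well-chosen BFS sweeps certifies $\diam(G)<d_{\max}$ whenever that holds, which is exactly what the ``clean layering'' part of the rigidity lemma delivers, but assembling the case analysis cleanly is where most of the work goes.
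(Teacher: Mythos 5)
Your high-level plan (reduce to the five maximal cases, compute $d_{\max}$ via Theorem~\ref{thm:HLongestPath}, then exploit that a diametral path of length $d_{\max}$ is ``on the verge'' of inducing $H$) matches the spirit of the paper, but your proposed algorithm is genuinely different from the paper's and, as stated, it is incorrect. The paper does not prove a uniform rigidity lemma followed by a double BFS sweep; it gives five bespoke structural analyses, and in each one the second (and sometimes third) BFS is launched from a vertex chosen by a non-trivial criterion --- a minimum-degree vertex inside a specific clique $D$, a vertex $c$ with $N(c)$ equal to a specific set $B_1$, etc. This is not an aesthetic choice: an arbitrary farthest vertex returned by the first sweep need not be a diametral endpoint.

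The concrete gap is the consequence you draw from the rigidity lemma, namely that for every start vertex a BFS-farthest vertex lies in an end bag (and hence has eccentricity $d_{\max}$). This is false already for $H=P_4+P_1$, $d_{\max}=4$. Take $V=\{u,x,b_1,b_2,c_1,c_2,c_3\}$ with $N(u)=\{x,b_1,b_2\}$, $N(x)=\{u,b_1,b_2\}$, $N(c_1)=\{b_1\}$, $N(c_2)=\{b_2\}$, $N(c_3)=\{b_1,b_2\}$, and $b_1\not\sim b_2$. One checks this graph is $(P_4+P_1)$-free (for each candidate isolated vertex, the graph induced on its non-neighbours contains no induced $P_4$), and $d(c_1,c_2)=4$, so the answer to \dmaxproblem{} is \textsc{yes}. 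Yet a BFS from $r=u$ puts $c_1,c_2,c_3$ all at maximum distance $2$, and $\mathrm{ecc}(c_3)=2$: if the sweep returns $c_3$ as ``the'' farthest vertex, your second BFS reports eccentricity $2<4$ and you answer \textsc{no}. So the 2-sweep is not sound, and the lemma cannot be repaired by saying ``some farthest vertex works,'' since BFS returns an arbitrary one and there may be many. (The same phenomenon is why the paper's $(P_3+2P_1)$, $(P_2+3P_1)$ and $(2P_2+P_1)$ algorithms must first isolate a small candidate set --- e.g.\ the vertices $d$ with $N(d)\subseteq D$, or the vertices $c$ with $N(c)=B_1$ --- prove that the relevant candidates are twins or comparable by neighbourhood inclusion, and only then run a BFS from a minimum-degree representative.) Your instinct that $H=2P_2+P_1$ is the hardest case is reasonable, but the failure already occurs for $P_4+P_1$, and fixing it requires replacing the 2-sweep by the kind of targeted vertex selection the paper carries out case by case.
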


	Note that in particular, Theorem~\ref{thm:AlgOverview} shows that we can decide whether $\diam(G) = d_{\max}(\mathcal{G})$ for all previously stated open cases for \textsc{Diameter} computation on $H$-free graphs, except for the case of $H=P_4+2P_1$.

	Secondly, we extend known hardness constructions to hold for the \dmaxproblem{} problem for certain $H$-free graph classes $\mathcal{G}$. Note that one needs different hardness proofs for different $H, H'$, even if $H \subseteq_i H'$, because $d_{\max}$ can differ for both classes.
	\begin{restatable}{theorem}{OddPtHard}\label{thm:Ptodd}%
		Let $H=2P_2$ or $H=P_t$ for some odd $t\geq 5$, and let $\mathcal{G}$ be the class of $H$-free graphs. Under SETH, it is not possible to solve \dmaxproblem{} in time $O(n^{2-\epsilon})$ for any $\epsilon > 0$.
	\end{restatable}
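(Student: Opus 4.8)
The plan is to start from the split-graph hardness construction of Roditty and Williams (Theorem~\ref{thm:RodWilHardness}) and, for the longer forbidden paths, to elongate it with pendant paths so as to shift the relevant diameter values up by the correct (even) amount. We first record the two values of $d_{\max}$ involved. If $G$ is connected and $P_t$-free, a shortest path between two vertices at distance $\diam(G)$ is an induced path on $\diam(G)+1\le t-1$ vertices, so $\diam(G)\le t-2$, and $P_{t-1}$ shows $d_{\max}=t-2$; similarly, a shortest path on five vertices $v_0v_1v_2v_3v_4$ induces a $2P_2$ on $\{v_0,v_1,v_3,v_4\}$, so a connected $2P_2$-free graph has diameter at most $3$, with $P_4$ showing $d_{\max}=3$. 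Hence $d_{\max}$ is odd exactly in the regime of the theorem, which is what makes the parity of the elongation below come out right.

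For $H=2P_2$ and, simultaneously, for $H=P_5$, the Roditty--Williams graph $G_0$ already suffices. Take it on vertex set $A\cup B\cup C\cup\{u,w\}$ with $C\cup\{u,w\}$ a clique, $A\cup B$ independent, $u$ complete to $A\cup C$, $w$ complete to $B\cup C$, $uw\in E$, and $a\in A$ (resp.\ $b\in B$) adjacent to $c\in C$ exactly when the corresponding vector of the underlying \textsc{Orthogonal Vectors} instance has a $1$ in coordinate $c$; here $A,B$ are the two vector families and $C$ the coordinate set. After discarding the trivial all-zero-vector case, $G_0$ is connected, every $A$--$B$ distance is $2$ (shared $1$-coordinate) or $3$ (orthogonal pair), distances within $A$ or within $B$ are $2$, and all remaining distances are at most $2$; thus $\diam(G_0)\in\{2,3\}$ and $\diam(G_0)=3$ iff the instance has an orthogonal pair. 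Being a split graph, $G_0$ is $2P_2$-free and hence $P_5$-free (every $P_5$ contains an induced $2P_2$), and $d_{\max}=3$ for both classes, so \dmaxproblem{} on $G_0$ is exactly the ``diameter $2$ versus $3$'' question and inherits the lower bound of Theorem~\ref{thm:RodWilHardness}.

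For $H=P_t$ with odd $t\ge 7$, set $\ell=(t-5)/2\ge1$ and form $G$ from $G_0$ by attaching to each vertex of $A\cup B$ a private pendant path on $\ell$ new vertices, with far endpoints $\bar a,\bar b$. This leaves all distances within $G_0$ unchanged, and $\mathrm{dist}_G(\bar a,\bar b)=2\ell+\mathrm{dist}_{G_0}(a,b)$, which equals $t-2$ when $a,b$ are orthogonal and $t-3$ otherwise; a short check of the remaining pair types (two pendant ends on the same side, $\bar a,\bar a'$ or $\bar b,\bar b'$; a pendant end and a core vertex; two core vertices) shows every other pair of $G$ is at distance at most $t-3$. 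Hence $\diam(G)=t-2=d_{\max}$ if the instance has an orthogonal pair, and $\diam(G)=t-3$ otherwise. That $G$ is $P_t$-free (so $G\in\mathcal G$) follows because an induced path of $G$ meets $G_0$ in an induced subpath $Q$ on at most $4$ vertices (the split graph $G_0$ is $P_5$-free) and, a path having only two ends and a pendant path being a dead end, can be prolonged outside $G_0$ only into pendant paths hung at the two ends of $Q$, so it has at most $2\ell+4=t-1<t$ vertices. Since $|V(G)|=O(|V(G_0)|)$ for fixed $t$, an $O(n^{2-\epsilon})$ algorithm for \dmaxproblem{} on $P_t$-free graphs would refute Theorem~\ref{thm:RodWilHardness}, hence SETH.

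I expect the main obstacle to be the bookkeeping around the elongated graph $G$: one has to verify both that adding the pendant paths keeps $G$ inside $\mathcal G$ (the induced-path count above) and that it introduces neither a shortcut between far vertices nor an unintended pair at distance $d_{\max}$ in the NO-instance --- the routine but necessary verification of all the distance bounds claimed --- together with some care about degenerate inputs (empty vector families, all-zero vectors) and about the precise form of the Roditty--Williams gadget, so that it is indeed a split graph realising the stated $A$--$B$ distances.
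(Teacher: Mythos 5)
Your proposal is correct and follows essentially the same route as the paper: both start from the Roditty--Williams split-graph construction (which settles $H=2P_2$ and $H=P_5$ directly, since $d_{\max}=3$ there), and for larger odd $t$ both attach pendant paths of $(t-5)/2$ new vertices to every vertex of the independent set, shifting the diameter gap from $\{2,3\}$ to $\{t-3,t-2\}$ while preserving $P_t$-freeness. Your verification that an induced path can meet the split graph in at most $4$ vertices and be prolonged only at its two ends is in fact slightly more explicit than the paper's one-line justification of $P_{5+2c}$-freeness.
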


	Theorems~\ref{thm:AlgOverview} and~\ref{thm:Ptodd} together cover almost all cases where $d_{\max} \leq 4$: $H=2P_2$ is hard by Theorem~\ref{thm:Ptodd}, $H = P_3+P_2$ is open, and all other cases with $d_{\max}\leq 4$ are linear-time solvable by Theorem~\ref{thm:AlgOverview}. Theorem~\ref{thm:AlgOverview} also gives linear-time algorithms for some cases where $d_{\max} > 4$; $H = P_3+2P_1$ and $H = 2P_2+P_1$ have $d_{\max} = 5$, and $H = P_2+3P_1$ has $d_{\max} = 6$. It appears that, algorithmically, the presence of a $P_1$ in $H$ helps out in structural analysis, which may explain the inability to attain a result for $H = P_3+P_2$. We further discuss particular cases and possible generalizations of our theorems in the conclusion.

	Our algorithmic results are mostly attained through careful analysis of the structure of the graph with respect to the forbidden pattern. This limits the ways in which a shortest path that realizes the diameter can appear in the graph. However, even for small patterns $H$, such analysis quickly becomes highly technical.

	We prove our hardness results in Section~\ref{sec:hardness} and give algorithmic results in Section~\ref{sec:algorithms}, and prove Theorems~\ref{thm:diamalggeneraloverview} and~\ref{thm:AlgOverview} in Section~\ref{sec:overviewproofs}. We discuss our results, conjectured generalizations, and open questions in the conclusion, see Section~\ref{sec:conclusion}.

	\section{Preliminaries}\label{sec:Prelims}
	
	A graph $G = (V,E)$ has a vertex set $V$ and edge set $E$, where we denote $|V| = n$ and $|E| = m$. Graphs are connected, undirected, and unweighted. For any $v\in V$, denote $N(v)$ as the neighbourhood of $v$, and $N[v] = N(v)\cup \{v\}$ as the neighbourhood including $v$. For a set of vertices $S\subseteq V$, let $G[S]$ denote the induced subgraph on the vertices of $S$, that is, the vertices of $S$ and all edges present in $G$ between those vertices. A vertex $v\in V$ is \emph{complete} to a set $S\subseteq V$ when $S \subseteq N(v)$, and \emph{anti-complete} to a set $S$ when $S\cap N(v) = \emptyset$. A set is $A\subseteq V$ is complete to a set $S\subseteq V$ when every vertex in $A$ is complete to $S$, and $A$ is anti-complete to $S$ when every vertex in $A$ is anti-complete to $S$. For vertices $v_1,\ldots,v_k\in V$, we use $\langle v_1,\ldots,v_k \rangle$ to denote an induced path from $v_1$ to $v_k$. $P_k$ denotes a path on $k$ vertices, for $k\geq 1$.
	
	A graph is $H$-free when it does not contain $H$ as an induced subgraph. For graphs $G,H$ let $G+H$ denote their disjoint union. A linear forest is a disjoint union of one or more paths. We also use a row of integers ${(a_i)}_{i=1}^k$ for some integer $k\geq 1$, to consider $H = \sum_{i=1}^{k}P_{a_i}$-free graphs, which models every possible linear forest for $H$. $H\subseteq_i G$ denotes that $H$ is an induced subgraph of $G$.
	
	Denote for two vertices $u,v$ the distance between $u$ and $v$ as $d(u,v)$. The diameter of a graph $G$ is the length of the longest shortest path, that is, $\diam(G) = \max_{u,v\in V} d(u,v)$. A pair of vertices $u,v\in V$ is called a diametral pair when $d(u,v) = \diam(G)$. A shortest path between a diametral pair $u,v$ of length $d(u,v)$ is called a diametral path. The problem of {\sc Diameter} is to decide on the value of $\diam(G)$ for a given graph $G$. In the literature, the variant is also considered where we have to report the diametral path. For our algorithms, this makes no difference, as we can always execute a Breadth First Search (BFS) to find diametral paths if we found a diametral pair. 
	
	We assume our graphs are connected, and so the diameter is never $\infty$. This is not a limiting assumption, as any BFS in time $O(n+m)$ can verify this, and our algorithms run in time $O(n+m)$. Also, for any graph, we can decide in time $O(n+m)$ whether it is a clique, and so whether its diameter is equal to 1. 
	
	For a graph class $\mathcal{G}$ (of connected graphs), denote $d_{\max}(\mathcal{G})$ as the supremum of the diameter over all graphs $G \in \mathcal{G}$, that is,
	\[d_{\max}(\mathcal{G}) = \sup_{G\in \mathcal{G}}\diam(G).\]
	We omit $\mathcal{G}$ when it is clear from context. For instance, for the class of cliques, $d_{\max} = 1$. For many graph classes, $d_{\max}$ is unbounded. However, we can consider $H$-free graphs where $H$ is some linear forest, as such cases exclude paths of a certain length as induced subgraphs, bounding the diameter. Let us prove this formally.
	\begin{theorem}\label{thm:HLongestPath}%
		Given a class $\mathcal{G}$ of connected $H$-free graphs, the following statements hold.
		\begin{enumerate}
			\item $d_{\max}(\mathcal{G})$ is bounded if and only if $H$ is a linear forest, and
			\item if $H$ is a linear forest, say $H = \sum_{i=1}^{k}P_{a_i}$ for some row of integers ${(a_i)}_{i=1}^k$ for some integer $k\geq 1$, then $d_{\max}(\mathcal{G}) = k - 3 + \sum_{i=1}^{k}a_i$.
		\end{enumerate}
	\end{theorem}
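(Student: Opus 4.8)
The plan is to treat the two statements separately and reduce both to one elementary counting lemma about when a linear forest occurs as an induced subgraph of a path. For the ``only if'' part of statement~1 I would argue the contrapositive: if $H$ is not a linear forest, then $H$ contains a cycle or a vertex of degree at least~$3$, so $H$ cannot be an induced subgraph of any path $P_n$, because every induced subgraph of a path is again a linear forest. Hence every path $P_n$ is $H$-free, i.e.\ lies in $\mathcal{G}$, and since $\diam(P_n)=n-1$ grows without bound, $d_{\max}(\mathcal{G})$ is unbounded.

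The heart of the argument is the claim that for any linear forest $F=\sum_{i=1}^{k}P_{b_i}$, the least $n$ with $F\subseteq_i P_n$ equals $\bigl(\sum_{i=1}^{k}b_i\bigr)+k-1$. The upper bound is a direct construction: lay the blocks $P_{b_1},\dots,P_{b_k}$ consecutively along a path, inserting exactly one extra vertex between consecutive blocks; deleting those $k-1$ separator vertices leaves $F$ as an induced subgraph. For the lower bound, in any embedding of $F$ into a host path the $k$ components occupy $k$ pairwise-disjoint sub-intervals, and any two consecutive sub-intervals must be separated by at least one unused vertex of the host path (otherwise those two components would be joined by an edge in the induced subgraph), which forces at least $\sum b_i+(k-1)$ vertices in total.

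Granting the lemma, statement~2 (and with it the ``if'' part of statement~1) follows in two steps; write $d=k-3+\sum_{i=1}^{k}a_i$. For the lower bound $d_{\max}(\mathcal{G})\ge d$, the lemma says $H=\sum_{i=1}^k P_{a_i}$ is \emph{not} an induced subgraph of $P_{\sum a_i+k-2}$, so that path is a connected $H$-free graph of diameter $\sum a_i+k-3=d$. For the upper bound $d_{\max}(\mathcal{G})\le d$, suppose a connected $H$-free graph $G$ had $\diam(G)\ge d+1$; take a diametral pair and a shortest path between them. Since a shortest path in an unweighted graph has no chords, this path induces a copy of $P_{\diam(G)+1}$ in $G$ on $\diam(G)+1\ge d+2=\sum a_i+k-1$ vertices, so by the lemma it contains $H$ as an induced subgraph, and hence so does $G$ --- contradicting $H$-freeness. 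In particular $d_{\max}(\mathcal{G})$ is then bounded, which completes statement~1 as well.

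I do not expect a genuine obstacle here; the only delicate points are the exact off-by-one bookkeeping of the $k-1$ mandatory separator vertices in the embedding lemma, and the standard fact that a shortest path in an unweighted graph is an induced path, which is precisely what allows a long shortest path in $G$ to be promoted first to a long induced path and then to an induced copy of $H$.
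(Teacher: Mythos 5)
Your proof is correct and follows the same overall route as the paper's: rule out non-linear-forests $H$ via arbitrarily long paths, and bound the diameter via the longest induced path that an $H$-free graph can contain. You go somewhat further than the paper does, in two useful ways. First, you actually prove the embedding lemma (that $\sum_{i=1}^{k}P_{a_i}\subseteq_i P_n$ exactly when $n\ge \sum_i a_i + k-1$), whereas the paper merely asserts that a path on $k-1+\sum_i a_i$ vertices "would contain an induced copy of $H$". Second, and more substantively, you establish the matching lower bound $d_{\max}(\mathcal{G})\ge k-3+\sum_i a_i$ by exhibiting the connected $H$-free path $P_{\sum_i a_i + k-2}$; the paper's proof only argues the upper bound and never exhibits an extremal example, even though statement~2 claims equality. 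So your write-up is not just correct but closes a small gap in the published argument. The one delicate step you rely on --- that a shortest path between a diametral pair is an induced path, so a large diameter forces a long induced path --- is standard and you invoke it correctly.
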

	\begin{proof}
		Assume for sake of contradiction that $\mathcal{G}$ is a class of connected $H$-free graphs where $H$ is not a linear forest. We may assume $H$ contains either a cycle or a vertex of degree at least 3. We can conclude that the graphs $G = P_t$ for every integer $t \geq 1$ are contained in $\mathcal{G}$. But then no bound on the diameter suffices, as we can always pick a $t\geq 1$ larger than the bound to find a graph with larger diameter. 

		Now assume $\mathcal{G}$ is a class of connected $H$-free graphs with $H$ a linear forest, say $H = \sum_{i=1}^{k}P_{a_i}$ for some row of integers ${(a_i)}_{i=1}^k$, for some integer $k\geq 1$. Note that any $G\in \mathcal{G}$ cannot contain an induced path on $k - 1 + \sum_{i=1}^{k}a_i$ vertices, as such a path would contain an induced copy of $H$. We see that the largest induced path in any graph $G\in \mathcal{G}$ has at most $k - 2 + \sum_{i=1}^{k}a_i$ vertices, and hence the length of any shortest path is at most $k - 3 + \sum_{i=1}^{k}a_i$. This provides the bound on the $\diam(G)$ for any graph $G\in \mathcal{G}$.
	\end{proof}
	
	Two vertices $u,v\in V$ are \emph{twins} when $N[u] = N[v]$, also called \emph{true twins}. Two vertices $u,v\in V$ are \emph{false twins} when $N(u) = N(v)$. We shall always be explicit when we talk about false twins; in general, twins will refer to true twins. A \emph{twin class} is a set of vertices all of which are pairwise twins. With respect to diametral paths, twins are not interesting, as they behave exactly the same in terms of distances. In some procedures, we would like to remove twins from the graph, such that only one each of the vertices in a twin class remains. Habib~et~al.~\cite{HabibPV98} showed that twins can be identified in time $O(n+m)$, using partition refinement techniques (see also~\cite{HermelinMLW19}).
	\begin{theorem}[\cite{HabibPV98}]\label{thm:twinremoval}
		Given a graph $G$, we can detect true and false twins in $G$ in $O(n+m)$ time.
	\end{theorem}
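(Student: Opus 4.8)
The plan is to invoke the classical partition-refinement machinery. Recall that a partition-refinement data structure maintains a partition $\mathcal{P}$ of a ground set --- here the vertex set $V$ --- and supports an operation $\mathrm{Refine}(S)$ that replaces every part $P \in \mathcal{P}$ by the nonempty sets among $P \cap S$ and $P \setminus S$. The crucial point is that a single call runs in time $O(|S|)$, independent of $n$ and of the current number of parts: parts disjoint from $S$ are never touched, and the split of a part $P$ meeting $S$ is carried out by walking only over $P \cap S$. Realising this bound needs the usual implementation with doubly linked lists of parts and of elements, back-pointers from elements to parts, and auxiliary markers that are cleared between calls; this is exactly what is developed in~\cite{HabibPV98}, which is why I would simply cite it for that ingredient.

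The algorithm itself is the same for both notions up to a one-vertex change. For false twins, start from the trivial partition $\mathcal{P} = \{V\}$ and, for each $v \in V$, call $\mathrm{Refine}(N(v))$; for true twins, call $\mathrm{Refine}(N[v])$ instead, where $N[v] = N(v) \cup \{v\}$. Each neighbourhood is read off the adjacency list in time $O(|N(v)|)$. After all $n$ calls, two vertices $u$ and $u'$ lie in the same part of $\mathcal{P}$ if and only if they were never separated by any refinement, i.e.\ (in the false-twin case) if and only if $u \in N(v) \Leftrightarrow u' \in N(v)$ for every $v \in V$, which --- using $u \in N(v) \Leftrightarrow v \in N(u)$ --- is exactly $N(u) = N(u')$; the true-twin case is identical with $N[\cdot]$ in place of $N(\cdot)$. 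Hence the final parts are precisely the (false, respectively true) twin classes, and we can report all twins by reading off $\mathcal{P}$. It is worth checking that a $\mathrm{Refine}(N[v])$ step never separates a vertex from its true twins: if $N[u] = N[u']$ then $u \in N[v] \Leftrightarrow v \in N[u] = N[u'] \Leftrightarrow u' \in N[v]$ for every $v$, so the procedure is self-consistent. The running time is $O(n)$ for initialisation plus $\sum_{v \in V} O(|N(v)|) = O(m)$, respectively $\sum_{v \in V} O(|N[v]|) = O(n+2m)$, for the refinements, which is $O(n+m)$ in total.

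The substantive part of the argument, and the reason to lean on~\cite{HabibPV98} rather than redo it in full, is precisely the $O(|S|)$-time realisation of $\mathrm{Refine}$: one must maintain for each element a pointer to its current part together with its position in that part's list, and for each part a pointer to the fresh sibling part created during the current call, and then clear these auxiliary pointers in $O(|S|)$ time so that later calls stay within budget, while also discarding parts that become empty. Given that data structure, everything else --- building adjacency lists, assembling the sets $N[v]$, and extracting the twin classes from the final partition --- is routine and linear, so the claimed $O(n+m)$ bound follows.
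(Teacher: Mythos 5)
Your argument is correct and is exactly the standard partition-refinement proof from the cited reference: the paper itself imports this statement from~\cite{HabibPV98} without reproving it, and your sketch (trivial partition, one $\mathrm{Refine}(N(v))$ or $\mathrm{Refine}(N[v])$ per vertex, $O(|S|)$ per call via the linked-list implementation, total $O(n+m)$) matches what that reference does. No gaps.
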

	
	Removing twins keeps most distance properties in a graph (see e.g.\ Coudert et al.~\cite{CoudertDP19}). We give the following theorem for completeness.

	\begin{proposition}\label{prop:twinremovaldists}
		Given a graph $G = (V,E)$, in $O(n+m)$ time we can detect true twins and remove all-but-one vertex from each twin class resulting in a graph $G' = (V',E')$ with $V'\subseteq V$, $E'\subseteq E$. The following hold:
		\begin{enumerate}
			\item[(i)] the distance between two vertices $u,v$ in $G'$ is equal to the distance between $u,v$ in $G$,
			\item[(ii)] the diameter of $G'$ is equal to the diameter of $G$, unless $G$ is a clique,
			\item[(iii)] if $G$ is $H$-free for some graph $H$ then $G'$ is also $H$-free. 
		\end{enumerate}
	\end{proposition}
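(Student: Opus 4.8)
The plan is to obtain the detection step from Theorem~\ref{thm:twinremoval} and then read off the three properties essentially from the definition of true twins. First I would observe that $N[u]=N[v]$ is an equivalence relation on $V$, so the true twins partition $V$ into twin classes, and each class is a clique (if $u\neq v$ and $N[u]=N[v]$ then $v\in N[u]$). By Theorem~\ref{thm:twinremoval} this partition is computed in $O(n+m)$ time; I then pick one representative per class, let $V'$ be the set of representatives, and set $G'=G[V']$, all within $O(n+m)$ time. Property (iii) is then immediate, since $G'$ is an induced subgraph of $G$ and induced subgraphs of an $H$-free graph are $H$-free.

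For (i), one inequality is free because $G'$ is an induced subgraph of $G$: $d_{G'}(u,v)\ge d_G(u,v)$ for all $u,v\in V'$. For the converse I would take a shortest $u$--$v$ path $w_0 w_1\cdots w_k$ in $G$ (recall $u,v\in V'$, so $w_0=u$ and $w_k=v$) and replace each $w_i$ by the representative $w_i'$ of its twin class, with $w_i'=w_i$ whenever $w_i\in V'$. The point is that consecutive replacements stay adjacent or coincide: from $w_iw_{i+1}\in E$ and $N[w_i]=N[w_i']$ we get $w_{i+1}\in N[w_i']$, and then $N[w_{i+1}]=N[w_{i+1}']$ gives $w_i'\in N[w_{i+1}']$. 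Deleting any consecutive repetitions from $w_0'\cdots w_k'$ thus yields a $u$--$v$ walk in $G'$ of length at most $k$, so $d_{G'}(u,v)\le k=d_G(u,v)$.

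For (ii), combining with (i) gives $\diam(G')=\max_{u,v\in V'}d_G(u,v)\le\diam(G)$. For the reverse direction, assume $G$ is not a clique, so $\diam(G)\ge 2$, and fix a diametral pair $u,v$. I would first establish the auxiliary fact that true twins have equal distance to every other vertex: if $N[u]=N[u']$ then $N(u)\setminus\{u'\}=N(u')\setminus\{u\}$, so the first edge of any shortest $u$--$w$ path with $w\notin\{u,u'\}$ can be rerouted through $u'$, giving $d_G(u',w)\le d_G(u,w)$ and, by symmetry, equality. Since $d_G(u,v)\ge 2$, $u$ and $v$ are non-adjacent, hence not twins, so their representatives $u',v'\in V'$ are distinct; applying the auxiliary fact twice gives $d_G(u',v')=d_G(u,v)$, and then (i) gives $d_{G'}(u',v')=d_G(u,v)=\diam(G)$. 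Hence $\diam(G')\ge\diam(G)$, so equality holds. The exception for cliques is genuinely needed: if $G$ is a clique on at least two vertices, then all vertices are pairwise twins, $V'$ is a single vertex, and $\diam(G')=0\neq 1=\diam(G)$.

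Most of this is definitional bookkeeping; the steps that require the most care are verifying that the projection of a shortest path onto class representatives is a legitimate walk in $G'$ of no greater length, and the auxiliary distance-invariance fact for twins that drives (ii). Neither is deep, but both need the closed-neighbourhood equality to be pushed through carefully rather than assumed.
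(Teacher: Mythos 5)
Your proof is correct and follows essentially the same route as the paper: detect twin classes via Theorem~\ref{thm:twinremoval}, keep one representative per class, and project shortest paths onto class representatives to show distances are preserved. You are in fact somewhat more explicit than the paper, which leaves property (ii) --- that a diametral pair survives because twins are equidistant from all other vertices, with the clique case as the sole exception --- largely implicit.
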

	\begin{proof}
		Detect classes of true twins in linear time using Theorem~\ref{thm:twinremoval}.
		We can remove all-but-one vertex of each twin class, by first marking vertices for removal, and then in linear time enumerating the graph to create a new graph without the marked vertices and their edges. This is the graph $G'$. Let $\phi : V \rightarrow V'$ map vertices of $G$ to its twin in $G'$ that was not removed; note that this is the identity function for vertices with no twins in $G$.

		If $G'$ has only one vertex, then there are no shortest paths. This can be the case when $G$ is a clique.

		Let $u',v'\in V'$. Let $u,v\in V$ be the same vertices in $G$, and denote the shortest path between $u$ and $v$ with $\langle u = w_0, w_1, \ldots, w_k = v \rangle$. Now $\langle u' = \phi(w_0), \phi(w_1), \ldots, \phi(w_k) = v'\rangle$ must be a shortest path from $u'$ to $v'$ in $G'$, as twins have the exact same neighbourhood, and distances cannot be shortened by vertex removal.

		Assume $G$ is $H$-free for some graph $H$, and assume for sake of contradiction that $G'$ is not $H$-free. Then, there is some set of vertices $A'\subseteq V'$ such that $G'[A']$ forms an induced copy of $H$. Take for each $v' \in A'$ some vertex in $v\in V$ such that $\phi(v) = v'$, and call the set of these vertices $A$. Because $v'$ and $v$ are twins, it must be that $G[A]$ is also an induced copy of $H$ in $G$, a contradiction.
	\end{proof}

	This procedure does not only work for true twins. Assume we have a set $B \subset V(G)$ and a set $V(G)\setminus B$ where we would like to partition $B$ into classes of false twins with respect to the edges towards $V(G)\setminus B$ (so irrespective of edges in $G[B]$). We note that we can do this in linear time, by adjusting the algorithm of Theorem~\ref{thm:twinremoval} by having the base set be $B$ and refining (partitioning the base set) only for every $x\in V(G)\setminus B$ on $N(x) \cap B$. We get the following corollary, which was proved by Ducoffe~\cite{ducoffe2022optimal} (formulated in terms of modules):
	\begin{corollary}\label{cor:bipartitetwins}
		Given a graph $G$ and a vertex set $B\subseteq V(G)$, we can partition $B$ into classes of false twins with respect to their neighbourhoods towards $V(G)\setminus B$ in time $O(n+m)$.
	\end{corollary}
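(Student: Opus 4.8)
The plan is to adapt the partition-refinement procedure underlying Theorem~\ref{thm:twinremoval}. I would maintain a partition $\mathcal{P}$ of $B$ that, at all times, is the coarsest partition consistent with the constraints processed so far: initialise $\mathcal{P} = \{B\}$, and then process the vertices of $V(G)\setminus B$ one at a time. When processing $x \in V(G)\setminus B$, use its pivot set $S_x := N(x)\cap B$ to split every class $C \in \mathcal{P}$ into $C \cap S_x$ and $C \setminus S_x$, discarding empty parts. After all such $x$ have been processed, two vertices $u,v\in B$ lie in the same class of $\mathcal{P}$ if and only if, for every $x\in V(G)\setminus B$, either both or neither of $u,v$ is adjacent to $x$; that is exactly the condition $N(u)\cap (V(G)\setminus B) = N(v)\cap (V(G)\setminus B)$, so the final $\mathcal{P}$ is the desired partition into false-twin classes with respect to $V(G)\setminus B$.

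For the running time, I would first observe that the procedure never inspects edges inside $G[B]$ or inside $G[V(G)\setminus B]$, only the at most $m$ edges between $B$ and $V(G)\setminus B$. Using the standard partition-refinement data structure (each class stored as a doubly linked list, each vertex holding pointers to its list cell and to its class record), a single refinement step with pivot $S_x$ can be carried out in time $O(|S_x|)$: scan the elements of $S_x = N(x)\cap B$, and for each one move it out of its current class into a "split-off" class created lazily on first access for that class and linked into $\mathcal{P}$. Reading $N(x)\cap B$ costs $O(\deg(x))$, so summing over all $x\in V(G)\setminus B$ gives $O(\sum_x \deg(x)) = O(n+m)$, plus $O(n)$ to set up the initial structures and output $\mathcal{P}$.

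The correctness argument has a minor subtlety I would spell out: by induction on the number of pivots processed, $\mathcal{P}$ is always the partition of $B$ into classes that agree on adjacency to all pivots seen so far, which in particular shows the processing order of the pivots is irrelevant. The main obstacle is purely the bookkeeping needed to guarantee the $O(|S_x|)$ bound per pivot rather than an $O(|\mathcal{P}|)$ or $O(|B|)$ bound: one must avoid touching classes not met by $S_x$, create each split-off class at most once per pivot, and reset each affected class's "current split-off" pointer once the pivot is done so the next pivot starts clean. This is exactly the machinery of Habib~et~al.~\cite{HabibPV98} (and of the module-based reformulation of Ducoffe~\cite{ducoffe2022optimal}), so rather than reprove it I would invoke that machinery, noting only the single change: the base set is $B$ instead of $V(G)$, and the pivots range over $N(x)\cap B$ for $x\notin B$ instead of over $N(x)$ for all $x$.
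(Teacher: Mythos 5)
Your proposal is correct and matches the paper's argument: the paper likewise obtains this corollary by running the partition-refinement algorithm of Habib et al.\ with base set $B$ and pivot sets $N(x)\cap B$ for each $x\in V(G)\setminus B$, only stated far more tersely. Your additional detail on the data structures and the $O(|S_x|)$-per-pivot accounting is a faithful elaboration of the same approach, not a different one.
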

	
	\section{Proofs of Theorem~\ref{thm:hardnessgeneral} and~\ref{thm:Ptodd}}\label{sec:hardness}
	
	In general, showing hardness for {\sc Diameter} for some $H$ also shows hardness for all $H'$-free graphs where $H'$ with $H \subseteq_i H'$ is a graph that contains $H$ as an induced subgraph. Showing a linear-time, i.e. $O(n + m)$ time, algorithm for {\sc Diameter} for some $H$ also shows linear-time algorithms for all $H'$-free graphs, where $H' \subseteq_i H$ is a graph contained as an induced subgraph in $H$.

	\HardnessGeneral*
	\begin{proof}
		It is well-known that deciding between diameter 2 or 3 on split graphs is hard~\cite{RodittyWilliams13}; see also Appendix~\ref{appendix} for an illustration and proof of this result. Split graphs are $(2P_2, C_4, C_5)$-free~\cite{hammer1977split}.
		Hence, if $H$ contains a cycle of length at least 4 as an induced subgraph, then computing diameter on $H$-free graphs is hard by the split graph construction (it is chordal). If $H$ contains a triangle, then computing the diameter is hard by a hardness construction on bipartite graphs~\cite{AbboudWW16}. So $H$ does not contain a cycle.
		If $H$ contains a vertex of degree at least three, computing the diameter on $H$-free graphs is hard, as the construction by Evald and Dahlgaard~\cite{EvaldDahlgaard16} can be made claw-free by adding chords to all binary trees in the construction (without affecting correctness).
		Hence, the remaining cases are where $H$ is a linear forest.
		As the split graph construction shows hardness for $2P_2$-free graphs, deciding between diameter 2 and 3 is hard for any $H$-free graph where the linear forest $H$ contains a $2P_2$.
	\end{proof}
	
	The split graph hardness construction by Roditty and Williams~\cite{RodittyWilliams13} shows that deciding between diameter 2 or 3 is hard for all $P_t$-free graphs with $t~\geq~5$. However, many such $P_t$-free graphs may have diameter much larger than just 2 or 3; in particular, $d_{\max}$ may be much larger. Therefore, this hardness construction does not necessarily rule out being able to compute $d_{\max}$ for many such $H = P_t$. To mitigate this, we extend the hardness construction by Roditty and Williams~\cite{RodittyWilliams13} to rule out solving \dmaxproblem{} in linear time for all $H = P_t$-free graphs for odd $t$, $t\geq 5$.
	
	\OddPtHard*
	\begin{proof}
		In this proof we heavily use the split graph hardness construction by Roditty and Williams~\cite{RodittyWilliams13}; see also Appendix~\ref{appendix} for the construction by Roditty and Williams. The construction shows that we cannot decide between diameter 2 or 3 on $m$-edge (sparse) split graphs in time $O(m^{2-\epsilon})$ for all $\epsilon > 0$ unless SETH fails.
		Split graphs are $2P_2$-free~\cite{hammer1977split}. For $2P_2$-free graphs $d_{\max}$ = 3 by Theorem~\ref{thm:HLongestPath}, so hardness for this class immediately follows.

		Let $t\geq 5$ be an odd integer. We prove that deciding on $d_{\max}({P_t\text{-free}})$ is hard under SETH, that is, we show hardness for deciding between diameter $d_{\max} = t-2$ and $d_{\max} - 1 = t-3$, which are the correct values by Theorem~\ref{thm:HLongestPath}.
		
		Let $G$ be any split graph hardness instance with $n$ vertices and $m$ edges. As $G$ is a split graph, it is $2P_2$-free and so $P_5$-free. For $t=5$ this proves the theorem. Consider the following augmentation: add to every vertex in the independent set a path of $c = \frac{t-5}2$ edges. This adds at most $cn$ vertices and edges to the graph. Let $G'$ denote the resulting graph with $O(n + cn) = O(n)$ vertices and $O(m + cn) = O(m)$ edges. The diameter of $G'$ is either $2 + 2c$ or $3 + 2c$, depending on the diameter of $G$. Deciding on the diameter of $G'$ directly implies deciding on the diameter of $G$, so the lower bound carries over. Therefore, the hardness of deciding on the diameter of $G'$ holds for deciding between $3 + 2c = t - 2$ and $2 + 2c = t-3$. Because induced paths have been extended by at most $2c$ vertices and $G$ is $P_5$-free, we get that $G'$ is $P_{5 + 2c}$-free. The theorem follows. 
	\end{proof}
	
	For even $t$, the construction does not work, as the maximum diameter of the constructed graphs is always odd. We conjecture that Theorem~\ref{thm:Ptodd} generalizes to all $t\geq 5$, not just odd $t$; see the conclusion for a discussion. Note that when $t \leq 4$, the diameter is at most 2, and so it suffices to check whether the graph is a clique, so indeed hardness can only hold for $t\geq 5$.
	
	\section{Algorithmic Results}\label{sec:algorithms}

	\subsection{\texorpdfstring{$(P_2+2P_1)$}{(P2+2P1)}-free graphs}
	
	In this section we prove that the diameter of a $(P_2 + 2P_1)$-free graph can be computed in linear time. The statement of the theorem is slightly stronger however, as we will need this algorithm as a subroutine in another algorithm later on.
	
	\begin{figure}
		\centering
		\includegraphics[width=0.6\textwidth]{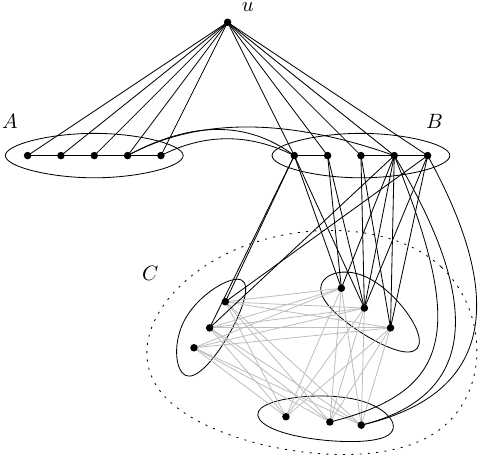}
		\caption{A sketch of a $(P_2 + 2P_1)$-free graph as seen from some vertex $u$.}\label{fig:p2+2p1-freediameter}
	\end{figure}
	\begin{theorem}\label{thm:2P1+P2free}
		Given a graph $G$, there is an algorithm that in $O(n+m)$ time either (a) correctly decides that $G$ is not $(P_2+2P_1)$-free; or (b) outputs a shortest path, which is diametral if $G$ is $(P_2+2P_1)$-free. 
	\end{theorem}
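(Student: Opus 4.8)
The plan is to compute the diameter via a bounded number of breadth-first searches, exploiting the defining property of $(P_2+2P_1)$-free graphs: for every edge $\{a,b\}$ the set $V\setminus(N[a]\cup N[b])$ of common non-neighbours of $a$ and $b$ induces a clique (equivalently, two common non-neighbours of the endpoints of an edge are always adjacent). Throughout, a structural deduction will be used only either to locate a candidate diametral path, which we then verify by an explicit BFS, or to exhibit an induced $P_2+2P_1$; thus on inputs that are not $(P_2+2P_1)$-free the algorithm can always fall back to~(a) and never outputs a wrong answer.

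First I dispose of easy cases. Testing whether $G$ is a clique takes $O(n+m)$ time; if it is, output a single vertex (diameter $1$). Otherwise run BFS from an arbitrary vertex $u$, yielding the layers $L_0=\{u\},L_1,\dots,L_{\mathrm{ecc}(u)}$ (see Figure~\ref{fig:p2+2p1-freediameter}). A connected $(P_2+2P_1)$-free graph has $d_{\max}=4$ by Theorem~\ref{thm:HLongestPath}. Hence, if $\mathrm{ecc}(u)\ge 5$, a shortest path from $u$ to its farthest vertex has length at least $5$, so it contains an induced $P_6$ and therefore an induced $P_2+2P_1$; report~(a). If $\mathrm{ecc}(u)=4$, then whenever $G$ is $(P_2+2P_1)$-free we have $\diam(G)=4=\mathrm{ecc}(u)$, and we output a shortest path from $u$ to a vertex of $L_4$.

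The principal case is $\mathrm{ecc}(u)\in\{2,3\}$, where the true diameter (still assuming $(P_2+2P_1)$-freeness) can exceed $\mathrm{ecc}(u)$; for example $P_5$ with $u$ at its centre. Here I analyse where a pair $\{x,y\}$ with $d(x,y)=4$ could lie in the layering of $u$. Applying the forbidden-pattern property to each edge $\{u,v\}$ with $v\in L_1$ shows that $K_v:=V\setminus(N[u]\cup N[v])$ is a clique, that every vertex at distance at least $3$ from $u$ lies in $K_v$ for every $v\in L_1$ non-adjacent to it, and that $L_2\cup L_3$ is covered by the cliques $K_v$ together with the sets $N(v)$, $v\in L_1$. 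Using that any two vertices lying in a common $K_v$ are adjacent, one shows that a distance-$4$ pair must consist of two vertices in distinct cliques $K_v,K_{v'}$ with $v\not\sim v'$, each end being adjacent to a neighbour of $u$ whose deletion destroys all short paths between $x$ and $y$; a similar, easier analysis handles distance-$3$ pairs. After a preliminary reduction that merges false twins with respect to $N(u)$ (and within $L_2$) via Corollary~\ref{cor:bipartitetwins}, which preserves distances, the diameter, and $(P_2+2P_1)$-freeness by Proposition~\ref{prop:twinremovaldists}, the cliques $K_v$ collapse so that only a constant number of candidate vertices remain from which a diametral path could originate. Running BFS from $u$ and from these candidates then either discovers a distance-$4$ pair (output that path), or a distance-$3$ pair while certifying that no distance-$4$ pair exists (output that path), or certifies $\diam(G)=\mathrm{ecc}(u)$ (output a shortest path from $u$ of that length). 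If at any point a shortest path of length at least $5$ appears or an induced $P_2+2P_1$ is detected, we report~(a) instead.

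The main obstacle is making the previous paragraph quantitative. A single small forbidden induced subgraph constrains the BFS layering of $u$ only loosely, so one has to determine precisely which pairs of layers a distance-$3$ or distance-$4$ pair can occupy, and --- crucially for the $O(n+m)$ bound --- show that the cliques $K_v$, which may be arbitrarily large, contribute only constantly many genuinely different endpoints once false twins have been collapsed. This case analysis is technical even for such a small pattern, but in every sub-case it terminates either with an explicit short diametral path or with an induced $P_2+2P_1$.
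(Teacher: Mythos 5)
Your high-level architecture matches the paper's (clique check, BFS from a root $u$, structural analysis of where a distance-$3$ or distance-$4$ pair can sit, then a constant number of verifying BFS calls, with every structural deduction backed either by a witness path or by an exhibited $P_2+2P_1$), and your opening observation --- that the common non-neighbourhood of any edge induces a clique --- is correct and is essentially the paper's Claim~\ref{clm:P2+2P1CompleteRandNNGC} in disguise. But the proof has a genuine gap exactly where you flag "the main obstacle": the claim that, after merging false twins, \emph{only a constant number of candidate endpoints remain} is not proven, and as stated it is false. Consider the graph with $V=\{u\}\cup B\cup C$, $B=\{b_1,\dots,b_k\}$ and $C=\{c_1,\dots,c_k\}$ both independent sets, $u$ complete to $B$ and anti-complete to $C$, and $b_i$ adjacent to $c_j$ iff $i\neq j$. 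One checks that every edge's common non-neighbourhood is a single vertex or a single edge, so the graph is $(P_2+2P_1)$-free; every pair $(b_i,c_i)$ is at distance $3$; and all $2k$ vertices of $B\cup C$ have pairwise distinct neighbourhoods, so no twin reduction (true or false, with respect to $N(u)$ or otherwise) collapses anything. Thus the set of "genuinely different endpoints" of diametral paths is linear in $n$, and the mechanism you propose for selecting BFS sources does not give $O(n+m)$ time. (A single BFS from \emph{some} $b_i$ would of course succeed here, but your argument provides no rule for finding such a vertex.)

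The paper closes this gap with work that your sketch compresses into "a similar, easier analysis handles distance-$3$ pairs," which is in fact the hardest part. Concretely: the paper chooses $u$ of \emph{minimum degree} after removing true twins (forcing every neighbour of $u$ to have an edge into $C=V\setminus N[u]$), proves that $G[C]$ is complete $r$-partite and that each $b\in N(u)$ has at most one non-neighbour per part, and then uses these facts to reduce the distance-$4$ case to a single candidate vertex and the distance-$3$ case to a linear-time combinatorial filter (checking, for each relevant $b$, whether its neighbours in $B$ all miss the same vertex of the unique multi-vertex part) that produces one witness from which one BFS suffices. Some such explicit selection criterion --- not twin collapse --- is what your proof is missing, so as written it does not establish the $O(n+m)$ bound.
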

	\begin{proof}
		Let $G = (V,E)$ be a graph.
		The diameter of any $(P_2+2P_1)$-free graph is at most 4 by Theorem~\ref{thm:HLongestPath}. If the diameter of $G$ is 1, then the graph is a clique, which we can check in $O(n + m)$ time, and return any arbitrary pair of vertices, or a single vertex if $|V| = 1$.
		
		Remove twins from the graph in $O(n+m)$ time. By Proposition~\ref{prop:twinremovaldists}, distances and the diameter are not affected, and the graph is $H$-free if it was $H$-free for some graph $H$. By abuse of notation, we still call this graph $G = (V,E)$.

		Let $u$ be a vertex in $G$ with lowest degree, which can be found in $O(n+m)$ time. Now execute a BFS from $u$. We distinguish the structure of the graph as seen from $u$; see Figure~\ref{fig:p2+2p1-freediameter}. Let $C = V \setminus N[u]$ and let $A \subseteq N(u)$ be the subset of vertices of $N(u)$ with no edges to $C$. Let $B = N(u) \setminus A$. Note that $A,B,C$ can be identified by the BFS from $u$. If $C = \emptyset$, then the diameter of $G$ is at most 2, and we are done. Hence, $C\neq \emptyset$ and $B\neq \emptyset$. We note that $A = \emptyset$, which can be seen by the following. For any $a\in A$ it holds that $\deg(a) \leq \deg(u)$ by definition of $A$. $u$ was picked to be a vertex of lowest degree in $G$, so for any $a\in A$ we have $\deg(a) = \deg(u)$. But as $N(a) \subseteq \{u\} \cup A \cup B$ for all $a\in A$ and $N(u) = A \cup B$, it follows that every $a\in A$ is a twin of $u$. But then $A = \emptyset$ as we removed twins from $G$.

		If there is a vertex at distance 5 or more from $u$, then return that the graph is not $(P_2+2P_1)$-free. If there is a vertex at distance 4 from $u$, then return this shortest path; it is diametral if $G$ is $(P_2+2P_1)$-free. Both of these cases are identified by the BFS from $u$. Now observe that any shortest path of length~3~or~4 must have at least one endpoint in $C$, as the distances between vertices in $A\cup B \cup \{u\}$ are at most 2 by $u$.
		
		We prove properties of $G$ under the assumption that it is $(P_2+2P_1)$-free.

		\begin{claim}\label{clm:P2+2P1CompleteRandNNGC}
			If $G$ is $(P_2+2P_1)$-free, then (a) $G[C]$ is a complete $r$-partite graph for some $r\geq 1$; and (b) every $b\in B$ has at most one non-neighbour in every part of $G[C]$.
		\end{claim}
		\begin{proof}
			To prove (a), because $u$ is non-adjacent to all of $C$, $G[C]$ is a $(P_2 + P_1)$-free graph. The complement of a $(P_2 + P_1)$-free graph is $P_3$-free, which is a disjoint union of cliques, and the complement of a disjoint union of cliques is a complete $r$-partite graph, for some $r\geq 1$.

			To prove (b), for sake of contradiction, assume that a vertex $b\in B$ has two non-neighbours $c_1,c_2$ in a single part of $G[C]$. Then, $(u,b)$ together with $c_1,c_2$ form a $P_2 + 2P_1$, a contradiction.
		\end{proof}

		\begin{claim}\label{clm:P2+2P1DetectRpartite}
			In $O(n+m)$ time, we can decide whether $G[C]$ is a complete $r$-partite graph for some $r \geq 1$ and, if so, return its parts.
		\end{claim}
		\begin{proof}
			Detect false twins in $G[C]$ in $O(n+m)$ time using Theorem~\ref{thm:twinremoval}. Let $r$ be the number of false twin classes, note that $r\leq n$. Initialize an array of size $r$ in $O(n)$ time, and count for each class the number of vertices in it in $O(n)$ time total. Now we have to check that each vertex is complete to all vertices except those in its class. To do this, iterate for each vertex $v$ over its adjacency list and count the number of neighbours in $C$. The degree of a vertex $v$ should come out to be the number of vertices in $C$ minus the size of its class, which are both known.

			If this process succeeds, we have $r$ classes, let us call them parts. Every vertex in a part is complete to all other parts, and has no neighbours in its part. Hence, this is a complete $r$-partite graph. We can return the parts by returning each false twin class.
		\end{proof}

		\begin{claim}\label{clm:P2+2P1ClaimDiam4}
			In $O(n+m)$ time, either (a) we find a length-$4$ shortest path with both endpoints in $C$; (b) we find a length-$3$ shortest path with both endpoints in $C$ and conclude no such length-$4$ shortest path exists in $G$; (c) we conclude no length-$3$ or length-$4$ shortest path with both endpoints in $C$ exists in $G$; or (d) we conclude $G$ is not $(P_2+2P_1)$-free.
		\end{claim}
		\begin{proof}
			First execute the algorithm of Claim~\ref{clm:P2+2P1DetectRpartite} to detect if $G[C]$ is a complete $r$-partite graph for some integer $r\geq 1$. If $G[C]$ is not a complete $r$-partite graph for some $r\geq 1$, by Claim~\ref{clm:P2+2P1CompleteRandNNGC}, we can safely return option (d).
			If $G[C]$ is complete $r$-partite for some $r > 1$, then the distances between vertices of $C$ are at most 2, and a shortest path of length 3 or 4 with both endpoints in $C$ does not exist, and we return option (c).

			If $G[C]$ is complete $1$-partite, that is, $G[C]$ is an independent set, then look at the neighbourhood of any arbitrary $b\in B$. By Claim~\ref{clm:P2+2P1CompleteRandNNGC}, if $G$ is $(P_2+2P_1)$-free, $b$ is non-adjacent to at most one vertex $c\in C$. If $b$ is adjacent to all of $C$, then return no shortest path of length 3 or 4 with both endpoints in $C$ exist in $G$, as all vertices in $C$ have $b$ as a common neighbour, so return option (c). If $b$ has more than one non-adjacency in $C$, then return option (d). Otherwise, $c$ is the only candidate for an endpoint of a length-3 or length-4 shortest path, as all other pairs $c_1,c_2 \in C$ have $b$ as a common neighbour. Execute a BFS from $c$ and return a length-4 shortest path if found, option (a). If a longer shortest path is found, then return that $G$ is not $(P_2+2P_1)$-free, option (d). If instead only a length-3 shortest path is found, return it as option (b), and by this analysis, there is no shortest path of length 4 with both endpoints in $C$.
		\end{proof}

		Run the algorithm of Claim~\ref{clm:P2+2P1ClaimDiam4}. If it returns option (d), then output that $G$ is not $(P_2+2P_1)$-free. If it returns option (a), then output the length-4 shortest path the algorithm gives; it is diametral if $G$ is $(P_2+2P_1)$-free. In both other cases, we argue no length-4 diametral path can exist if $G$ is $(P_2+2P_1)$-free. If $G$ is $(P_2+2P_1)$-free, distances from vertices in $B$ to vertices in $C$ are at most 3, because $G[C]$ is $r$-partite and every $b\in B$ has at most one non-adjacent vertex per part of $G[C]$ by Claim~\ref{clm:P2+2P1CompleteRandNNGC}. We already found a length-4 shortest path with $u$ as an endpoint, if it exists. But then any length-4 shortest path has both endpoints in $C$, if it exists, as we already knew at least one endpoint was in $C$. In both options (b) and (c) we can conclude that no length-4 shortest path with both endpoints in $C$ exists. 

		We continue as follows.
		If $u$ has a vertex at distance $3$, then we would already know this by the BFS from $u$. Otherwise, every $c\in C$ is at distance~$2$ from $u$. If the algorithm of Claim~\ref{clm:P2+2P1ClaimDiam4} returned option (b), we can output a shortest path of length $3$ with both endpoints in $C$; it is diametral if $G$ is $(P_2+2P_1)$-free.
		If the algorithm of Claim~\ref{clm:P2+2P1ClaimDiam4} returned option (c), no length-$3$ shortest path with both endpoints in $C$ exists in $G$. Hence, the only remaining case for a length-$3$ shortest path is that there is a vertex in $B$ with distance $3$ to a vertex in $C$. We prove a claim on the structure of $G[C]$ if $G$ is $(P_2+2P_1)$-free and a shortest path of length $3$ exists from a vertex in $B$ to a vertex in $C$.

		\begin{claim}\label{clm:P2+2P1ClaimLength3Struc}
			If $G$ is $(P_2+2P_1)$-free and a length-$3$ shortest path exists from some $b\in B$ to some $c\in C$, then (a) $G[C]$ has exactly one part with more than one vertex, and (b) all vertices in $B$ with a vertex in $C$ at distance $3$ are adjacent to only that multi-vertex part and have exactly one non-neighbour in that part.
		\end{claim}
		\begin{proof}
			Assume $G$ is $(P_2+2P_1)$-free and a length-$3$ shortest path exists from some vertex in $B$ to some vertex in $C$. By Claim~\ref{clm:P2+2P1CompleteRandNNGC}, $G[C]$ is a complete $r$-partite graph for some $r\geq 1$.
			First, note that any $b\in B$ adjacent to multiple parts has distance at most 2 to all vertices in $C$, because $G[C]$ is complete $r$-partite. So, any $b\in B$ with a vertex at distance $3$ in $C$ is adjacent to exactly one part of $G[C]$. By assumption, there is such a vertex in $B$. Moreover, by Claim~\ref{clm:P2+2P1CompleteRandNNGC}, every $b\in B$ has at most one non-neighbour in every part of $G[C]$. Hence, there is at most one part with multiple vertices in $G[C]$. If there is no part with multiple vertices, then every $b\in B$ has distance at most~2 to all $c\in C$. This proves (a).
			To prove (b), again note that any $b\in B$ adjacent to multiple parts has distance at most 2 to all vertices in $C$. By Claim~\ref{clm:P2+2P1CompleteRandNNGC}, any $b\in B$ is adjacent to the multi-vertex part and has at most one non-neighbour in that part. Having zero non-neighbours in the multi-vertex part implies the distance to all vertices in $C$ is at most $2$. The claim follows.
		\end{proof}
		
		By Claim~\ref{clm:P2+2P1ClaimLength3Struc}, we can look for $G[C]$ to have simple structure. In particular, only one part may have multiple vertices.
		By Claim~\ref{clm:P2+2P1DetectRpartite}, we can detect if $G[C]$ is complete $r$-partite in $O(n+m)$ time, and, given the parts, check whether only one part has multiple vertices in $O(n+m)$ time.
		If this is not the case, then there is no length-$3$ shortest path from a vertex in $B$ to a vertex in $C$ by Claim~\ref{clm:P2+2P1ClaimLength3Struc}, if $G$ is $(P_2+2P_1)$-free, and we may return a length-$2$ shortest path with as witness some non-adjacent pair of vertices and a common neighbour.

		Otherwise, the structure is as Claim~\ref{clm:P2+2P1ClaimLength3Struc}(a) and (b) suggest. Find all $b\in B$ only adjacent to the multi-vertex part with one non-neighbour in that part in $O(n + m)$ time, by iterating over the adjacency lists of the vertices in $B$. Let this be the set of vertices $B'$. Then look for each vertex in $B'$ whether all its neighbours in $B$ have the same non-adjacency in the multi-vertex part in $O(n + m)$ time. If there is a vertex $b\in B'$ that meets this requirement, and $G$ is $(P_2+2P_1)$-free, then this is a witness for diameter-3 shortest path: $b$ is non-adjacent to one vertex $c\in C$ in a multi-vertex part of $G[C]$, which is the only part it is adjacent to, and $N(b) \cap N(c) = \emptyset$. So the distance from $b$ to $c$ is at least (and at most) 3. To verify, execute a BFS from any single one of these vertices. If a shortest path is found of length 4 or longer, then return that $G$ is not $(P_2+2P_1)$-free. If a length-3 shortest path is found, then return it. Otherwise, there is no length-3 shortest path in $G$ from $B$ to $C$, if $G$ is $(P_2+2P_1)$-free.
		
		If in none of the above cases a confirmation for a shortest path of length 3 or 4 was found, and the graph is $(P_2+2P_1)$-free and not a clique, then the diameter of $G$ must be 2. Return some non-adjacent pair with some common neighbour as a shortest path in $O(n+m)$ time.
	\end{proof}

	As an immediate corollary, we get that we can compute the diameter of $3P_1$-free graphs in linear time, as any $3P_1$-free graph is $(P_2 + 2P_1)$-free.
	
	\begin{corollary}\label{cor:3P1free}
		Given a $3P_1$-free graph $G$, we can compute the diameter of $G$ and return a diametral path of $G$ in $O(n+m)$ time.
	\end{corollary}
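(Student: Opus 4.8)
The plan is to observe that the class of $3P_1$-free graphs is contained in the class of $(P_2+2P_1)$-free graphs, and then invoke Theorem~\ref{thm:2P1+P2free} essentially as a black box. The first step is the containment: the graph $P_2+2P_1$ consists of one edge $xy$ together with two isolated vertices $z_1,z_2$, and the three vertices $\{x,z_1,z_2\}$ already induce a $3P_1$. Hence any graph that contains an induced $P_2+2P_1$ also contains an induced $3P_1$, so contrapositively every $3P_1$-free graph is $(P_2+2P_1)$-free. This is exactly the kind of reduction described in the opening of Section~\ref{sec:hardness}: an $O(n+m)$ algorithm for {\sc Diameter} on $H$-free graphs yields one for all $H'$-free graphs with $H' \subseteq_i H$.

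The second step is to run the algorithm of Theorem~\ref{thm:2P1+P2free} on the input graph $G$ in $O(n+m)$ time. Since $G$ is $(P_2+2P_1)$-free, the algorithm cannot return option~(a); it therefore returns a shortest path $P$ that is guaranteed to be diametral. Reading off the length of $P$ gives $\diam(G)$ in constant additional time, and $P$ is the requested diametral path. The only points worth a sentence are the degenerate cases: if $G$ has a single vertex or is a clique, the path returned (a single vertex, or an arbitrary adjacent pair) is still a valid diametral path of length $0$ or $1$, and these cases are already handled inside the proof of Theorem~\ref{thm:2P1+P2free}.

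I do not expect any real obstacle here: all of the structural and algorithmic work is carried out in Theorem~\ref{thm:2P1+P2free}, and the corollary is a one-line class-inclusion argument followed by a single invocation of that theorem. If a self-contained proof were desired instead, one could alternatively redo the case analysis directly for $3P_1$-free graphs (where $d_{\max}=2$, so the diameter is $1$ if $G$ is a clique and $2$ otherwise, witnessed by any non-adjacent pair and a common neighbour), but the reduction above is cleaner and already gives the claimed running time.
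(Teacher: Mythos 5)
Your proposal is correct and matches the paper's argument exactly: the paper derives the corollary in one line from the inclusion of $3P_1$-free graphs in the class of $(P_2+2P_1)$-free graphs, followed by an invocation of Theorem~\ref{thm:2P1+P2free}. Your verification of the containment (the independent set $\{x,z_1,z_2\}$ inside $P_2+2P_1$) and your handling of the degenerate clique case are both sound.
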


	\subsection{\texorpdfstring{$(P_3+P_1)$}{(P3+P1)}-free graphs}

	We use a characterization by Olariu~\cite{olariu1988paw} to argue we can find the diameter of $(P_3+P_1)$-free graphs in linear time. As a subroutine, we call on the algorithm of Theorem~\ref{thm:2P1+P2free}.

	\begin{theorem}\label{thm:P3+P1free}
		Given a $(P_3 + P_1)$-free graph $G$, we can compute the diameter of $G$ and return a diametral path of $G$ in $O(n+m)$ time.
	\end{theorem}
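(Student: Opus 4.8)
The plan is to reduce $(P_3+P_1)$-free graphs to the already-handled $3P_1$-free case, using that $P_3+P_1$ is exactly the complement of the paw (a triangle with a pendant vertex). Thus $G$ is $(P_3+P_1)$-free if and only if $\overline G$ is paw-free, and Olariu's characterisation~\cite{olariu1988paw} tells us that every \emph{connected} paw-free graph is either triangle-free or complete multipartite. Applying this to $\overline G$ will split the analysis neatly according to whether $\overline G$ is connected, i.e.\ whether $G$ is a join.

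Concretely, I would proceed in three steps. \textbf{Step 1: clique check.} If $G$ is a clique, then $\diam(G)=1$ and a single vertex or any edge is a diametral path; this is decidable in $O(n+m)$ time. By Theorem~\ref{thm:HLongestPath}, a connected $(P_3+P_1)$-free graph has diameter at most $3$, so from now on only the values $2$ and $3$ remain. \textbf{Step 2: join check.} Decide in $O(n+m)$ time whether $\overline G$ is disconnected, equivalently whether $G$ is a join of two nonempty graphs; this can be done by computing the connected components of $\overline G$ with a BFS that keeps the not-yet-reached vertices in a doubly linked list and charges each scan either to a vertex leaving the list or to an edge of $G$. If $\overline G$ is disconnected, then two vertices on the same side of the join share a common neighbour on the other side while vertices on opposite sides are adjacent, so $\diam(G)\le 2$; since $G$ is not a clique, $\diam(G)=2$, and a diametral path is obtained by taking any non-neighbour $w$ of a minimum-degree vertex $v$ and reading a (necessarily length-$2$) shortest $v$--$w$ path from a BFS. \textbf{Step 3: the $3P_1$-free case.} If $\overline G$ is connected, then $\overline G$ is connected and paw-free, so by Olariu's theorem it is triangle-free or complete multipartite; it cannot be complete multipartite, since that would make $G$ a disjoint union of cliques, hence a single clique (as $G$ is connected), contradicting Step 1. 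Therefore $\overline G$ is triangle-free, i.e.\ $G$ has independence number at most $2$, i.e.\ $G$ is $3P_1$-free, and I would just invoke the linear-time algorithm of Corollary~\ref{cor:3P1free} (which calls the $(P_2+2P_1)$-free algorithm of Theorem~\ref{thm:2P1+P2free}) to output $\diam(G)$ and a diametral path.

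All three steps run in $O(n+m)$ time, so the whole procedure does. The only mildly technical ingredient is the linear-time test in Step 2 for connectivity of $\overline G$ (detecting a join); this is folklore, but should be stated carefully. Everything else is a short case analysis: verifying that the join case forces diameter at most $2$ and that reaching Step 3 genuinely yields a $3P_1$-free graph, so that the call to Corollary~\ref{cor:3P1free} is legitimate. I expect this bookkeeping, rather than any deep argument, to be where the proof spends its effort.
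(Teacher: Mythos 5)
Your proof is correct and follows essentially the same route as the paper: both rest on Olariu's paw-free characterisation of $\overline G$, a clique check, and a reduction of the remaining triangle-free-complement case to the $3P_1$-free (hence $(P_2+2P_1)$-free) algorithm. The only difference is that the paper avoids your Step~2 complement-connectivity test entirely: it simply runs the algorithm of Theorem~\ref{thm:2P1+P2free} unconditionally and observes that when $G$ is a join the diameter is $2$ regardless of whether that subroutine returns a length-$2$ path or reports that $G$ is not $(P_2+2P_1)$-free, so the explicit join detection you flag as the technical ingredient is not needed.
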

	\begin{proof}
		First note that the complement of a $P_3+P_1$ is a paw. By a result by Olariu~\cite{olariu1988paw}, a graph is paw-free if and only if each component is triangle-free or complete multi-partite. This characterization must hold for the complement of our $(P_3 + P_1)$-free input graph $G$. Hence, our input graph $G$ can be divided into parts, where each part is complete to every other part, and each part is either $3P_1$-free or a $P_3$-free graph (the complements of triangle-free graphs and complete multi-partite graphs).

		If $G$ consists of only one part, then it is either a clique with diameter at most 1, or it is $3P_1$-free. When $G$ consists of multiple parts, then its diameter can only be 2 if it is not a clique. The algorithm is now given by the following. Check in $O(n+m)$ time whether $G$ is a clique. If it is a clique, return some arbitrary pair of vertices or a single vertex if $|V| = 1$. If $G$ is not a clique, then run the algorithm of Theorem~\ref{thm:2P1+P2free}. If $G$ consists of only one part, then the algorithm will return a shortest path that is diametral, because any $3P_1$-free graph is also $(P_2+2P_1)$-free. If $G$ instead consists of multiple parts, then there are no shortest paths of length more than~2 in $G$, so the algorithm will either return that $G$ is not $(P_2+2P_1)$-free or return a length-2 shortest path. In either case, we can correctly conclude that the diameter of $G$ is 2, and return a path consisting of two arbitrary non-adjacent vertices and a common neighbour.
	\end{proof}

	\subsection{\texorpdfstring{$(P_4+P_1)$}{(P4+P1)}-free graphs}
	
	We next show that we can decide whether the diameter of a $(P_4 + P_1)$-free graph is equal to $d_{\max}$ in $O(n+m)$ time. The proof will identify all possibilities of a diameter-4 path occurring in relation to a BFS from an arbitrary vertex. Luckily, most cases reduce to some other case in the proof, and algorithmically speaking, only a few cases require algorithmic computation.
	
	\begin{theorem}\label{thm:P4+P1free}
		Given a $(P_4 + P_1)$-free graph $G$, we can decide whether the diameter of $G$ is equal to $d_{\max} = 4$ in $O(n+m)$ time.
	\end{theorem}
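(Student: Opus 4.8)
The plan is to follow the template of the proof of Theorem~\ref{thm:2P1+P2free}: first reduce to a clean core instance, then extract the rigidity forced by $(P_4+P_1)$-freeness, and finally enumerate the constantly many ways a diametral $P_5$ can sit inside a BFS tree. By Theorem~\ref{thm:HLongestPath} the target value is indeed $d_{\max}=4$. If $G$ is a clique we answer ``no'' after the $O(n+m)$ clique test; otherwise we remove true twins using Proposition~\ref{prop:twinremovaldists}, which preserves all distances, the diameter, and $(P_4+P_1)$-freeness. We then pick an arbitrary vertex $u$ and run a BFS, obtaining layers $L_0=\{u\},L_1,\dots,L_{\mathrm{ecc}(u)}$. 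If $\mathrm{ecc}(u)\ge 4$ we output the resulting shortest path and answer ``yes''; if $\mathrm{ecc}(u)\le 1$ then $\diam(G)\le 2$ and we answer ``no''. So the remaining work is for $\mathrm{ecc}(u)\in\{2,3\}$.

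Two structural facts would drive the analysis, both immediate from $(P_4+P_1)$-freeness. First, for every vertex $v$ the graph $G-N[v]$ is $P_4$-free, i.e.\ a cograph, since an induced $P_4$ in $G-N[v]$ together with $v$ would be an induced $P_4+P_1$; hence every connected component of $G-N[v]$ has diameter at most $2$, and consequently at most three consecutive vertices of an induced path can lie outside $N[v]$. In particular, if $\mathrm{ecc}(u)=2$ and $G-N[u]$ is connected, then every two vertices are within distance $1+2=3$, so $\diam(G)\le 3$ and we answer ``no''. Second, if $\langle p_0,p_1,p_2,p_3,p_4\rangle$ is an induced diametral path (so $d(p_0,p_4)=4$), then applying $(P_4+P_1)$-freeness to the induced $P_4$'s $\langle p_0,p_1,p_2,p_3\rangle$ and $\langle p_1,p_2,p_3,p_4\rangle$ shows that every vertex outside $\{p_0,p_4\}$ has a neighbour in $\{p_1,p_2,p_3\}$; since also $p_0\sim p_1$ and $p_4\sim p_3$, we get $N[p_1]\cup N[p_2]\cup N[p_3]=V$, i.e.\ the middle of any diametral path is a dominating induced $P_3$. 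Together these pin down where the $p_i$ can lie: the sequence $(d(u,p_0),\dots,d(u,p_4))$ is a walk on $\{0,\dots,\mathrm{ecc}(u)\}$ with unit or zero steps whose endpoints sum to at least $4$, and its vertices enter $V\setminus N[u]$ only in blocks of length at most three.

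The bulk of the argument is then a case analysis over these admissible layer-patterns. For $\mathrm{ecc}(u)=3$, up to the symmetry $p_0\leftrightarrow p_4$ the endpoint pair $(d(u,p_0),d(u,p_4))$ is one of $(1,3)$, $(2,2)$, $(2,3)$, $(3,3)$; in every case except $(2,2)$ at least one endpoint lies in the last layer $L_3$, and I would dispose of these by arguing — using the cograph structure of $G-N[u]$ together with the domination fact — that it suffices to launch a further BFS from a bounded number of representatives of $L_3$ (those representing the distinct false-twin classes of $L_3$ towards $L_2$, obtained via Corollary~\ref{cor:bipartitetwins}), or by rerouting such a pattern so that it is detected by the BFS from $u$. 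The pattern $(2,2)$ falls under the one genuinely new case: $\mathrm{ecc}(u)=2$ and $G-N[u]$ disconnected into components $C_1,\dots,C_k$, each a cograph of diameter at most $2$. Here any distance-$4$ pair must have its endpoints in two different components, and since a neighbour of such an endpoint lies in its own component or in $N(u)$, every $a$–$b$ walk of length at most $3$ uses at most two vertices of $N(u)$ and no edge between distinct $C_i$'s. I would then use the domination fact to force $N(u)$ to attach rigidly to the components — in particular a middle vertex $p_2$ of the diametral path must lie in $N(u)$ and be complete to all components other than the two holding the endpoints — which collapses the search for a cross-component distance-$4$ pair to a test executable in $O(n+m)$ time by scanning adjacency lists and the false-twin partitions of $N(u)$ and of the $C_i$.

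The hard part will be this last case: proving a rigidity statement about how $N(u)$ meets $C_1,\dots,C_k$ that is tight enough to decide the existence of a cross-component distance-$4$ pair in linear time, rather than by a BFS from each component, which we cannot afford. A secondary difficulty, as in Theorem~\ref{thm:2P1+P2free}, is the bookkeeping: even after the reductions several base configurations remain, and for each one must verify both that the explicit linear-time test is correct under the assumption that $G$ is $(P_4+P_1)$-free and that it never yields a false positive, so that the output is meaningful on the promised input class.
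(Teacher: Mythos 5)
Your skeleton (twin removal, BFS from an arbitrary $u$, enumeration of the layer patterns a diametral $P_5$ can trace, and the two consequences of $(P_4+P_1)$-freeness: $G-N[v]$ is a cograph, and the middle $P_3$ of a diametral path dominates $V$) is sound and close in spirit to the paper's proof, which likewise enumerates all placements of a length-$4$ shortest path relative to $u$, $B=N(u)$ and $C=V\setminus N[u]$. But the proposal stops exactly where the theorem gets hard, and the two devices you name for those cases do not work as stated. First, for the patterns with an endpoint in $L_3$ you propose a BFS from ``a bounded number of representatives'' of the false-twin classes of $L_3$ towards $L_2$; there is no argument that the number of such classes is $O(1)$, and in general it is not, so this does not yield an $O(n+m)$ algorithm. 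The paper instead proves that a \emph{single} BFS from a vertex of $L_3$ of minimum degree towards the distance-$2$ layer suffices, and this requires a genuinely nontrivial argument (a case split on whether $G[L_3]$ is connected, and a ``good/bad pair'' analysis showing that if $d$ is good and $d'$ is bad for the same $b$, then $N(d)\cap C_1\subsetneq N(d')\cap C_1$). Nothing in your proposal substitutes for this.

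Second, for the cross-component pattern with both endpoints at distance $2$ from $u$, the rigidity statement you assert -- that the middle vertex $p_2$ lies in $N(u)$ and is complete to all components other than the two containing the endpoints -- is both imprecise and unproven. In the path $\langle c_1,b_1,u,b_2,c_2\rangle$ the middle vertex is $u$ itself (adjacent to \emph{no} component), and in a path $\langle c_1,b,c_2,c_3,c_4\rangle$ whose far endpoint happens to lie in $L_2$ the middle vertex lies in $G-N[u]$; so the case analysis you would build on this claim is already off. More importantly, you explicitly defer ``proving a rigidity statement \dots tight enough to decide the existence of a cross-component distance-$4$ pair in linear time,'' which is precisely the crux. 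The paper's resolution is substantial: it contracts each component of $G[C]$ to a point and each $C$-twin class of $B$ to a point (Corollary~\ref{cor:bipartitetwins}), proves the quotient preserves the existence of such a pair, and then shows a pair exists iff the quotient of $B$ splits into two mutually anti-complete cliques $B_1,B_2$ plus a clique $X$ complete to the rest, with vertices $c_1,c_2$ satisfying $N(c_1)=B_1$ and $N(c_2)=B_2$. Until you supply arguments of comparable strength for these two cases, the proposal is a plan rather than a proof.
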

	\begin{proof}
		Let $G = (V,E)$ be a (connected) $(P_4 + P_1)$-free graph. Indeed, $d_{\max} = 4$ by Theorem~\ref{thm:HLongestPath}.
		We view the structure of $G$ from a BFS from an arbitrary vertex $u$. Let $C = V \setminus N[u]$ and denote $B = N(u)$. If $C = \emptyset$ or $B = \emptyset$, then the diameter of $G$ is at most 2, so assume this is not the case. Note that both sets can be identified during the BFS from $u$ with no overhead. Moreover, $G[C]$ is $P_4$-free. We use the convention that $b_i \in B$ and $c_i \in C$.
		
		\begin{figure}[tb]
			\centering
			\includegraphics[width=.9\textwidth]{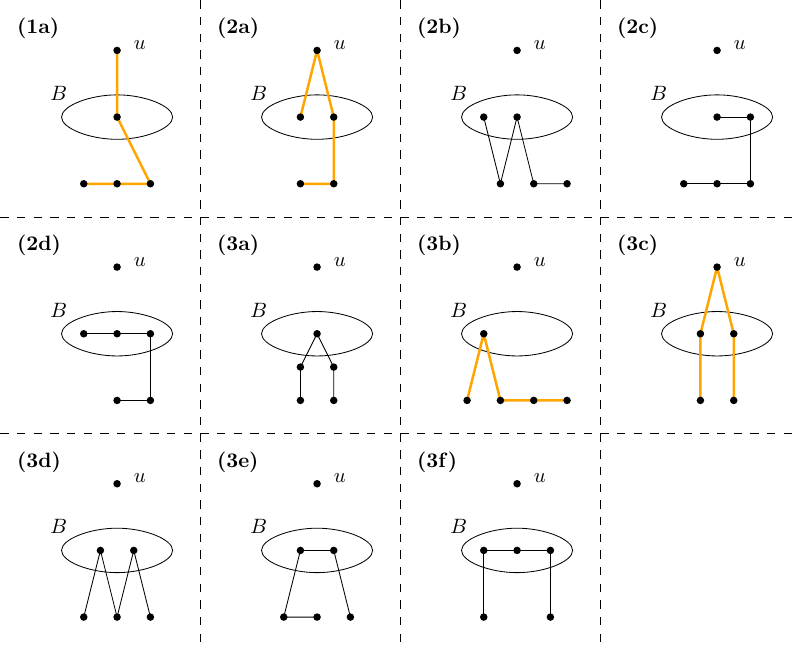}
			\caption{An illustration of the types of a diameter-4 shortest path appearing in a $(P_4+P_1)$-free graph, with respect to the vertex $u$ and its neighbourhood $N(u) = B$. Only the highlighted types \textbf{(1a)}, \textbf{(2a)}, \textbf{(3b)}, \textbf{(3c)} require algorithmic computation.}\label{fig:P4+P1cases}
		\end{figure}
		
		We first list all possibilities for a diameter-4 shortest path to exist with respect to its structure; see Figure~\ref{fig:P4+P1cases} for an illustration. We will call each such possibility a `type'. First, observe that at least one of the ends of any distance-4 shortest path must be in $C$, because distances in $G[B\cup \{u\}]$ are at most 2 via $u$. Here, we do not write down symmetries, because in an undirected graph the path $\langle v_1,\ldots,v_k \rangle$ is equivalent to the path $\langle v_k,\ldots, v_1\rangle$. We now distinguish the following cases:\\
		\textbf{(1)} If the path has $u$ as one endpoint and the other endpoint is in $C$,	then the only possibility is type \textbf{(1a)}~$\langle u,b,c_1,c_2,c_3 \rangle$, as any vertex in $B$ has $u$ as a neighbour.\\
		\textbf{(2)} If the path has one end in $B$ and the other in $C$, then we have several options, depending on the number of vertices of $B$ on the path. We have at least one vertex of $B$ on the path because one endpoint is in $B$. We cannot have only one vertex in $B$ on the path, because then the rest of the path is in $C$ but $G[C]$ is $P_4$-free. If we have exactly two vertices of $B$ on the path, they can be adjacent or non-adjacent. Two non-adjacent vertices of $B$ on the path give the possibilities of type \textbf{(2a)}~$\langle b_2,u,b_1,c_1,c_2 \rangle$ and type \textbf{(2b)}~$\langle b_1,c_1,b_2,c_2,c_3 \rangle$; the second vertex of $B$ cannot appear later on the path because there would be a shorter path via $u$. Two adjacent vertices of $B$ on the path give type \textbf{(2c)}~$\langle b_1,b_2,c_1,c_2,c_3 \rangle$, and no other option exists. If we have three vertices in $B$ on the path the only option is type \textbf{(2d)}~$\langle b_1,b_2,b_3,c_1,c_2 \rangle$, because if some vertex of $B$ appears later on the path there would be a shorter path via $u$. There can be at most three vertices in $B$ on the path, because distances between vertices of $B$ are at most 2 via $u$. Hence, these are all the options for a path with one endpoint in $B$ and one in $C$.\\
		\textbf{(3)} The last option is for the path to have both endpoints in $C$. Again, there are multiple cases depending on the number of vertices of $B$ on the path. There cannot be zero vertices of $B$ on the path, because $G[C]$ is $P_4$-free. If there is only one vertex of $B$ on the path, we have type \textbf{(3a)}~$\langle c_1,c_2,b,c_3,c_4 \rangle$ and type \textbf{(3b)}~$\langle c_1,b,c_2,c_3,c_4 \rangle$, and we cannot pass by $u$. If we have exactly two vertices of $B$ on the path and they are non-adjacent, we have type \textbf{(3c)}~$\langle c_1,b_1,u,b_2,c_2 \rangle$ and type \textbf{(3d)}~$\langle c_1,b_1,c_2,b_2,c_3 \rangle$. If we have two vertices of $B$ on the path and they are adjacent, we have type \textbf{(3e)}~$\langle c_1,c_2,b_1,b_2,c_3 \rangle$. If we have exactly three vertices of $B$ on the path we have type \textbf{(3f)}~$\langle c_1,b_1,b_2,b_3,c_2 \rangle$, which is clearly the only option. There can be at most three vertices in $B$ on the path, because distances between vertices of $B$ are at most 2 via $u$. Hence, these are all the options for a path with both endpoints in $C$.\\
		Because we listed all length-4 shortest paths with one endpoint in $C$ and the other endpoint in $u$, $B$, or $C$, and $V = \{u\} \cup B \cup C$, these must be all the options for a length-4 shortest path to appear in $G$.
		
		To prove the computability of some types, the following structural observation will come in useful.
		\begin{claim}\label{clm:completetoallC}
			If a vertex $b\in B$ has a neighbour and a non-neighbour in a single component of $G[C]$, then $b$ is complete to all other components of $G[C]$.
		\end{claim}
		\begin{proof}
			Let $C_1\subseteq C$ denote the vertices of a component of $G[C]$ such that $b$ has a neighbour and a non-neighbour in $C_1$. Let $c_1'$ denote a neighbour of $b$ and $c_2'$ a non-neighbour. There must be a path from $c_1'$ to $c_2'$ within $G[C_1]$, because $G[C_1]$ is a connected component. But then there must be two vertices $c_1,c_2\in C_1$ with $(c_1,c_2) \in E$ on this path such that $c_1$ is a neighbour of $b$ and $c_2$ is a non-neighbour of $b$.
			Notice that $\langle c_1, c_2, b, u\rangle$ forms an induced $P_4$. Consider a vertex $c_3 \in C$ in another component of $G[C]$. It has no edges to either $c_1$ or $c_2$, as it is in another component, and it is non-adjacent to $u$, as it is in $C$. But then, as the graph is $(P_4+P_1)$-free, $(c_3,b)$ must be an edge of the graph.
		\end{proof}
		
		We show we can decide whether the diameter of $G$ is 4 in time $O(n+m)$ by resolving each type. Algorithmically speaking, only types \textbf{(1a)}, \textbf{(2a)}, \textbf{(3b)}, and \textbf{(3c)} will require computation to find diametral paths corresponding to them. We will show that all other types are either covered by these computations, or are non-existent in $G$.
		\paragraph{(1a)~$\langle u,b,c_1,c_2,c_3 \rangle$} This type is identified by the initial BFS from $u$ if and only if it occurs in $G$.
		\paragraph{(2a)~$\langle b_2,u,b_1,c_1,c_2 \rangle$} For this type, let us further partition $C$ into sets $C'$ and $D$, where $D$ consists of the vertices with no neighbours in $B$ (i.e.\ the vertices at distance 3 from $u$), and $C' = C \setminus D$. It must be that $c_2 \in D$ for a diametral path of type \textbf{(2a)} to exist, otherwise, $\langle c_2, b_i, u, b_2 \rangle$ is a shorter path for some $b_i\in B$. So rename the vertex $c_2$ as $d = c_2 \in D$ and look for a path $\langle b_2,u,b_1,c_1,d \rangle$. Further partition $C'$ into $C_1$ and $C_2$, where $C_1$ are the vertices with edges towards $D$ and $C_2 = C' \setminus C_1$. This partitioning can be done during the BFS from $u$, or alternatively using another linear pass over all vertices and edges. Note that every vertex in $D$ has at least one neighbour in $C_1$; otherwise, a diametral path of type \textbf{(1a)} exists in $G$, and we are done.
		Let us first describe the algorithmic steps necessary for this type.
		\paragraph{(2a.algorithm)}
		Find a vertex $d\in D$ with the smallest degree with respect to $C_1$, and execute a BFS from $d$. If a distance-4 vertex is found, return that the diameter of $G$ is 4. If we do not find a vertex at distance 4, no distance-4 diametral path of type \textbf{(2a)} exists in $G$.
		\medskip
		
		We next prove correctness of \textbf{(2a.algorithm)}. We prove correctness when $G[D]$ is not connected in \textbf{(2a.1)} and correctness when $G[D]$ is connected in \textbf{(2a.2)}. To do this, we analyse the structure of $G$ under the assumption that a diametral path of type \textbf{(2a)} exists, to conclude the structure of $G$ must then be `simple' in some way, to the extent that the above algorithmic steps suffice.
		
		\paragraph{(2a.1)} Assume $G[D]$ is not connected. We first prove that every vertex in $C_1$ is complete to $D$. Assume for sake of contradiction that there is a $c\in C_1$ which is not complete to $D$. Let $d' \in D$ be a non-neighbour of $c$. Let $d\in D$ be some neighbour of $c$, which exists because $c\in C_1$. Now $\langle d,c,b,u\rangle$ is an induced $P_4$ for some $b\in B$ which exists by definition of $C_1$. But then $d'$ must be a neighbour of $d$; otherwise, it would induce a $P_4+P_1$. We see that every neighbour of $c$ is adjacent to every non-neighbour of $c$ in $D$. We get a contradiction with the assumption that $G[D]$ is not connected.
		
		So, every vertex in $C_1$ must be complete to $D$. But then, from the viewpoint of shortest paths, for any $b\in B$, the distances from all $d\in D$ to $b$ must be equal, as the shortest path to $b$ must go through some vertex of $C_1$ in the first step, and $C_1$ is complete to $D$. Hence, if the BFS that \textbf{(2a.algorithm)} executes does not find a diametral path of length 4, no diametral path of type \textbf{(2a)} exists in $G$.
		
		\paragraph{(2a.2)} Assume $G[D]$ is connected. Then $G[C_1 \cup D]$ is a connected cograph, so it has diameter at most 2. Let $B_1 \subseteq B$ be the vertices of $B$ with neighbours in $C_1$. Vertices in $B_2 = B \setminus B_1$ have no neighbours in $C_1$. Every vertex in $B_1$ has distance at most 3 to any $d\in D$, as the diameter of $G[C_1\cup D]$ is at most 2. So, for a shortest path of type \textbf{(2a)} $\langle b_2,u,b_1,c_1,d \rangle$ we get $b_2\in B_2$ and $b_1\in B_1$.
		
		Let us call a pair $(d,b_2)$ with $d\in D$, $b_2\in B_2$ `good' if $d$ has distance 4 to $b_2$, and `bad' when $d$ has distance at most $3$ to $b_2$.
		Assuming a diametral path of form $\langle b_2,u,b_1,c_1,d \rangle$ exists in $G$, with $b_2\in B_2$, $b_1 \in B_1$, $c_1\in C_1$, $d\in D$, it is clear that $(d,b_2)$ is good.
		Assume that we also have that $(d',b_2)$ is bad for some $d \neq d' \in D$. Then $d'$ has distance exactly 3 to $b_2$, as $b_2$ has no neighbour in $C_1$. Let $c_1'$ be the neighbour of $d'$ on some distance-3 path from $d'$ to $b_2$. Then $c_1 \neq c_1'$; otherwise, $d$ has distance 3 to $b_2$. Then the shortest path from $d'$ to $b_2$ is of the form $\langle d', c_1', c_2', b_2 \rangle$ with $c_2'\in C_2$ (case \textbf{(2a.2.1)}), or $\langle d', c_1', b_1', b_2 \rangle$ with $b_1' \in B_1$, $b_1'\neq b_1$ (case \textbf{(2a.2.2)}). See Figure~\ref{fig:Case2aStructure} for an illustration of both scenarios.
		
		\begin{figure}[t]
			\centering
			\includegraphics{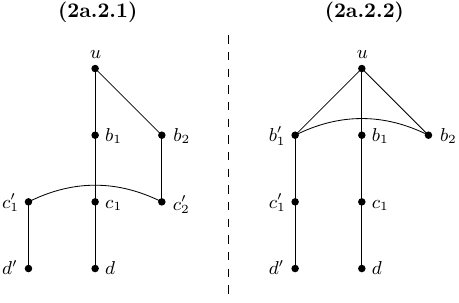}
			\caption{Structure for type \textbf{(2a)} with respect to a $d\in D$ and a $d' \in D$ for which $(d,b_2)$ is good and $(d',b_2)$ is bad for $b_2 \in B_2$. The path from $d'$ to $b_2$ goes through either some $c_2'\in C_2$ (left, \textbf{(2a.2.1)}) or some $b_1'\in B_1$ (right, \textbf{(2a.2.2)}).
			}\label{fig:Case2aStructure}
		\end{figure}
		
		\paragraph{(2a.2.1)} The shortest path from $d'$ to $b_2$ is of the form $\langle d', c_1', c_2', b_2 \rangle$ with $c_2'\in C_2$.
		Note that $(c_1,u),(c_1',u)\notin E$ by definition of $C$, and $(c_1,b_2),(c_1',b_2)\notin E$ by definition of $B_2$, and $(b_1,b_2), (d,c_1'), (c_1,c_2'), (d,c_2')\notin E$ as $(d,b_2)$ is good. But then $(c_1',c_1) \in E$ or $(c_1', b_1)\in E$; otherwise, $\langle c_1, b_1, u, b_2 \rangle + c_1'$ is an induced $P_4 + P_1$ in the graph.
			\begin{itemize}
				\item If $(c_1',c_1) \in E$, then $\langle c_2', c_1', c_1, d \rangle + u$ is an induced $P_4 + P_1$ in the graph; $u$ is non-adjacent to all of $c_2',c_1',c_1,d$ by definition of $C$ and $D$.
				\item If $(c_1', b_1)\in E$ then $\langle b_2, u, b_1, c_1' \rangle + d$ is an induced $P_4 + P_1$ in the graph; $(c_1',u)\notin E$ by definition of $C$, and $(d,b_1),~(d,b_2),~(d,u)~\notin~E$ by definition of $D$.
			\end{itemize}
		Hence, we get a contradiction, and this case cannot occur in $G$.
		
		\paragraph{(2a.2.2)} The shortest path from $d'$ to $b_2$ is of the form $\langle d', c_1', b_1', b_2 \rangle$ with $b_1' \in B_1$, $b_1'\neq b_1$.
		Note that $(c_1,u)\notin E$ by definition of $C$, $(d',b_1),(d',b_2),(d',u)~\notin~E$ by definition of $D$, and $(b_1,b_2),(c_1,b_2)\notin E$, as $(d,b_2)$ is good. But then $(d',c_1)\in E$: otherwise, $\langle c_1, b_1, u, b_2 \rangle + d'$ is an induced $P_4 + P_1$ in the graph.
	
		Say that $d$ has some other neighbour $c_1'' \in C_1$, so $(c_1'', b_1'') \in E$ for some $b_1''\in B_1$ (possibly $b_1'' = b_1$), but $(b_1'',b_2)\notin E$ because $(d,b_2)$ is good. Then, $c_1''$ can fulfil the role of $c_1$ in the above analysis, so it must be that $(d', c_1'') \in E$.
		From this analysis we can conclude that, for any $b_2 \in B_2$, if $d\in D$ is such that $(d,b_2)$ is good, and $d'\in D$ is such that $(d',b_2)$ is bad, then it must be that $N(d)\cap C_1 \subset N(d')\cap C_1$. Hence, it is only worth to consider some $d \in D$ to be good if it has minimum degree to $C$.
		
		We now have the tool to prove correctness of \textbf{(2a.algorithm)} for this case. If the BFS from the picked $d\in D$ finds a vertex at distance 4, then clearly a length-4 diametral path exists in $G$. The only risk is that we conclude there is no diametral path corresponding to this case even though it does exist. To this end, let $d\in D$ be the vertex picked by the algorithm, and assume $(d,b)$ is bad for all $b\in B_2$. Assume to the contrary that there exist $d' \in D$, $b_2\in B_2$ which are at distance 4 in a path of type \textbf{(2a)}. Then $(d', b_2)$ is good. So, by the analysis above, it must be that $N(d')\cap C_1 \subset N(d)\cap C_1$, which contradicts the assumption that $d$ was picked to have the minimal size neighbourhood with respect to $C_1$. So all pairs $(d,b_2)$ with $d\in D$, $b_2\in B_2$ must be bad, and we are correct to conclude that a diametral path of type \textbf{(2a)} does not exist.

		\paragraph{(2b)~$\langle b_1,c_1,b_2,c_2,c_3 \rangle$} As $b_1$ and $b_2$ are both adjacent to $u$, if type \textbf{(2b)} exists in $G$, $\langle b_1,u,b_2,c_2,c_3 \rangle$ is also a shortest path of distance 4 from $b_1$ to $c_3$, which is a distance-4 path of type \textbf{(2a)}. The algorithm for type \textbf{(2a)} finds a diametral path of length 4 or concludes that no diametral path of type \textbf{(2a)} exists in $G$, which also rules out that a diametral path of type \textbf{(2b)} exists in $G$.
		\paragraph{(2c)~$\langle b_1,b_2,c_1,c_2,c_3 \rangle$} We analyse the structure of the graph if type \textbf{(2c)} exists in $G$. Notice first that $c_3$ cannot be adjacent to any $b\in B$, as that would make the distance from $c_3$ to $b_1$ at most 3 via $u$. Now, either the distance from $u$ to $c_3$ is also 4, which is identified by the algorithm for type \textbf{(1a)} if it exists, or $c_2$ has a neighbour $b_3 \in B$. Note that $b_3 \neq b_1$ and $b_3 \neq b_2$, as the distance from $b_1$ to $c_3$ is 4. If $(b_1,b_3) \in E$, then the distance from $b_1$ to $c_3$ is not 4. So, $(b_1,b_3) \notin E$. But then $\langle b_1, u, b_3, c_2, c_3\rangle$ is also a shortest path from $b_1$ to $c_3$ of distance 4, which is a length-4 shortest path of type \textbf{(2a)}. The algorithm for type \textbf{(2a)} finds a diametral path of length 4 or concludes that no diametral path of type \textbf{(2a)} exists in $G$, which also rules out that a diametral path of type \textbf{(2c)} exists in $G$.
		\paragraph{(2d)~$\langle b_1,b_2,b_3,c_1,c_2 \rangle$} As all $b\in B$ are adjacent to $u$, if a shortest path of type \textbf{(2d)} exists, $\langle b_1,u,b_3,c_1,c_2 \rangle$ is also a shortest path of length 4 from $b_1$ to $c_2$, which is a length-4 shortest path of type \textbf{(2a)}. The algorithm for type \textbf{(2a)} finds a diametral path of length 4 or concludes that no diametral path of type \textbf{(2a)} exists in $G$, which also rules out that a diametral path of type \textbf{(2d)} exists in $G$.
		\paragraph{(3a)~$\langle c_1,c_2,b,c_3,c_4 \rangle$} If a diametral path of type \textbf{(3a)} exists, notice that there are no edges between $c_1,c_2$ and $c_3,c_4$, and $c_1$ is not adjacent to $b$, as otherwise the distance from $c_1$ to $c_4$ is not 4. Then, $\langle c_1, c_2, b, u\rangle$ together with $c_4$ form an induced $P_4 + P_1$ in the graph, a contradiction. So, this type cannot occur in a $(P_4+P_1)$-free graph.
		\paragraph{(3b)~$\langle c_1,b,c_2,c_3,c_4 \rangle$} We first give an algorithm and then prove its correctness.
		
		\paragraph{(3b.algorithm)} Check in $O(n + m)$ time that there exists at least one and at most two components $C_1,C_2\subseteq C$ of $G[C]$ with each at least one vertex $b \in B$ such that $b$ has an edge and a non-edge to that component, by enumerating the adjacencies of vertices $b\in B$. If this is not the case, no diametral path of type \textbf{(3b)} exists in $G$. Let $B_{C_1}, B_{C_2} \subseteq B$ be the set of vertices with an edge and a non-edge to $C_1$ and $C_2$, respectively, which we can find in $O(n+m)$ time. If only one such component exists, execute the following steps only for $C_1$ and $B_{C_1}$. For $i\in{1,2}$, mark all vertices in $C_i$ with a neighbour in $B_{C_i}$, and then mark all vertices in $C_i$ with a marked neighbour. This takes $O(n+m)$ time. Pick an arbitrary unmarked vertex in $C_1$ and one in $C_2$, and execute a BFS from each. Return a distance-4 shortest path if it is found. Otherwise, we conclude that no diametral path of type \textbf{(3b)} exists in $G$.
		
		\begin{figure}[t]
			\centering
			\includegraphics{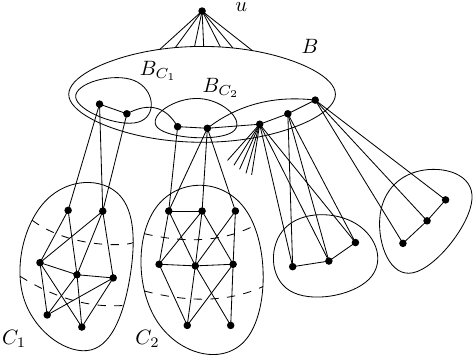}
			\caption{Structure for type \textbf{(3b)}, where for $i\in \{1,2\}$, vertices in $B_{C_i}$ are complete to all components of $G[C]$ except for $C_i$, due to Claim~\ref{clm:completetoallC}. $C_1$ and $C_2$ have layers corresponding to which vertices are marked by the algorithm, with the bottom layer consisting of unmarked vertices.
			}\label{fig:Case3bStructure}
		\end{figure}
		\medskip
		
		We next prove the correctness of this algorithm. See Figure~\ref{fig:Case3bStructure} for an illustration.
		If there is no component like $C_1$ or $C_2$, no shortest path of the form $\langle c_1,b,c_2,c_3,c_4 \rangle$ exists, as it must be that $(b,c_2), (c_2,c_3) \in E$ while $(b,c_3) \notin E$ in the diametral path of this type. If there are at least three different components of $G[C]$, $C_1,C_2,C_3\subseteq C$, which each have a vertex in $B$ with an edge and a non-edge to that component, let these be $b_{C_1},b_{C_2},b_{C_3}$ respectively. The distance between any two $c_1,c_2\in C$ is now at most 2, as by Claim~\ref{clm:completetoallC}, $b_{C_1},b_{C_2},b_{C_3}$ are complete to all components except $C_1,C_2,C_3$, respectively. So, each pair $c_1,c_2\in C$ has at least one common neighbour among $b_{C_1},b_{C_2},b_{C_3}$. So a diametral path of this type cannot occur in $G$. Hence, we are correct to look for at least one and at most two such components. We also see that one end of the diametral path must be in $C_1$ or $C_2$, if it exists. 
		
		For $i\in \{1,2\}$, all vertices in $C_i$ marked by the algorithm have distance at most 3 to all $c\in C$: vertices in the same component are at distance at most~3, as $G[C]$ is $P_4$-free. Next to this, marked vertices are at distance at most 2 from some vertex in $B_{C_i}$, and by Claim~\ref{clm:completetoallC} vertices in $B_{C_i}$ are complete to all other components of $G[C]$. So, the only candidates in $C_1$ or $C_2$ for one end of the distance-4 diametral path are the unmarked vertices. Assume that the algorithm does not find a distance-4 shortest path, but a diametral path of type \textbf{(3b)} does exist in $G$. Assume one end is in $C_1$, and let $c_1 \in C_1$ be the vertex the algorithm picked in $C_1$. The case where one end is in $C_2$ is analogous. Note that all vertices not in $B_{C_1}$ are either complete or anti-complete to $C_1$. Also, unmarked vertices in $C_1$ are at distance exactly 2 from all vertices in $C_1\cap N(B_{C_1})$, as $G[C]$ is $P_4$-free. We see that all unmarked vertices in $C_1$ have equal distance to each $b\in B$, and in particular, to the $b\in B$ the diametral path uses. Also, as the other end of the diametral path is in another component, the distance to that vertex is equal for every unmarked vertex in $C_1$. But then the BFS from $c_1$ would have found a distance-4 shortest path, a contradiction. We can conclude the algorithm is correct, that is, if the algorithm does not find a distance-4 shortest path, no diametral path of type \textbf{(3b)} exists in $G$.
		\paragraph{(3c)~$\langle c_1,b_1,u,b_2,c_2 \rangle$} First note that $G[C]$ must consist of multiple connected components for a diametral path of type \textbf{(3c)} to occur: a connected $P_4$-free graph has diameter at most 2, so the shortest path between $c_1,c_2\in C$ will never go through $u$.
		
		By Claim~\ref{clm:completetoallC} we have that any vertex in $B$ is either complete or anti-complete to every connected component of $G[C]$, or is not complete nor anti-complete to exactly one connected component of $G[C]$ and complete to all other connected components of $G[C]$. We distinguish cases \textbf{(3c.1)} and \textbf{(3c.2)}.
		
		\paragraph{(3c.1)} Every $b\in B$ is complete or anti-complete to every component of $G[C]$. It can be checked in $O(n+m)$ time whether we are in this case. Then each component of $G[C]$ is a twin class with respect to $B$. We first give the algorithm to solve this case, and then prove its correctness.
		
		\paragraph{(3c.1.algorithm)}
		\begin{figure}
			\centering
			\includegraphics[width=\textwidth]{./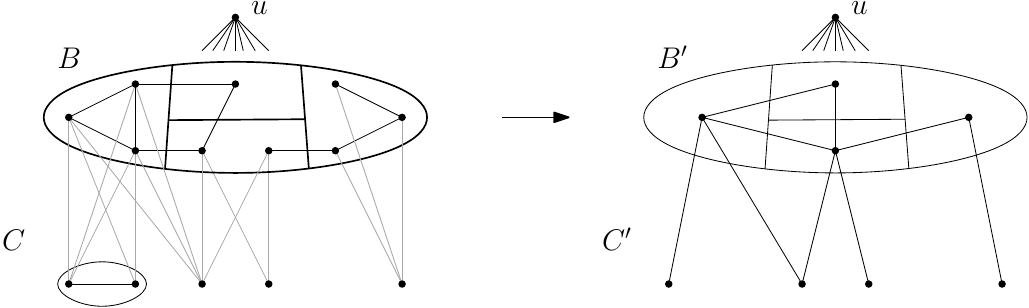}
			\caption{An illustration of the conversion of a graph $G$ (left) to a graph $G'$ (right) as in \textbf{(3c.1.algorithm)}. Note that for image simplicity, the drawn graph $G$ is not actually a $(P_4+P_1)$-free graph.}\label{fig:Case3cStructure}
		\end{figure}
		For every component of $G[C]$, delete all-but-one vertex in $O(n+m)$ time. Let $C'$ be the resulting set of vertices.
		Identify twins within $B$ with respect to their neighbourhood in $C'$, which can be done in linear time by Corollary~\ref{cor:bipartitetwins}. We identify every `class' of vertices with the same neighbourhood in $C'$ with single vertices. Let $B'$ be the set of vertices corresponding to classes. We compute the neighbourhoods of the vertices in $B'$ as the union of all neighbourhoods of vertices in the class. In particular, the edges of a $b'\in B'$ towards $C'$ are simply the neighbourhood towards $C'$ of an arbitrary $b\in B$ from that class, as every vertex in a class has an identical neighbourhood to $C$. To compute the adjacency list of vertices in $B'$, first initialize an array on the set of vertices of $B'$, in linear time. We use the array to avoid adding duplicate edges to adjacency lists, but only initialize it once. For each class $b_1' \in B'$, associate a unique label with it, and enumerate the edge lists of all $b_1\in B$ in the corresponding class. For an edge $(b_1,b_2)$, $b_2\in B$ with class $b_2'\in B'$, do the following. If the array does not contain the unique label of $b_1'$ at location $b_2'$, overwrite the location $b_2'$ with the unique label of $b_1'$, and add $b_2'$ to the adjacency list of $b_1'$. If the array does contain the label of $b_1'$ at location $b_2'$, continue to the next edge. We see that we enumerate all edges in $G[B]$ at most twice, and do constant-time operations for each one. Hence, this procedure takes $O(n+m)$ time. Call the resulting graph $G' = (V',E')$ with $B', C'\subseteq V'$. An illustration of the conversion of a graph $G$ to $G'$ is given in Figure~\ref{fig:Case3cStructure}. In $G'$, check that $G'[B']$ consists of two non-empty disjoint cliques $B_1,B_2$, where $B_1$ and $B_2$ are anti-complete, with possibly a third clique $X$ complete to $B'\setminus X$. This check takes $O(n+m)$ time by inspecting the neighbourhoods of all vertices in $B'$. If $G'[B']$ does not have this structure, return that there is no diametral path of type \textbf{(3c)}. If it does, return that a diametral path of type \textbf{(3c)} exists if there are vertices $c_1,c_2\in C'$ with $N(c_1) = B_1$ and $N(c_2) = B_2$. This can be checked in linear time by inspecting the edge lists of vertices $c\in C'$.
		\medskip

		We next prove the correctness of the algorithm. We do this by proving several properties of $G'$, to then analyse the structure of $G'$ using these properties.
		\begin{claim}\label{clm:GG'equiv2}
			A diametral path of the form $\langle c_1,b_1,u,b_2,c_2 \rangle$ exists in $G$ if and only if a diametral path of the form $\langle c_1',b_1',u,b_2',c_2' \rangle$ exists in $G'$, with $c_1',c_2'\in C'$ and $b_1',b_2'\in B'$.
		\end{claim}
		\begin{proof}
			First, note that the process to get $G'$ from $G$ can be modelled as edge contractions and vertex identifications, which means the diameter of $G'$ cannot be larger than that of $G$, and is at most 4.

			Assume a diametral path of the form $\langle c_1,b_1,u,b_2,c_2 \rangle$ exists in $G$, with $c_1,c_2\in C$ and $b_1,b_2\in B$. 
			Then the distance from $c_1$ to $c_2$ is 4 in $G$, and $(b_1,c_2)\notin E$, $(b_2,c_1)\notin E$, $(b_1,b_2)\notin E$, and $c_1,c_2$ do not have a common neighbour in $G$. 
			Let $b_1',b_2'\in B'$ be the corresponding vertices of $b_1,b_2$ respectively in $G'[B']$, and let $c_1',c_2'\in C'$ be the vertices in the components of $c_1,c_2$ in $G[C]$ that are in $C'$. 
			If $c_1$ and $c_2$ are in the same component of $G[C]$, then the distance between $c_1$ and $c_2$ is less than $4$ because $G[C]$ is $P_4$-free. So, $c_1$ and $c_2$ must not be in the same component of $G[C]$ and hence, $c_1'\neq c_2'$. Also, $b_1'\neq b_2'$, as $b_1$ and $b_2$ do not have identical neighbourhoods w.r.t.\ $C$. 
			It also holds that $(b_1',b_2')\notin E'$ as the distance between $c_1$ and $c_2$ is 4 in $G$, and so $N(c_1)$ and $N(c_2)$ are disjoint in $G$. 
			Hence, $\langle c_1',b_1',u,b_2',c_2' \rangle$ is an induced path in $G'$, but it remains to show this is the shortest path between $c_1'$ and $c_2'$ and the distance between these vertices is 4. 
			Assume to the contrary the distance from $c_1'$ to $c_2'$ is at most 3. As $c_1'\neq c_2'$, the distance is at least 2. 
			The distance between $c_1$ and $c_2$ is 4 in $G$, so $N[c_1]$ and $N[c_2]$ are disjoint, and as every vertex in $B$ is complete or anti-complete to every component of $G[C]$, we see that $N[c_1'] $ and $N[c_2'] $ are disjoint in $G'$. 
			So the distance between $c_1'$ and $c_2'$ is 3. Then, as $G[C']$ is an independent set, there exist $b_3',b_4'\in B'$ with $(b_3',b_4')\in E'$ and $(b_3',c_1)\in E'$, $(b_4',c_2)\in E'$. 
			By the class identification process, we get that $c_1'$ is adjacent to the whole class corresponding to $b_3'$, and similarly, $c_2'$ is adjacent to the whole class corresponding to $b_4'$. 
			But then there must exist $b_3,b_4\in B$ with $(b_3,b_4)\in E$ and $(b_3,c_1)\in E$, $(b_4,c_2)\in E$, a contradiction. So the distance between $c_1'$ and $c_2'$ is also 4 in $G'$, and hence, $c_1'$ and $c_2'$ are a diametral pair in $G'$ with diametral path $\langle c_1',b_1',u,b_2',c_2' \rangle$.
			
			Assume a diametral path of form $\langle c_1',b_1',u,b_2',c_2' \rangle$ exists in $G'$, with $c_1',c_2'\in C'$ and $b_1',b_2'\in B'$. Then the distance from $c_1'$ to $c_2'$ is 4. Let $c_1,c_2\in C$ be the vertices in $G$ corresponding to $c_1'$ and $c_2'$ in $G'$. Because every vertex in $B$ is complete or anti-complete to every component of $G[C]$, the vertex removal to get to $C'$ did not increase the distance between $c_1'$ and $c_2'$ as compared to $c_1$ and $c_2$. As edge/non-edge contraction can never increase the distance between two vertices, the distance between $c_1$ and $c_2$ must be at least 4 in $G$. The distance between $c_1$ and $c_2$ cannot be 5, as $G$ is $(P_4+P_1)$-free. So the distance between $c_1$ and $c_2$ is also 4 in $G$. Let $b_1,b_2\in B$ be two arbitrary vertices of the classes corresponding to $b_1',b_2'$, respectively. Then $(c_1,b_1)\in E$ and $(c_2,b_2)\in E$, as the algorithm identified classes with identical neighbourhoods towards $C$ with single vertices and gave those vertices the same neighbourhood. Also, $(b_1,b_2) \notin E$ as $(b_1',b_2')\notin E$ and $b_1',b_2'$ have as neighbourhoods the union of all neighbourhoods in their respective class. Hence, $\langle c_1,b_1,u,b_2,c_2 \rangle$ is also a diametral path in $G$.
		\end{proof}
		
		\begin{claim}\label{clm:G'noP4+P1}
			$G'$ does not contain a $P_4+P_1$ of the form $\langle c_1',b_1',u,b_2' \rangle + c_2'$, with $c_1',c_2'\in C'$ and $b_1',b_2'\in B'$.
		\end{claim}
		\begin{proof}
			Assume to the contrary that there is such a $P_4+P_1$ in $G'$. Let it be $\langle c_1',b_1',u,b_2' \rangle + c_2'$, where $b_1',b_2'\in B'$ and $c_1',c_2'\in C'$.
			Every component in $G[C]$ is a twin class with respect to $B$, and all vertices in $B$ that get identified with the same vertex in $B'$ have identical neighbourhoods with respect to $G[C]$ and $G[C']$. Hence, the identifications from $G[B]$ to $G[B']$ essentially only adjusted adjacencies within $G[B]$ and removed vertices from $B$. Denote $b_1,b_2$ two vertices in $G$ that end up identified with $b_1',b_2'$ respectively. We immediately have $(c_1,b_1), (b_1,u), (u,b_2)\in E$ and $(c_1,b_2)\notin E$. As $(b_1',b_2')\notin E$, we get that $(b_1,b_2)\notin E$, as the identifications take the union of neighbourhoods. So $\langle c_1,b_1,u,b_2 \rangle$ is an induced $P_4$ in $G$. We also see that $c_2$ is disjoint from this $P_4$, as $c_1'\neq c_2'$ and so $c_1$ and $c_2$ are in different components, and by the previous arguments, $(c_2,b_1), (c_2,b_2)\notin E$. We get an induced $P_4+P_1$ in $G$, a contradiction.
		\end{proof}
		
		\begin{claim}\label{clm:G'nonedgeinB'}
			For every non-edge $(b_1',b_2')\notin E'$ in $G'$, with $b_1',b_2'\in B'$, it holds that $b_1',b_2'$ together dominate $C'$, that is, every $c'\in C'$ is adjacent to $b_1'$ or $b_2'$.
		\end{claim}
		\begin{proof}
			Let $(b_1',b_2')\notin E'$ be an arbitrary non-edge in $G'$. As $b_1'\neq b_2'$, it must be that $N(b_1')\cap C' \neq N(b_2') \cap C'$; otherwise, these two vertices would be the same class. $C'$ is an independent set by construction.
			
			Because $b_1'$ and $b_2'$ have unequal neighbourhoods towards $C'$, at least one of the two has a neighbour in $C'$ which is not a neighbour of the other. Without loss of generality, let this be $b_1'$. So, $b_1'$ has a neighbour $c_1'\in C'$ that is not a neighbour of $b_2'$. Because $(b_1',b_2')\notin E$ and $(b_2',c_1')\notin E$, we have that $\langle c_1', b_1', u, b_2' \rangle$ is an induced $P_4$ in $G'$. Hence, every other $c' \in C'$ must be adjacent to some vertex of the $P_4$; otherwise, this $c'$ forms an induced $P_4 + P_1$ which cannot be present in $G'$ by Claim~\ref{clm:G'noP4+P1}. The neighbour of any $c'$ cannot be $c_1'$, as $C'$ is an independent set. The neighbour of any $c'$ cannot be $u$, because of the definition of $C'$. So every other $c'\in C'$ is adjacent to $b_1'$ or $b_2'$.
		\end{proof}
		
		By Claim~\ref{clm:GG'equiv2} we can look for the diametral path of type \textbf{(3c)} in $G'$ to determine its presence in $G$. To this end, call a non-edge $(b_1,b_2)\notin E'$ in $G'[B']$ `good' if a diametral path of type \textbf{(3c)} exists with $b_1,b_2$ as its vertices in $B'$. We analyse the structure of $G'[B']$ assuming a good non-edge exists.
		
		\begin{claim}\label{clm:G'B'structure}
			In $G'$, if a good non-edge $(b_1,b_2)\notin E'$ exists, then (a) no non-edge triangle is contained in $G'[B']$ as an induced subgraph, (b) every vertex $b_3\in B'\setminus \{b_1,b_2\}$ is adjacent to at least one of $b_1,b_2$, and (c) any common neighbour $x\in B'$ of $b_1$ and $b_2$ is complete to $B'\setminus \{x\}$.
		\end{claim}
		\begin{proof}
			Let $(b_1,b_2)\notin E'$ be a good non-edge, for $b_1,b_2\in B'$.
			
			To prove (a), assume there is a non-edge triangle formed by the vertices $a_1,a_2,a_3\in B'$. Apply Claim~\ref{clm:G'nonedgeinB'} to all three of the non-edges in this non-edge triangle to see that every vertex in $C$ is adjacent to at least two of $a_1,a_2,a_3$. But then $(b_1,b_2)$ is not good, as any two $c_1,c_2\in C'$ have a common neighbour among $a_1,a_2,a_3$. This is a contradiction with the assumption that $(b_1,b_2)$ is good.
			
			To prove (b), assume there is another vertex $b_3\in B'$. If $b_3$ is not a neighbour of both $b_1$ and $b_2$, we immediately have a non-edge triangle formed by the non-edges $(b_1,b_2),(b_1,b_3),(b_2,b_3)\notin E'$, a contradiction.

			To prove (c), assume $b_1,b_2$ have a common neighbour $x \in B'$. Assume for sake of contradiction that there is another vertex $b_3 \in B'$ not adjacent to $x$. By (b), $b_3$ is adjacent to at least one of $b_1,b_2$.\\
			If $b_3$ is adjacent to only one of $b_1,b_2$, say $(b_1,b_3)\in E'$, then
			by applying Claim~\ref{clm:G'nonedgeinB'} to $(x,b_3)\notin E'$ we see that every vertex of $C' \setminus N(b_1)$ is also adjacent to either $x$ or $b_3$. But then $b_1$ has distance at most 2 to every $c\in C'$, so $(b_1,b_2)$ is not good, a contradiction.\\
			If instead $b_3$ is adjacent to $b_1,b_2$ but not $x$, then, similar to before, applying Claim~\ref{clm:G'nonedgeinB'} to the non-edge $(x,b_3)$ gives us that $b_1$ has distance at most 2 to every $c\in C'$. This is a contradiction with the assumption that $(b_1,b_2)$ is good.\\
			We get that $b_3$ must be adjacent to $x$. This holds for every $b_3\in B'\setminus \{x,b_1,b_2\}$, so $x$ is complete to $B'\setminus \{x\}$, and this holds for every common neighbour $x\in B'$ of $b_1$ and $b_2$.
		\end{proof}
		
		By Claim~\ref{clm:G'B'structure}, $G'[B']$ admits very strong structural properties if a good non-edge exists. Assume a good non-edge $(b_1,b_2)\notin E'$ exists in $G'[B']$, for $b_1,b_2\in B'$. Denote by $X\subseteq B'$ the set of vertices of $B'$ complete to $B'$, and note that this is the same $X$ as the $X$ that appears in the algorithm. We further analyse the structure of $G'$.
		
		Denote by $B_1 \subseteq B'$ ($B_2\subseteq B'$) the subset of $B'$ adjacent to $b_1,X$ ($b_2,X$) but not $b_2$ ($b_1$), including $b_1$ ($b_2$). Note that, by Claim~\ref{clm:G'B'structure}~(b) and (c), $X,B_1,B_2$ partition $B'$.
		Say that for $b_3\in B_1$ and $b_4\in B_2$ we have $(b_3,b_4)\in E'$. Because we assumed $(b_1,b_2)\notin E'$, we know that $b_3,b_4$ are not both equal to $b_1,b_2$. If exactly one of $b_3 = b_1$ and $b_2=b_4$ holds, then the other of $b_3,b_4$ is adjacent to both $b_1$ and $b_2$, which is a contradiction with $b_3,b_4\notin X$. We may conclude that $b_3\neq b_1$ and $b_4\neq b_2$.
		By construction of $B_1$ and $B_2$, $(b_2, b_3)\notin E'$ and $(b_1,b_4)\notin E'$.
		By applying Claim~\ref{clm:G'nonedgeinB'} to $(b_2, b_3)\notin E'$, we get that any non-neighbour of $b_2$ in $C'$ is adjacent to $b_3$. By applying Claim~\ref{clm:G'nonedgeinB'} to $(b_1,b_4)\notin E'$, get that any non-neighbour of $b_1$ in $C'$ is adjacent to $b_4$. We see that the neighbourhood of any $c\in C'$ contains at least one of the sets $\{b_1,b_3\}$, $\{b_2,b_4\}$, and $\{b_1,b_2\}$. But because $(b_1,b_3), (b_2,b_4), (b_3,b_4)\in E'$, we get that the distance between any two vertices in $C'$ is at most 3. This is a contradiction with the assumption that $(b_1,b_2)$ is good. We see that there are no edges between $B_1$ and $B_2$.
		Any non-edge within $G'[B_1]$ or $G'[B_2]$ creates a non-edge triangle with $b_2$ or $b_1$, respectively, so then $(b_1,b_2)$ is not good by Claim~\ref{clm:G'B'structure}~(a).
		So, if we have a good non-edge $(b_1,b_2)$, then $B'$ must be a partition of three cliques $B_1$, $B_2$, $X$, where $X$ is complete to $B'\setminus X$ and $B_1,B_2$ have no edges between them.
	
		We can apply Claim~\ref{clm:G'nonedgeinB'} to every pair in $B_1\times B_2$ to see that for every $c\in C'$ it holds that $N(c) \supseteq B_1$ or $N(c) \supseteq B_2$. Then any vertex $c\in C'$ with $N(c) \supsetneq B_1$ or $N(c) \supsetneq B_2$ has distance at most 3 to all vertices in $C'$, as vertices in $X$ are complete to $B'$ and $G'[B_1]$ and $G'[B_2]$ are cliques.

		Hence, if this partition exists, then indeed a diametral path of type \textbf{(3c)} exists if and only if there exist $c_1,c_2\in C$ with $N(c_1) = B_1$, $N(c_2) = B_2$, as the distance between any $b_1\in B_1$ and $b_2 \in B_2$ is $2$, and so $c_1$ to $c_2$ must have distance 4. If the partition does not exist as described, then there is no good non-edge, and no length-4 diametral path of type \textbf{(3c)} exists. The algorithm looks for exactly the partition $B_1,B_2,X$ as described and checks the existence of $c_1,c_2\in C$ with $N(c_1) = B_1$, $N(c_2) = B_2$, which is now proven to be correct.
		
		We conclude that \textbf{(3c.1.algorithm)} only returns that there is no diametral path of length 4 when there does not exist a diametral path of type \textbf{(3c)} in $G$.
		\paragraph{(3c.2)} In this case, we may assume that there exists a $b\in B$ that has an edge and a non-edge to one component of $G[C]$.
		Assume that a diametral path of type \textbf{(3c)} exists, let it be $\langle c_1, b_1, u, b_2, c_2 \rangle$ for $c_1,c_2\in C$, $b_1,b_2\in B$. Let $b\in B$ be a vertex with an edge and a non-edge to one component of $G[C]$, denote $C_1$ the vertices of this component. By Claim~\ref{clm:completetoallC}, $b$ is complete to all components of $G[C \setminus C_1]$. Now at least one of $c_1,c_2$ must be in $C_1$, as any pair of vertices from $C\setminus C_1$ are at distance at most 2 due to $b$. It cannot be that both $c_1\in C_1$ and $c_2 \in C_1$, because $c_1$ and $c_2$ are at distance 4, and the component $G[C_1]$ must be $P_4$-free. Say that $c_1 \in C_1$ and $c_2 \notin C_1$. We get that $(b,c_2)\in E$. As the shortest path between $c_1$ and $c_2$ has length 4, it must be that $c_1$ is at distance at least 3 from $b$. $G[C_1]$ is $P_4$-free and $b$ has an edge to some vertex of $C_1$, so $c_1$ has to be at distance exactly 3 from $b$. But then there are vertices $c_3,c_4\in C_1$ such that $\langle c_1, c_3, c_4, b, c_2 \rangle$ is an induced shortest path from $c_1$ to $c_2$. This corresponds to a diametral path of type \textbf{(3b)}. The algorithm for type \textbf{(3b)} finds a diametral path of length 4 or concludes that no diametral path of type \textbf{(3b)} exists in $G$, which also rules out that a diametral path of type \textbf{(3c)} exists in $G$.
		\paragraph{(3d)~$\langle c_1,b_1,c_2,b_2,c_3 \rangle$} As $b_1$ and $b_2$ are both adjacent to $u$, if a diametral path of type \textbf{(3d)} exists, $\langle c_1,b_1,u,b_2,c_2 \rangle$ is also a shortest path of distance 4 from $c_1$ to $c_2$ which is of type \textbf{(3c)}. The algorithm for type \textbf{(3c)} finds a diametral path of length 4 or concludes that no diametral path of type \textbf{(3c)} exists in $G$, which also rules out that a diametral path of type \textbf{(3d)} exists in $G$.
		\paragraph{(3e)~$\langle c_1,c_2,b_1,b_2,c_3 \rangle$} Notice that in a diametral path of type \textbf{(3e)}, $c_3$ has no edges to either $c_1,c_2,b_1$, and $c_1$ is not adjacent to $b_1$; otherwise, the distance from $c_1$ to $c_3$ is not 4. Now $\langle c_1,c_2,b_1,u\rangle$ and $c_3$ form a $P_4+P_1$. So, this case cannot occur in a $(P_4+P_1)$-free graph.
		\paragraph{(3f)~$\langle c_1,b_1,b_2,b_3,c_2 \rangle$} As $b_1$ and $b_3$ are both adjacent to $u$, if a diametral path of type \textbf{(3f)} exists in $G$, $\langle c_1,b_1,u,b_3,c_2 \rangle$ is also a shortest path of distance 4 from $c_1$ to $c_2$ which is of type \textbf{(3c)}. The algorithm for type \textbf{(3c)} finds a diametral path of length 4 or concludes that no diametral path of type \textbf{(3c)} exists in $G$, which also rules out that a diametral path of type \textbf{(3f)} exists in $G$.
		\paragraph{} We have now shown that, for each possible structure of the diametral path with respect to $u,B,C$, we find no diametral path of length 4 only when that structure does not exist in $G$. So, we have shown that in time $O(n+m)$ we find a distance-4 shortest path if and only if it exists in $G$.
	\end{proof}

	\subsection{\texorpdfstring{$(P_3+2P_1)$}{(P3+2P1)}-free graphs}
	
	\begin{figure}[t]
		\centering
		\includegraphics[width=.6\textwidth]{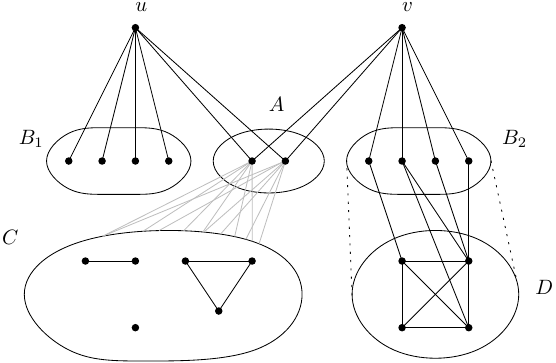}
		\caption{An illustration of the structure of a $(P_3 + 2P_1)$-free graph as in the proof of Theorem~\ref{thm:P3+2P1free}.}\label{fig:P3+2P1structure}
	\end{figure}
	
	\begin{theorem}\label{thm:P3+2P1free}
		Given a $(P_3 + 2P_1)$-free graph $G$, we can decide whether the diameter of $G$ is equal to $d_{\max} = 5$ in $O(n+m)$ time.
	\end{theorem}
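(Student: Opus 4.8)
The plan is to lean on one structural lemma, used over and over: in a $(P_3+2P_1)$-free graph $G$, for any induced $P_3$ with vertex set $T$, the non-neighbourhood $V\setminus N[T]$ cannot contain two non-adjacent vertices, hence $G[V\setminus N[T]]$ is a clique. As usual I would first fix $d_{\max}=5$ via Theorem~\ref{thm:HLongestPath}, dispose of the clique case in $O(n+m)$ time, and remove twins using Proposition~\ref{prop:twinremovaldists}, so that distances, the diameter, and $(P_3+2P_1)$-freeness are all preserved. A companion observation I would record up front is that for any vertex $u$ the set $C:=V\setminus N[u]$ induces a $(P_3+P_1)$-free graph — an induced $P_3$ inside $C$ together with a further vertex of $C$ non-adjacent to it would, together with $u$, be an induced $P_3+2P_1$ — so, exactly as in the proof of Theorem~\ref{thm:P3+P1free}, $G[C]$ is $P_5$-free and decomposes into pairwise-complete ``parts'', each of which is $3P_1$-free or a disjoint union of cliques.

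Next I would pin down the global structure forced by a length-$5$ diametral path $\langle s=v_0,v_1,\dots,v_5=t\rangle$. Applying the lemma to $T=\{v_0,v_1,v_2\}$ and to $T=\{v_3,v_4,v_5\}$ and comparing with distances along the path shows that $V\setminus N[\{v_3,v_4,v_5\}]$ is a clique containing $N[s]$ and contained in $N[s]$, i.e.\ $N[s]=V\setminus N[\{v_3,v_4,v_5\}]$ is a clique, and symmetrically $N[t]$ is a clique; moreover $N[s]\cap N[t]=\varnothing$ and there is no edge between $N[s]$ and $N[t]$ (either would make $d(s,t)\le 3$). Writing $M:=V\setminus(N[s]\cup N[t])$, an induced $P_3$ inside $M$ would — since every vertex of $M$ is non-adjacent to both $s$ and $t$ — force $s$ and $t$ into one clique by the lemma, which is impossible; hence $G[M]$ is $P_3$-free, i.e.\ a disjoint union of cliques. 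Because there are no $N[s]$–$N[t]$ edges, every short $s$–$t$ path runs $s$–$N(s)$–$M$–$\cdots$–$M$–$N(t)$–$t$, and a short enumeration gives that $\diam(G)=5$ exactly when no vertex of $M$ is adjacent both to a neighbour of $s$ and to a neighbour of $t$, while some clique of $G[M]$ contains a vertex adjacent to $N(s)$ and a (different) vertex adjacent to $N(t)$. This is the precise target configuration.

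The algorithmic part is where the work lies, since $s$ and $t$ are not known in advance. I would start a BFS from an arbitrary vertex $u$; if $\mathrm{ecc}(u)=5$ output YES, and if $\mathrm{ecc}(u)\le 2$ output NO. In the remaining cases $\mathrm{ecc}(u)\in\{3,4\}$, and the lemma applied to a $P_3$ anchored at $u$ (for instance $\langle u,a,b\rangle$ with $a$ a BFS-neighbour of a vertex $b$ at distance $2$) shows that the union of the deep BFS layers is a clique, and similarly one gets extra clique structure if $N(u)$ is not a clique. Overlaying this clique structure and the $(P_3+P_1)$-freeness of $C=V\setminus N[u]$ with the diameter-$5$ structure above confines the possible positions of $s$, $t$ and of the connecting $M$-clique, relative to $\{u\}$, $N(u)$, and $C$, to a bounded list of ``types'', in the spirit of Theorem~\ref{thm:P4+P1free}. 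For each type I would either reduce it to another type, exclude it using $(P_3+2P_1)$-freeness, or reduce it to a BFS from $O(1)$ explicitly chosen vertices — typically a minimum-degree vertex inside a suitable layer, whose neighbourhoods are then uniform enough that a single BFS either certifies or refutes that type, as in the $(P_4+P_1)$-free argument. Collecting the outcomes of the BFS runs yields the answer, and the total cost is $O(n+m)$: a constant number of BFS runs plus linear-time twin and false-twin processing (Theorem~\ref{thm:twinremoval}, Corollary~\ref{cor:bipartitetwins}) and layer classification.

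The main obstacle — and the bulk of the write-up — is the exhaustive type analysis in the last step. A length-$5$ path has six vertices and can meet up to six BFS layers, so relative to $\{u\}$, $N(u)$, $C$ there are many patterns in which it can enter and leave $C$, and — unlike the diameter-$4$ situation — $(P_3+2P_1)$-freeness gives the least leverage on the \emph{middle} of the path (around the BFS layer at distance $2$). Making the reductions between types work, and proving that one well-chosen BFS resolves each surviving type, will require applying the clique lemma to several different anchor triples and carefully tracking the disjoint-cliques structure of $C$ and of $M$ throughout.
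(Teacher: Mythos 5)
Your structural groundwork is sound and overlaps with the paper's: the observation that for any induced $P_3$ with vertex set $T$ the set $V\setminus N[T]$ induces a clique is exactly the lever the paper uses (applied to the paths $\langle u,a,v\rangle$), and your derived facts about a diametral pair $s,t$ at distance $5$ ($N[s]$, $N[t]$ disjoint anticomplete cliques, $G[M]$ a disjoint union of cliques, and the characterization of distance $5$ via the $M$-cliques) all check out. However, there is a genuine gap: the entire algorithmic core --- confining the unknown $s,t$ to ``a bounded list of types'' relative to a single BFS root $u$ and showing that each type is resolved by a BFS from $O(1)$ explicitly chosen vertices --- is not carried out; you yourself flag it as ``the bulk of the write-up.'' Without that case analysis there is no concrete algorithm and no correctness proof, since the decisive step in results of this kind is precisely the argument that the candidate endpoints one cannot afford to try individually are all (true or false) twins, so that one minimum-degree representative suffices. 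Your sketch names the right heuristic but never establishes the uniformity that justifies it.

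It is also worth noting that the difficulty you anticipate --- a proliferation of types because $(P_3+2P_1)$-freeness ``gives the least leverage on the middle of the path'' --- is exactly what the paper's setup avoids. Instead of anchoring the analysis on a single vertex $u$, the paper fixes $u$ together with a vertex $v$ at distance $2$ and partitions $V$ into $A=N(u)\cap N(v)$, $B_1$, $B_2$, and $C'=V\setminus(N[u]\cup N[v])$. Then $G[C']$ is immediately a disjoint union of cliques (a $P_3$ in $C'$ plus $u$ plus $v$ is forbidden), and every $a\in A$ sits on the $P_3$ $\langle u,a,v\rangle$, so its non-neighbourhood in $C'$ is a clique. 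This forces any diameter-$5$ witness to end in a single clique $D\subseteq C'$ that is anticomplete to $A$, after which a short argument shows the relevant endpoints in $D$ are true twins and one BFS from a minimum-degree vertex of $D$ decides the question. If you pursue your single-anchor plan, you will have to reconstruct this two-anchor structure inside your type analysis anyway; adopting it from the start would close the gap.
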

	\begin{proof}
		Let $G = (V,E)$ be a (connected) $(P_3 + 2P_1)$-free graph. Indeed, $d_{\max} = 5$ by Theorem~\ref{thm:HLongestPath}. We start with a BFS from an arbitrary vertex $u$, and let $v$ be a vertex that is in the second neighbourhood of $u$. If such a vertex $v$ does not exist, the diameter of $G$ is not 5. We now distinguish the structure of the graph with respect to $u$ and $v$. This structure can be identified by BFS from both $u$ and $v$, see also Figure~\ref{fig:P3+2P1structure} for an illustration. Let $A \subset V$ be the set of all common neighbours of $u$ and $v$. Note that $A \neq \emptyset$ by the choice of $v$. Set $B_1 = N(u)\setminus A$ and $B_2 = N(v)\setminus A$. For now, let $C' \subset V$ be the set of all other vertices in the graph. The distances between vertices in $V\setminus C'$ are at most 4 due to $u$ and $v$. So, if $C'$ is empty, we are done. Otherwise, any diametral path of length 5 has an endpoint in $C'$, so we focus on this. Note that $G[C']$ is $P_3$-free, a disjoint union of cliques, as it is non-adjacent to both $u$ and $v$.
		For any $a\in A$, the non-neighbours of $a$ in $C'$ must form a complete subgraph, as $\langle u, a, v \rangle$ is a $P_3$ and the graph is $(P_3 + 2P_1)$-free. We see that any $a\in A$ is non-adjacent to at most one clique or a part of one clique in $C'$.
		Say that every clique in $C'$ has a vertex with a neighbour in $A$. Consider two vertices $c_1',c_2'\in C'$. If $c_1'$ and $c_2'$ are in the same clique, then the distance from $c_1'$ to $c_2'$ is $1$. Otherwise, let $C_1\subset C'$ ($C_2\subset C'$) be the vertices of the clique containing $c_1$ ($c_2$). If there is an $a\in A$ that is connected to both cliques $C_1$ and $C_2$, then the distance from $c_1'$ to $c_2'$ is at most $4$. Otherwise, $C_1$ is connected to a vertex $a_1\in A$ and $C_2$ is connected to a vertex $a_2\in A$, where $a_1\neq a_2$. But because $a_1$ is not connected to $C_2$ by assumption, it holds that $(c_1,a_1)\in E$ because any vertex of $A$ is non-adjacent to at most one part of one clique. Similarly, $(c_2,a_2)\in E$. But then the distance from $c_1'$ to $c_2'$ is at most $4$. We see that the distances in $G$ are at most 4 if every clique in $C'$ has a vertex with a neighbour in $A$.
		So, for the diameter to be 5, there is a clique of vertices $D \subseteq C'$ that is non-adjacent to all of $A$. Define $C$ as $C = C' \setminus D$. 
	
		Every vertex $a \in A$ is complete to $C$ by construction. We can identify whether such a clique $D$ exists, and find all vertices in $D$ if it does, in linear time, by inspection of the adjacency lists of all vertices in $A$. If the distance from $u$ or $v$ to some vertex is 5, the initial BFS from $u$ or $v$ would identify such a diametral path. Vertices in $C$ have distance at most 4 to all vertices in $V \setminus D$, because they are complete to $A$.

		As an algorithm, we claim that it now suffices to find a minimum-degree vertex in $D$ and execute a BFS from it. Return a length-5 diametral path if it finds a vertex at distance 5 and otherwise, return that no length-5 diametral path exists in $G$.
	
		We can see that one end of any diametral path of length 5 has to be in $D$, if such a path exists. The other end is either a vertex in $C$ or a vertex in $B_1\cup B_2\cup A$. Note that some vertex of $D$ has an edge to either $B_1$ or $B_2$, because $G$ is connected. Hence, the distance from any $d\in D$ to any $a\in A$ is at most 4.
		
		For a $d\in D$ to have distance 5 to some $c\in C$, it must be that $N(d) \subseteq D$, as $u$ and $v$ have distance 2 to all $c\in C$. Note that all $d\in D$ with $N(d) \subseteq D$ are true twins, as $D$ is a clique. So a single BFS from the lowest degree vertex in $D$ will identify a diametral path of length 5 from $D$ to $C$, if it exists.
		
		In the following, assume the distance from any $d\in D$ to all $c\in C$ is at most 4.
		If there are $d_1,d_2 \in D$ such that $N(d_1) \cap B_1 \neq \emptyset$ and $N(d_2) \cap B_2 \neq \emptyset$, then the distance from any $d\in D$ to any other vertex in the graph is at most 4, as the distance from any $d\in D$ is at most 3 to both $u$ and $v$. We can check in linear time whether $D$ has some edge to $B_1$ and/or some edge to $B_2$, and terminate when there is an edge for both. Without loss of generality, assume that vertices in $D$ are not adjacent to $B_1$. Now a diametral path of length 5 can only go from $D$ to $B_1$, if the diameter is 5.
		
		Then, we must have $B_1, B_2 \neq \emptyset$. If some $c\in C$ has a neighbour $b_1\in B_1$, then $\langle u,b_1,c \rangle + v + d$ forms a $P_3 + 2P_1$, where $d$ is any vertex in $D$. So, either $C$ is empty, or no vertex in $C$ has a neighbour in $B_1$. Regardless, we do not have to consider finding any shortest path from a vertex in $D$ to any vertex in $B_1$ that goes through $C$, as using $v$ instead is equivalent. Let $B_2'\subseteq B_2$ be the set of vertices in $B_2$ that each have at least one neighbour in $A$ or $B_1$. Let $D'\subseteq D$ be the set of vertices in $D$ with at least one neighbour in $B_2'$.
		Vertices in $D'$ have distance 3 to $u$, and so distance at most 4 to all of $B_1$. All vertices in $D\setminus D'$ have distance at least 4 to $u$. If every $b_1 \in B_1$ is complete to $A$, then any shortest path of length 5 from a vertex in $D$ to a vertex in $B_1$ would need its endpoint $d\in D$ to have $N(d)\subseteq D$. All such vertices are true twins, and a BFS from a single vertex in $D$ with minimum degree identifies such a diametral path, if it exists. Otherwise, there exist $b_1\in B_1$, $a\in A$ with $(b_1,a) \notin E$. But then $G[(B_2\setminus B_2')\cup D]$ must be both $P_3$-free and $2P_1$-free, as $b_1$ and $a$ form a $2P_1$ but also a $P_3$ with $u$. So, $G[(B_2\setminus B_2')\cup D]$ is a clique. But then all $d\in D\setminus D'$ are true twins, and have lower degree than the vertices in $D'$. A BFS from any of these minimum-degree vertices in $D$ suffices to find a length-5 diametral path, if it exists.
		
		The above analysis shows that we can find a certificate for a shortest path of length 5 if and only if it exists in $G$.
	\end{proof}
	
	As an immediate corollary, we get that we can decide on $d_{\max}$ for the class of $4P_1$-free graphs, as any $4P_1$-free graph is $(P_3 + 2P_1)$-free, and $d_{\max}$ is equal for both classes.

	\begin{corollary}\label{cor:4P1free}
		Given a $4P_1$-free graph $G$, we can decide whether the diameter of $G$ is equal to $d_{\max} = 5$ in $O(n+m)$ time.
	\end{corollary}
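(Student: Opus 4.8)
The plan is to obtain this as an immediate consequence of Theorem~\ref{thm:P3+2P1free}, using two observations: that every $4P_1$-free graph is $(P_3+2P_1)$-free, and that the two graph classes share the same value of $d_{\max}$.

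First I would establish the class inclusion. The path $P_3$ contains an induced $2P_1$, namely its two endpoints, so $4P_1 \subseteq_i P_3+2P_1$; hence any graph containing an induced $P_3+2P_1$ also contains an induced $4P_1$, and by contraposition every $4P_1$-free graph is $(P_3+2P_1)$-free. Second, I would confirm that the relevant diameter bounds agree: applying Theorem~\ref{thm:HLongestPath} to $4P_1=\sum_{i=1}^{4}P_1$ gives $d_{\max} = 4-3+4 = 5$, which is the same value as for connected $(P_3+2P_1)$-free graphs. This second point is the one that genuinely needs to be checked, since --- as stressed in the discussion preceding Theorem~\ref{thm:Ptodd} --- \dmaxproblem{} depends on the exact value of $d_{\max}$, so a subclass relation by itself would not transfer the algorithm if the two target values differed.

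With both facts in hand, the argument is short: given a $4P_1$-free graph $G$, it is in particular $(P_3+2P_1)$-free, and deciding whether $\diam(G)=5$ is exactly \dmaxproblem{} for both classes; so I would simply run the $O(n+m)$-time algorithm of Theorem~\ref{thm:P3+2P1free} on $G$ and return its output, correct by that theorem. There is no real obstacle here; the only subtlety worth spelling out is the equality of the two $d_{\max}$ values, which is the reason the statement cannot be quoted blindly from the general ``subclass'' principle stated at the start of Section~\ref{sec:hardness}, which is phrased for {\sc Diameter} rather than for \dmaxproblem{}.
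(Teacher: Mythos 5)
Your proposal is correct and matches the paper's own argument, which likewise derives the corollary immediately from Theorem~\ref{thm:P3+2P1free} via the two observations that every $4P_1$-free graph is $(P_3+2P_1)$-free and that $d_{\max}=5$ for both classes. Your extra care in checking that the two $d_{\max}$ values coincide (via Theorem~\ref{thm:HLongestPath}) is exactly the point the paper relies on, just spelled out more explicitly.
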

	
	\subsection{\texorpdfstring{$(P_2+3P_1)$}{(P2+3P1)}-free graphs}
	
	\begin{figure}[t]
		\centering
		\includegraphics[width=.6\textwidth]{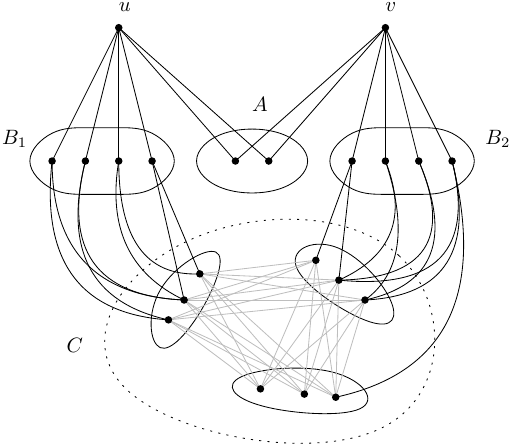}
		\caption{An illustration of the structure of a $(P_2 + 3P_1)$-free graph as in the proof of Theorem~\ref{thm:P2+3P1free}.}\label{fig:P2+3P1structure}
	\end{figure}
	
	\begin{theorem}\label{thm:P2+3P1free}
		Given a $(P_2 + 3P_1)$-free graph $G$, we can decide whether the diameter of $G$ is equal to $d_{\max} = 6$ in $O(n+m)$ time.
	\end{theorem}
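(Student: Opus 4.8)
The plan is to apply the same structural strategy as in Theorems~\ref{thm:P3+2P1free} and~\ref{thm:P4+P1free}, pushed one BFS-layer deeper to accommodate $d_{\max}=6$. First do the standard preprocessing: if $G$ is a clique then $\diam(G)=1\neq 6$, and otherwise remove true twins by Proposition~\ref{prop:twinremovaldists} (this preserves all distances, the diameter, and $(P_2+3P_1)$-freeness). Run a BFS from an arbitrary vertex $u$. If $\mathrm{ecc}(u)\ge 6$ then, since $d_{\max}=6$, we may immediately report $\diam(G)=6$; if $\mathrm{ecc}(u)\le 2$ then $\diam(G)\le 4$ and we report ``no''. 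So the work is in the regime $\mathrm{ecc}(u)\in\{3,4,5\}$, where $G$ has at most six BFS-layers $L_0=\{u\},L_1,\dots,L_{\mathrm{ecc}(u)}$. Two cheap structural facts drive the analysis: (a) adding $u$ to any induced $P_2+2P_1$ inside $V\setminus N[u]$ would create a $P_2+3P_1$, so $G[V\setminus N[u]]$ is $(P_2+2P_1)$-free and hence, by Theorem~\ref{thm:2P1+P2free}, has internal diameter at most $4$, computable in linear time; and (b) $\overline{P_2+3P_1}=K_5-e$, so $\overline G$ is $K_5$-free, i.e.\ $\alpha(G)\le 4$ and every independent set in every induced subgraph of $G$ has at most four vertices.

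Next, suppose $\diam(G)=6$ and fix a diametral path $\langle v_0,\dots,v_6\rangle$. From $d(v_0,v_3)=d(v_3,v_6)=3$ and $d(v_0,v_6)=6$ one gets that $N(v_0),N(v_3),N(v_6)$ are pairwise disjoint, that there are no edges between $N(v_0)$ and $N(v_6)$, and that $R:=V\setminus(N[v_0]\cup N[v_3]\cup N[v_6])$ is an independent set, hence $|R|\le 4$ by~(b); likewise, for the edge $(v_3,v_4)$ the set $V\setminus(N[v_3]\cup N[v_4])$ has independence number at most $2$, and similarly for the other path edges. I would use these constraints, together with~(a) applied inside $V\setminus N[u]$, to show that a diametral $P_7$, if present, is forced to thread through the layers in one of a bounded list of ``types'' (as in the proof of Theorem~\ref{thm:P4+P1free}), and that in each surviving type the candidate endpoints of a length-$6$ shortest path lie in a small, clique-like ``deep'' set $D$: every relevant vertex of $D$ is a true twin of the others or is dominated by a common vertex, so a single BFS from a minimum-degree vertex of $D$ certifies a distance-$6$ pair iff one exists. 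Concretely I would split on $\mathrm{ecc}(u)$: if $\mathrm{ecc}(u)=3$ the triangle inequality forces $u=v_3$ for every diametral pair, reducing the task to deciding whether two vertices of $L_3$, lying in distinct components of the $(P_2+2P_1)$-free graph $G[L_2\cup L_3]$, are at distance $6$; if $\mathrm{ecc}(u)\in\{4,5\}$ I would additionally fix a vertex $v$ with $d(u,v)=\mathrm{ecc}(u)$ and decompose $V$ with respect to the BFS layers of both $u$ and $v$ (common neighbourhoods, private neighbourhoods, and the deep remainder), exactly as in Theorem~\ref{thm:P3+2P1free} but with one extra layer. All of this is a constant number of BFS's plus linear-time set manipulations, including false-twin and module partitions via Corollary~\ref{cor:bipartitetwins}, for an $O(n+m)$ total.

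The main obstacle I anticipate is the number of ``types'': a path on seven vertices can sit in up to six layers in many ways, and---in contrast to $P_4+P_1$, where forbidding a connected $P_4$ rigidly pins down individual neighbourhoods---the pattern $P_2+3P_1$ contains only one edge, so each forbidden-subgraph argument yields only the weak statement ``the two endpoints of this edge together dominate all but at most two vertices of such-and-such independent set''. Combining enough of these weak local constraints to show that every type either cannot occur or collapses to a type already handled (or to a plain BFS test), while keeping every step linear-time, is where the bulk of the case analysis goes; the bound $\alpha(G)\le 4$ and the $(P_2+2P_1)$-freeness of $G[V\setminus N[u]]$ are the levers I would lean on to keep that analysis finite and shallow.
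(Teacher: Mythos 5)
There is a genuine gap, on two levels. First, one of your two stated ``levers'' is false: from $\overline{P_2+3P_1}=K_5-e$ you conclude that $\overline G$ is $K_5$-free and hence $\alpha(G)\le 4$, but induced $(K_5-e)$-freeness of $\overline G$ does not forbid an induced $K_5$ in $\overline G$, i.e.\ it does not forbid an independent set of size $5$ in $G$. Indeed $(P_2+3P_1)$-free graphs have unbounded independence number: a star $K_{1,n}$, or any complete split graph, is even $(P_2+P_1)$-free. Consequently your deduction ``$R$ is independent, hence $|R|\le 4$'' fails (the independence of $R$ is fine; the size bound is not), and the ``small, clique-like deep set $D$'' you hope to isolate need not be small. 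Your other observations are sound --- $G[V\setminus N[u]]$ is $(P_2+2P_1)$-free, and for any edge $(x,y)$ the set of vertices non-adjacent to both $x$ and $y$ has independence number at most $2$ --- but they are not enough as presented. Second, and more importantly, the proof is not actually carried out: the entire reduction of the ``bounded list of types'' to a constant number of BFS calls is announced (``I would use these constraints\dots to show\dots'', ``where the bulk of the case analysis goes'') rather than performed, and that case analysis is precisely the content of the theorem.

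For comparison, the paper avoids a type enumeration altogether. It fixes $u$ and a vertex $v$ at distance exactly $2$ from $u$, so that $u+v$ is an induced $2P_1$ and hence $C=V\setminus(N[u]\cup N[v])$ induces a $(P_2+P_1)$-free, i.e.\ complete multipartite, graph; moreover for each $b\in N(u)\cup N(v)$ the triple $b,u,v$ forces the non-neighbours of $b$ in $C$ to form a clique. These two facts collapse the analysis to two cases ($G[C]$ edgeless or not), in each of which the candidate endpoint of a length-$6$ path is essentially unique or a twin class, so one extra BFS suffices. If you want to salvage your outline, you should replace the false $\alpha(G)\le 4$ bound by structural information extracted from a $2P_1$ at distance $2$ (or from edges of the hypothetical diametral path), and then actually execute the case analysis.
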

	\begin{proof}
		Let $G = (V,E)$ be a (connected) $(P_2 + 3P_1)$-free graph. Indeed, $d_{\max} = 6$ by Theorem~\ref{thm:HLongestPath}. We start with a BFS from an arbitrary vertex $u$, and let $v$ be a vertex that is in the second neighbourhood of $u$. If such a vertex $v$ does not exist, the diameter of $G$ is not 6. We now distinguish the structure of the graph with respect to $u$ and $v$. This structure can be identified by BFS from both $u$ and $v$, see also Figure~\ref{fig:P2+3P1structure} for an illustration. Let $A \subset V$ be the set of common neighbours of $u$ and $v$. Note that $A \neq \emptyset$ by the choice of $v$. Set $B_1 = N(u)\setminus A$ and $B_2 = N(v)\setminus A$. Let $C \subset V$ be the set of all other vertices in the graph. If $C$ is empty, the diameter of $G$ cannot be 6. As $u + v$ is a $2P_1$, $G[C]$ must be $(P_2 + P_1)$-free, a complete $r$-partite graph for some integer $r\geq 1$.
		
		\begin{claim}\label{clm:P2+3P1NNC}
			The non-neighbours in $C$ of a vertex $b\in B_1\cup B_2$ form a complete subgraph.
		\end{claim}
		\begin{proof}
			Say $b\in B_1$. Then $\langle b,u \rangle + v$ is a $P_2 + P_1$, so the induced subgraph of vertices non-adjacent to all of $\{b,u,v\}$ is $2P_1$-free. This is the set of vertices in $C$ non-adjacent to $b$. The case for some $b \in B_2$ is analogous.
		\end{proof}
		
		If the distance from $u$ or $v$ to some vertex is 6, then the initial BFS would identify such a diametral path. Otherwise, at least one endpoint of a length-6 diametral path must be in $C$. We distinguish algorithmically if $r=1$ or $r>1$ by checking whether $G[C]$ contains an edge in $O(n+m)$ time.
		
		If $G[C]$ is $1$-partite, an independent set, then every $c\in C$ has a neighbour in $B_1\cup A \cup B_2$ due to $G$ being a connected graph. Because of this, the distance from any $c\in C$ to any vertex in $B_1\cup A\cup B_2 \cup \{u,v\}$ is at most 5. So, a diametral path of length 6 then must have both endpoints in $C$, if it exists.
		However, by Claim~\ref{clm:P2+3P1NNC}, the non-neighbours in $G[C]$ of any $b_1\in B_1$ or $b_2\in B_2$ form a clique, they must be a single vertex. So, after inspecting the neighbourhood of a single $b_1\in B_1$ there is at most single vertex $c_1\in C$ which is a non-neighbour of $b_1$. But then the distances between all vertices in $C\setminus \{c_1\}$ are at most 2, and $c_1$ must be an endpoint of a length-6 diametral path, if it exists. We can find $c_1$ in time $O(n+m)$ and execute a BFS from it and return a distance-6 shortest path if it is found.
		
		If $G[C]$ is not an independent set, $G[C]$ is a complete $r$-partite graph for $r > 1$ and the distance between vertices in $C$ is at most 2. The distance from some $a\in A$ to all vertices of $G$ is at most 5, as it has distance 2 to all vertices in $B_1 \cup B_2$. Then, a diametral path of length 6 must have one endpoint in $C$ and one endpoint in either $B_1$ or $B_2$. Without loss of generality, assume that one end of the diametral pair is in $B_1$, the other case is symmetric.
		
		Suppose that $G[C]$ is not a clique, and recall that $G[C]$ is a complete $r$-partite graph. Because the non-neighbours of any $b_1\in B_1$ in $C$ must form a clique by Claim~\ref{clm:P2+3P1NNC}, every $b_1\in B_1$ has a neighbour in $C$. Hence, the distance from any $b_1$ to any $c\in C$ is at most 3. Hence, the diameter of $G$ is at most 5. So, if there is a length-6 diametral path from some $c\in C$ to some $b_1\in B_1$, $G[C]$ is a clique, and there are no edges between $C$ and $B_1\cup A$. This can be checked in $O(n+m)$ time. Also, the endpoint of the diametral path in $C$, say $c\in C$, must have $N(c) \subseteq C$, as otherwise it has distance at most 5 to all other vertices. All such vertices are true twins, as $G[C]$ is a clique. So we can execute a BFS from any one of these vertices to find a length-6 shortest path with an endpoint in $C$ and one in $B_1$, if it exists. This is simply the lowest degree vertex in $C$, so finding it and executing a BFS from it takes $O(n+m)$ time.
		
		The above analysis shows that we can find a certificate for a shortest path of length 6 in $O(n+m)$ time if and only if it exists in $G$.
	\end{proof}

	\subsection{\texorpdfstring{$(2P_2+P_1)$}{(2P2+P1)}-free graphs}
	
	\begin{figure}[t]
		\centering
		\includegraphics[width=.6\textwidth]{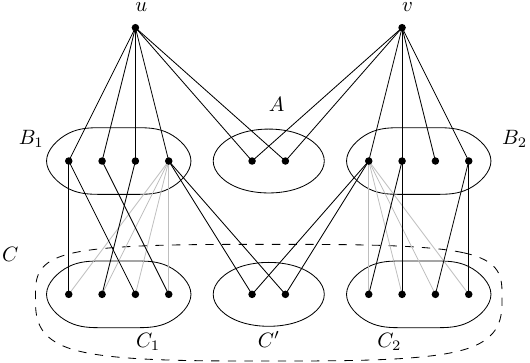}
		\caption{An illustration of the structure of a $(2P_2 + P_1)$-free graph as in the proof of Theorem~\ref{thm:2P2+P1free}.}\label{fig:2P2+P1structure}
	\end{figure}
	
	\begin{theorem}\label{thm:2P2+P1free}
		Given a $(2P_2 + P_1)$-free graph $G$, we can decide whether the diameter of $G$ is equal to $d_{\max} = 5$ in $O(n+m)$ time.
	\end{theorem}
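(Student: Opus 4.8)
The value $d_{\max}=5$ is exactly the one furnished by Theorem~\ref{thm:HLongestPath} for $H=2P_2+P_1$ (three paths, $\sum a_i=5$, so $3-3+5=5$). The plan is to reuse the scheme of Theorems~\ref{thm:P3+2P1free} and~\ref{thm:P2+3P1free}. Run a BFS from an arbitrary vertex $u$; if it reaches distance $5$, answer ``yes''; if $u$ has no vertex at distance $2$ then $\diam(G)\le 2$ and the answer is ``no''; otherwise fix a vertex $v$ with $d(u,v)=2$, run a BFS from $v$ as well (answering ``yes'' on distance $5$), and partition $V$ into $\{u,v\}$, the common neighbourhood $A=N(u)\cap N(v)$, the sets $B_1=N(u)\setminus N(v)$ and $B_2=N(v)\setminus N(u)$, and the remaining set $C$. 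Any two vertices of $\{u,v\}\cup A\cup B_1\cup B_2$ are within distance $4$ (route through $u$ and/or $v$), so we may assume $C\neq\emptyset$ and that every length-$5$ diametral path has an endpoint in $C$.

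The structural core, all checkable in $O(n+m)$ time, exploits that $u$ and $v$ are anti-complete to $C$ and $u\nsim v$. First, $G[C]$ is $2P_2$-free, since a $2P_2$ in $C$ together with $u$ is an induced $2P_2+P_1$; hence at most one component $C^{\ast}$ of $G[C]$ contains an edge (two edges in distinct components already form a $2P_2$ in $C$), and $\diam(G[C^{\ast}])\le 3$ by Theorem~\ref{thm:HLongestPath}, while every other vertex of $C$ is isolated in $G[C]$. Second, for $b\in B_1$ and any edge $c_1c_2$ of $G[C]$ with $b$ adjacent to neither $c_1$ nor $c_2$, the set $\{b,u,c_1,c_2,v\}$ induces a $2P_2+P_1$ (with edges $bu$ and $c_1c_2$), and symmetrically $\{b,v,c_1,c_2,u\}$ for $b\in B_2$; so the non-neighbours in $C$ of any vertex of $B_1\cup B_2$ form an independent set of $G[C]$. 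Third, for $a\in A$, an edge $c_1c_2$ of $G[C]$ with $c_1,c_2\notin N(a)$ together with a further $c_3\in C\setminus N(a)$ non-adjacent to $c_1,c_2$ makes $\{a,u,c_1,c_2,c_3\}$ an induced $2P_2+P_1$ (edges $au$ and $c_1c_2$), so $G[C\setminus N(a)]$ is $(P_2+P_1)$-free, i.e.\ complete multipartite.

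Armed with these facts I would examine a putative length-$5$ diametral path $\langle x_0,\dots,x_5\rangle$ with $x_0\in C$. If $x_0$ has a neighbour in $A$, then $d(x_0,u)=d(x_0,v)=2$, which forces the other endpoint $x_5$ into $C$ and without an $A$-neighbour (else $d(x_0,x_5)\le 4$); reversing the path, we may assume $x_0$ has no neighbour in $A$, so $N(x_0)\subseteq B_1\cup B_2\cup C$. If $x_0$ had a neighbour in both $B_1$ and $B_2$, then again $d(x_0,u)=d(x_0,v)=2$, and $d(u,x_5),d(v,x_5)\ge 3$ pushes $x_5$ into the interior of $C^{\ast}$ (all of its neighbours inside $C^{\ast}$); but then $x_0x_1$, $x_4x_5$ and whichever of $u,v$ misses $x_1$ induce a $2P_2+P_1$, a contradiction. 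Hence $x_0$ has a neighbour in exactly one of $B_1,B_2$, or is itself interior to $C^{\ast}$. Combining this with $\diam(G[C^{\ast}])\le 3$ (so the two endpoints are not both in $C^{\ast}$) and with the second and third structural facts, the surviving configurations become tightly constrained: the other endpoint is again forced into $C$ with a one-sided neighbourhood (or is handled by the BFS from $u$ or $v$), and the vertices that can serve as a length-$5$ endpoint collapse, up to the vertices already handled, to a single twin class (after also identifying false twins among the isolated vertices of $G[C]$). One further BFS from an appropriately chosen minimum-degree vertex of the relevant part of $C$ then detects such a path if it exists; the algorithm answers ``yes'' iff some BFS that was run reaches distance $5$, and everything runs in $O(n+m)$ time.

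The main obstacle — and the reason the write-up is long, just as for Theorems~\ref{thm:P4+P1free} and~\ref{thm:P3+2P1free} — is making the case split above exhaustive. Unlike the $(P_3+2P_1)$ case, where $G[C]$ is a disjoint union of cliques, here $C^{\ast}$ need not be a clique, so one must separately treat endpoints that lie in $C^{\ast}$ while still touching $B_1$ or $B_2$, the isolated vertices of $G[C]$ and their access to $u$ and $v$, and degenerate sub-cases (for instance a vertex of $C$ whose only short route to $v$ passes through $u$), in each instance pinning the candidate far endpoints to a twin class so that a single additional BFS suffices. I expect this bookkeeping, rather than any single clever step, to be where the work lies.
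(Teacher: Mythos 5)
Your setup matches the paper's: the same BFS from $u$, choice of $v$ at distance $2$, and partition into $A$, $B_1$, $B_2$, $C$; and your second structural fact (the non-neighbours in $C$ of any $b\in B_1\cup B_2$ span no edge of $G[C]$) is exactly one of the paper's two key claims. But the proof is not actually completed. The entire substance of the theorem is showing that the candidate endpoints of a length-$5$ diametral path fall into a constant number of explicitly identifiable (false-)twin classes, so that a constant number of BFS calls suffice --- and at precisely this point you write that the configurations ``become tightly constrained'' and ``collapse \ldots to a single twin class,'' deferring the case analysis as ``bookkeeping.'' It is not bookkeeping; it is the proof. In particular, you never specify which vertices the final BFS calls are launched from (the paper needs three: a $c\in C$ with $N(c)=B_1$, one with $N(c)=B_2$, and a minimum-degree $c$ with $N(c)\subseteq A$), and you give no argument for why minimum degree is the right selection criterion --- the paper's justification in its second case is a delicate private-neighbour contradiction, not an obvious twin argument.

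Concretely, you are missing the lemma that drives the main case split: if $b_1\in B_1$ and $b_2\in B_2$ are non-adjacent, then $b_1$ and $b_2$ together dominate $C$ (otherwise $\langle u,b_1\rangle+\langle v,b_2\rangle+c$ is an induced $2P_2+P_1$). This is what lets the paper partition $C$ into $N(b_1)$-side, $N(b_2)$-side, and common part, prove that an endpoint $c\in C$ at distance $5$ from some $b_3\in B_2$ satisfies $N(c)=B_1$ \emph{exactly} (the inclusion $N(c)\supseteq B_1$ comes from applying this domination lemma to every pair $(b_1',b_3)$), and rule out diametral pairs with both endpoints in $C$ in that case. Your three stated facts, while correct, do not substitute for it, and your sketch of the remaining cases (``the other endpoint is again forced into $C$ with a one-sided neighbourhood'') is an assertion of the conclusion rather than a derivation. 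As written, the proposal establishes the framework and about one of the two lemmas, but leaves the correctness argument for the algorithm unproved.
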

	\begin{proof}
		Let $G = (V,E)$ be a (connected) $(2P_2 + P_1)$-free graph. Indeed, $d_{\max} = 5$ by Theorem~\ref{thm:HLongestPath}. We start with a BFS from an arbitrary vertex $u$, and let $v$ be a vertex that is in the second neighbourhood of $u$, with $N(v) \neq N(u)$. If such a vertex $v$ does not exist, the diameter of $G$ is not 5, but at most 2. Such a vertex $v$ can be identified in time $O(n+m)$. We now distinguish the structure of the graph with respect to $u$ and $v$. This structure can be identified by BFS from both $u$ and $v$; see Figure~\ref{fig:2P2+P1structure} for an illustration. Let $A \subset V$ be the set of common neighbours of $u$ and $v$. Note that $A \neq \emptyset$ by the choice of $v$. Set $B_1 = N(u)\setminus A$ and $B_2 = N(v)\setminus A$. Let $C \subset V$ be the set of all other vertices in the graph.
		
		The algorithm for this problem is given by the following steps. Search for a vertex $x_1\in C$ with $N(x_1) = B_1$, a vertex $x_2\in C$ with $N(x_2) = B_2$, and a vertex $x_3\in C$ of minimum degree among those vertices $c\in C$ with $N(c)\subseteq A$. For each $i\in \{1,2,3\}$, if it exists, execute a BFS from $x_i$, returning a distance-5 shortest path when found. If none of the options exist or none lead to a distance-5 shortest path, return that the diameter of $G$ is not 5. Notice that this can be executed in $O(n+m)$ time, as we can identify the vertices in $O(n+m)$ time, and we do a constant number of BFS calls.
		
		Next, we prove correctness of this algorithm.
		We first prove two claims that will be useful in analysing the structure of $G$.
		
		\begin{claim}\label{clm:2P2+P1_b1b2dominate}
			For all $b_1 \in B_1$, $b_2\in B_2$, if $(b_1,b_2)\notin E$ then $b_1,b_2$ together dominate all vertices in $C$.
		\end{claim}
		\begin{proof}
			Let $b_1 \in B_1$, $b_2\in B_2$ such that $(b_1,b_2)\notin E$. Assume to the contrary there is a vertex $c\in C$ that is not adjacent to either $b_1$ or $b_2$. By construction, $c$ is not adjacent to $u$ or $v$, and $(u,v)\notin E$. But then we have that $\langle u,b_1 \rangle + \langle v,b_2 \rangle + c$ is an induced $2P_2+P_1$ in the graph, a contradiction.
		\end{proof}
		
		\begin{claim}\label{clm:2P2+P1_edgeCattachedB}
			For any edge $(c_1,c_2)$ in $G[C]$, $c_1,c_2\in C$, for every $b\in B_1\cup B_2$, it holds that $(b,c_1)\in E$ or $(b,c_2)\in E$.
		\end{claim}
		\begin{proof}
			Let $(c_1,c_2)$ be an edge in $G[C]$, $c_1,c_2\in C$. Assume to the contrary that there exist a $b\in B_1\cup B_2$ with $(b,c_1)\notin E$ and $(b,c_2)\notin E$. Let us assume $b \in B_1$, the other case is symmetric. By construction, $c_1,c_2$ are not adjacent to $u$ or $v$, and $(u,v)\notin E$. But then we have that $\langle u,b \rangle + \langle c_1,c_2 \rangle + v$ is an induced $2P_2+P_1$ in the graph, a contradiction.
		\end{proof}
	
		Since $u,v$ dominate $G[V\setminus C]$ and have a common neighbour, any length-5 shortest path must have at least one endpoint in $C$, as the distances in $G[V\setminus C]$ are at most 4. If the distance from $u$ or $v$ to some vertex is 5, then we would know by the initial BFS\@.
		
		\medskip
		\noindent
		\textbf{Case 1:} There exist $b_1\in B_1, b_2\in B_2$ with $(b_1,b_2)\notin E$.\\
		Let $C' = N(b_1)\cap N(b_2)$ and let $C_1 = (N(b_1) \cap C)\setminus C'$ and $C_2 = (N(b_2) \cap C)\setminus C'$. Note that $C_1, C', C_2$ together partition $C$ by Claim~\ref{clm:2P2+P1_b1b2dominate}. We immediately see that the distance from any vertex in $C'$ to any other vertex of $G$ is at most $3$. The same holds for vertices in $A$. Whenever $C' \neq \emptyset$, the distances between vertices of $C$ are at most $4$.

		We first show that we find a diametral path of length 5 with one endpoint in $B_1$ or $B_2$ and the other in $C_2$ or $C_1$, respectively, if it exists. These two scenarios are symmetric, so assume without loss of generality that a diametral path exists with endpoints $c\in C_1$ and $b_3\in B_2$. Because the distance from $c$ to $b_3$ is 5, $b_3$ is not adjacent to any vertex of $B_1$, or any vertex of $C_1\cup C'$. So, by applying Claim~\ref{clm:2P2+P1_b1b2dominate} with $b_1$ and $b_3$ we see that $N(b_3) \cap C = C_2$ as $C_2 \cap N(b_1) = \emptyset$. Furthermore, for any $b_1'\in B_1$ we get $N(b_1')\cap C \supseteq C_1 \cup C'$ by applying Claim~\ref{clm:2P2+P1_b1b2dominate} with respect to $b_1'$ and $b_3$. Also, $c$ has no neighbours in $A, C', C_2$, or $B_2$, as the distance to $b_3$ is 5. Consider some edge $e$ in $G[C_1]$. By applying Claim~\ref{clm:2P2+P1_edgeCattachedB} to $e$, we get that one of the vertices of $e$ is adjacent to $b_3$. We see that $c$ has degree 0 in $G[C_1]$. To sum up, $N(c) = B_1$, and this holds for any $c\in C$ with a vertex in $B_2$ at distance 5. Hence, all these candidates are (false) twins. We find that detection of a single $c\in C$ with $N(c) = B_1$ and executing a BFS from it indeed suffices to find a length-5 diametral path from a vertex in $C_1$ to a vertex in $B_2$, if it exists. For the symmetric case, this argument shows that it suffices to find a vertex $c\in C$ with $N(c) = B_2$ and execute a BFS from it to find a length-5 diametral path from a vertex in $C_2$ to a vertex in $B_1$, if it exists. The algorithm searches for $x_1$ and $x_2$ meeting these conditions, so it will find a length-5 diametral path from $B_1$ to $C_2$ or $B_2$ to $C_1$ if it exists.

		For this case, it remains to show that we also find a diametral path with both endpoints in $C_1 \cup C_2$, if it exists. However, we show instead that such a diametral path cannot exist. Assume for sake of contradiction that the distance from $c_1\in C_1$ to $c_2\in C_2$ is 5. It must be that $C' = \emptyset$ for the distance between $c_1$ and $c_2$ to be 5. Let it be clear that $c_1$ has no neighbour in $\{u,v\} \cup A \cup B_2 \cup C_2$ and similarly $c_2$ has no neighbour in $\{u,v\} \cup A \cup B_1 \cup C_1$, as the distance between $c_1$ and $c_2$ would not be 5. But then for any $a\in A$ it must be that $(a,b_1),(a,b_2)\in E$, as otherwise $c_1 + \langle u,a \rangle + \langle b_2,c_2 \rangle$ would be an induced $2P_2+P_1$ and $c_2 + \langle v,a \rangle + \langle b_1,c_1 \rangle$ would be an induced $2P_2+P_1$. The existence of these edges is a contradiction with the assumption that the distance from $c_1$ to $c_2$ is 5. We conclude that the distance from any $c_1\in C_1$ to any $c_2\in C_2$ cannot be 5, and that the only length-5 diametral path that can exist has one endpoint in $B_1$ or $B_2$ and the other in $C_2$ or $C_1$, respectively.

		\medskip\noindent
		\textbf{Case 2:} There do not exist $b_1\in B_1, b_2\in B_2$ with $(b_1,b_2)\notin E$.\\
		We get that $B_1$ and $B_2$ are complete to each other, or one of $B_1$ and $B_2$ is empty. By choice of $v$, it cannot be that both $B_1$ and $B_2$ are empty sets. We get that vertices in $B_1\cup B_2$ have distance at most 3 to any other vertex in the graph, as by Claim~\ref{clm:2P2+P1_edgeCattachedB} vertices incident to an edge in $G[C]$ are at distance at most 2 from any $b \in B_1\cup B_2$, and vertices not incident to an edge in $G[C]$ are adjacent to at least one vertex in $B_1\cup A \cup B_2$, and so have distance at most 3 to any vertex in $B_1\cup B_2$. Vertices in $A$ have distance at most 4 to any vertex in the graph, because any $c\in C$ is at distance at most 2 from some vertex in $B_1\cup A \cup B_2$ by Claim~\ref{clm:2P2+P1_edgeCattachedB}. If $u$ or $v$ are at distance 5 from some vertex in the graph, we would know by the original BFS\@. Hence, the only option for a distance-5 pair is two vertices in $C$, say $c_1^*$ and $c_2^*$. If both vertices have an edge to $B_1\cup A \cup B_2$, the vertices are at distance at most 4. Hence, one of the vertices must not be adjacent to any of $B_1\cup A \cup B_2$, but instead incident on an edge in $G[C]$. Let this vertex be $c_1^*$ with a neighbour $c_1'\in C$. By Claim~\ref{clm:2P2+P1_edgeCattachedB} we get that $c_1'$ is complete to $B_1\cup B_2$. But then $c_2^*$ must be non-adjacent to $B_1\cup B_2$, and not incident to any edge in $G[C]$ because of Claim~\ref{clm:2P2+P1_edgeCattachedB}. Hence, $c_2^*$ must have its neighbourhood contained in $A$.

		Note that the algorithm searches for $x_3\in C$ of minimum degree with $N(x_3)\subseteq A$.
		Assume for sake of contradiction that the algorithm does not find a length-5 shortest path from the BFS from $x_3$, but a length-5 shortest path does exist. By the above analysis, one of the endpoints of the length-5 diametral path must be a vertex in $c_2^* \in C$ with $N(c_2^*)\subseteq A$, and the other endpoint must be a vertex $c_1^*\in C$ with $N(c_1^*) \cap (B_1\cup A\cup B_2) = \emptyset$. Hence, the following path exists (and is a diametral path): $\langle c_1^*, c_1', b^*, u, a^*, c_2^*\rangle$ with $c_1'\in C$, $b^*\in B$ and $a^*\in A$. Then $c_2^*$ cannot be a false twin of $x_3$, as otherwise the BFS would find a length-5 shortest path. The same holds if $N(x_3)\subseteq N(c_2^*)$, so $x_3$ has a private neighbour in $A$ with respect to $c_2^*$. But, $x_3$ was chosen as a vertex of minimum degree among those $c\in C$ with $N(c) \subseteq A$, so $c_2^*$ must also have a private neighbour in $A$ with respect to $x_3$. Let $a'\in A$ denote the private neighbour of $c_2^*$ with respect to $x_3$, so $(c_2^*,a')\in E$ and $(x_3,a')\notin E$. Because the distance from $c_2^*$ to $c_1^*$ is 5, it must be that $a'$ is non-adjacent to $c_1'$ and $c_1^*$. But then, $x_3 + \langle a',u \rangle + \langle c_1',c_1^* \rangle$ forms a $2P_2+P_1$, a contradiction.

		We can conclude that finding a vertex $x_3\in C$ with minimum degree among those vertices $c\in C$ with $N(c)\subseteq A$ and executing a BFS from it suffices to decide whether a length-5 diametral path of this case exists.
		
		\medskip\noindent
		To conclude, we find a length-5 diametral path if and only if it exists in $G$.
	\end{proof}

	\subsection{Proofs of Theorem~\ref{thm:diamalggeneraloverview} and~\ref{thm:AlgOverview}}\label{sec:overviewproofs}

	We now prove our algorithmic theorems, Theorem~\ref{thm:diamalggeneraloverview} and Theorem~\ref{thm:AlgOverview}.

	\DiamAlgGeneralOverview*
	\begin{proof}
		Theorem~\ref{thm:2P1+P2free} shows that we can compute the diameter of an $H$-free graph in linear time when $H\subseteq_i P_2+2P_1$, and Theorem~\ref{thm:P3+P1free} shows that we can compute the diameter of an $H$-free graph in linear time when $H\subseteq_i P_3 + P_1$. The diameter of any $H$-free graph with $H\subseteq_i P_4$ is at most 2, so we can decide on the diameter of any such graph in linear time by checking if the graph is a clique.
	\end{proof}

	\AlgOverview*
	\begin{proof}
		The cases of $H \subseteq_i P_2 + 2P_1$, $H \subseteq_i P_3 + P_1$, and $H\subseteq_i P_4$ are all given by Theorem~\ref{thm:diamalggeneraloverview}.
		The case of $H = 4P_1$ is shown by Corollary~\ref{cor:4P1free}.
		The case of $H = P_2 + 3P_1$ is shown by Theorem~\ref{thm:P2+3P1free}.
		The case of $H = 2P_2 + P_1$ is given by Theorem~\ref{thm:2P2+P1free}.
		$H = P_3 + 2P_1$ is given by Theorem~\ref{thm:P3+2P1free}.
		When $H = P_4 + P_1$ we get a linear-time algorithm by Theorem~\ref{thm:P4+P1free}.
		This completes the proof of the theorem.
	\end{proof}
	
	\section{Conclusion and Discussion}\label{sec:conclusion}
	
	We analysed the complexity of computing the diameter of $H$-free graphs. For several $H$-free graph classes $\mathcal{G}$ where $H$ is a linear forest, we found linear time algorithms for solving \dmaxproblem{}. For some such graph classes, hardness results may exist for computing entirely the diameter (note the particular case of $H = 2P_2 + P_1$). However, for $H=2P_2$ and $H=P_t$ for odd $t\geq 5$, we show that any such a linear-time algorithm would refute SETH\@. We conjecture that this pattern on $P_t$-free graphs extends to all $t\geq 5$, as it seems a mere technical limitation that Theorem~\ref{thm:Ptodd} does not hold for even $t$, not a barrier for hardness.

	\begin{restatable}{conjecture}{ConjPtHard}\label{conj:Ptfree}
		Let $\mathcal{G}$ be the class of $P_t$-free graphs with $t\geq 5$. 
		Under SETH, \dmaxproblem{} cannot be solved in $O(n^{2-\epsilon})$ time, for all $\epsilon > 0$.
	\end{restatable}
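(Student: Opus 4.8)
The plan is to follow the template of Theorem~\ref{thm:Ptodd}: take a SETH-hard \textsc{Diameter} instance on a small graph class, and blow it up with pendant paths so that it lands in the $P_t$-free class while the diameter takes exactly the values $d_{\max}=t-2$ and $d_{\max}-1=t-3$. For odd $t$ this works because the Roditty--Williams split-graph construction has diameter $2$ or $3$; attaching a path of $c=(t-5)/2$ edges to every vertex of the independent set makes the diameter $2+2c=t-3$ or $3+2c=t-2$, while induced paths grow by at most $2c$ vertices, so the graph stays $P_t$-free. The obstruction for even $t$ is purely a parity one: the Roditty--Williams construction always produces \emph{odd} diameter in the hard case ($3$), and attaching pendant paths only adds the \emph{even} amount $2c$, so one never reaches the even value $t-2$. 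To repair this it suffices to produce a base instance whose hard-case diameter is already even. Concretely: (i) construct a $P_6$-free graph $G_0$ together with a SETH-based lower bound showing one cannot decide in $O(n^{2-\epsilon})$ time whether $\diam(G_0)=4$ or $\diam(G_0)=3$, with the diameter-$4$ case witnessed by a pair of \emph{simplicial} vertices; then (ii) attach a pendant path of $c=(t-6)/2$ edges to each of those vertices and run the bookkeeping of Theorem~\ref{thm:Ptodd}: the resulting graph is $P_{6+2c}=P_t$-free, its diameter is $4+2c=t-2=d_{\max}$ or $3+2c=t-3$, and the diametral shortest path in the hard case is an induced $P_{t-1}$ (the two pendant paths prepended to the base diametral $P_5$), so this is exactly $\dmaxproblem$.

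For step~(i) the natural candidate is once more an \textsc{Orthogonal Vectors} reduction, but arranged so that an orthogonal pair sits at distance $4$ rather than $3$. The shape I would try is: two coordinate cliques $K_1$ and $K_2$ on the dimension set; for each $a\in A$ a vertex $v_a$ adjacent only to $\{c_i\in K_1:a_i=1\}$, and for each $b\in B$ a vertex $v_b$ adjacent only to $\{\hat c_i\in K_2:b_i=1\}$ (so every $v_a$ and every $v_b$ is simplicial, with all of $A\cup B$ independent); and some sparse bipartite connection between $K_1$ and $K_2$, possibly together with a bounded number of auxiliary vertices, tuned so that $d(v_a,v_b)=3$ exactly when $a$ and $b$ share a coordinate and $d(v_a,v_b)=4$ exactly when $a\perp b$, while all remaining distances (within $A$, within $B$, from $v_a$ to $K_2$, between coordinate cliques, etc.) stay at most $3$. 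The $P_6$-freeness would then be argued structurally: a simplicial vertex cannot lie in the interior of an induced path, so every induced path of $G_0$ has all of its $A\cup B$-vertices at its two ends and its interior inside $K_1\cup K_2$; two cliques contribute at most four interior vertices, and the bipartite connection must be chosen so that no induced $P_4$ inside $K_1\cup K_2$ simultaneously admits a vertex $v_a\in A$ seeing exactly its first vertex and a vertex $v_b\in B$ seeing exactly its last vertex — which would cap the longest induced path at $P_5$. Step~(ii) is then routine, exactly as in the proof of Theorem~\ref{thm:Ptodd}.

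The hard part is step~(i): forcing the hard-case diameter up to $4$ while staying $P_6$-free, since these requirements pull against each other. To keep the ``easy'' distances at most $3$ one is tempted to add a hub, but a vertex adjacent to $A$ and to the coordinates is a length-$2$ shortcut that collapses the $3$-versus-$4$ gap between $v_a$ and $v_b$; alternatively one can make $A$ and $B$ cliques (which keeps within-side distances down to $1$), but then one can slide an extra clique-vertex into an induced path and the construction becomes $P_7$-free instead of $P_6$-free — and a matching between $K_1$ and $K_2$ already exhibits the same failure, producing an induced $v_a\!-\!c_p\!-\!c_q\!-\!\hat c_q\!-\!\hat c_r\!-\!v_b$. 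A $P_7$-free construction is not enough here: $d_{\max}(P_t\text{-free})=t-2$ forces the diametral shortest path in a YES-instance of $\dmaxproblem$ to be a \emph{maximum}-length induced path, so the base must have no slack whatsoever. The crux is therefore to design a connectivity gadget — a bounded-size arrangement of hubs, or a carefully chosen non-matching bipartite connection between $K_1$ and $K_2$ — that keeps all the easy distances small, keeps the OV-distinguishing distance at exactly $3$ or $4$, and admits no induced $P_6$, all at once. Failing that, one could look for a direct $P_t$-free construction for even $t$ distinguishing $t-3$ from $t-2$ without passing through a diameter-$\{3,4\}$ base, for instance an ``inverted'' reduction in which the \emph{absence} of an orthogonal pair is what forces the larger diameter. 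Once such a construction is in hand, together with Theorem~\ref{thm:Ptodd} (which already covers $H=2P_2$, hence $P_5$-free graphs, and $P_t$-free graphs for odd $t\geq 5$) the conjecture would follow for all $t\geq 5$.
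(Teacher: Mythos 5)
The statement you are proving is stated in the paper as a \emph{conjecture}, and the paper offers no proof of it; the authors prove only the odd-$t$ cases (plus $2P_2$) in Theorem~\ref{thm:Ptodd} and explicitly flag the even-$t$ cases as open, for precisely the parity reason you identify: the Roditty--Williams base instance has hard-case diameter $3$, and pendant paths shift the diameter only by even amounts. So your diagnosis of the obstruction matches the paper's own, and your reduction template for step~(ii) is the paper's proof of Theorem~\ref{thm:Ptodd} verbatim.

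However, your proposal is not a proof, and you say as much yourself: everything hinges on step~(i), a SETH-based lower bound for distinguishing diameter $3$ from diameter $4$ on a family of $P_6$-free graphs in which the diameter-$4$ witnesses are simplicial, and you do not construct it. Your own analysis shows why this is the genuinely hard part rather than a routine adaptation: a hub vertex that keeps the easy distances at most $3$ also shortcuts the $v_a$--$v_b$ distance and collapses the $3$-versus-$4$ gap; turning $A$ and $B$ into cliques, or using a matching between the two coordinate cliques, produces induced $P_6$'s (your path $v_a\!-\!c_p\!-\!c_q\!-\!\hat c_q\!-\!\hat c_r\!-\!v_b$), which is fatal because $d_{\max}(P_t\text{-free})=t-2$ leaves no slack --- the diametral path in a YES-instance must itself be a maximum-length induced path. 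Until a gadget resolving this tension is exhibited (and verified to encode \textsc{Orthogonal Vectors} with subquadratically many edges), the even-$t$ cases of the conjecture remain exactly as open as the paper leaves them. What you have written is a correct reduction of the conjecture to a cleanly stated open construction problem, not a resolution of it.
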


	The hardness implied by Conjecture~\ref{conj:Ptfree} is in stark contrast to our positive algorithmic results, summarized by Theorem~\ref{thm:AlgOverview}. These results suggest that the \dmaxproblem{} problem may be easier than deciding on the diameter of an $H$-free graph in general. In an attempt to reveal the underlying pattern, we conjecture our algorithmic results generalize to broad classes of $H$-free graphs.
	
	\begin{restatable}{conjecture}{ConjPaiGeneral}\label{conj:Paigeneral}
		For a row of integers ${(a_i)}_{i=1}^k$, $k\geq 2$, ${(a_i)}_{i=1}^k\neq (2,2)$, let $\mathcal{G}$ be the class of $\sum_{i=1}^{k}P_{a_i}$-free graphs. Then \dmaxproblem{} can be solved in $O(n+m)$ time.
	\end{restatable}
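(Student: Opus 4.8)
The plan is to prove the conjecture by induction on the number $k$ of path-components of $H=\sum_{i=1}^{k}P_{a_i}$, generalising the ad hoc analyses of Theorems~\ref{thm:P3+2P1free},~\ref{thm:P2+3P1free},~\ref{thm:2P2+P1free} and~\ref{thm:P4+P1free}. Write $d:=d_{\max}(\mathcal{G})=k-3+\sum_i a_i$ and $N:=d+1$. By Theorem~\ref{thm:HLongestPath}, $N$ is exactly the maximum number of vertices on an induced path of any $G\in\mathcal{G}$, so a shortest path realising diameter $d$ is simultaneously a \emph{longest induced path}, and no induced path of $G$ can be extended by a further vertex. After the usual $O(n+m)$ preprocessing — check connectivity, remove true twins via Proposition~\ref{prop:twinremovaldists}, dispose of $d\le 2$ by testing whether $G$ is a clique — it therefore suffices to decide whether $G$ contains an induced path on $N$ vertices whose ends lie at distance $N-1$.

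When $H$ contains a $P_1$ component I would proceed as in the proof of Theorem~\ref{thm:P4+P1free}: run a BFS from an arbitrary vertex $u$, put $C=V\setminus N[u]$, and observe that $G[C]$ is $\bigl(\sum_{i=1}^{k-1}P_{a_i}\bigr)$-free, since $u$ is anticomplete to $C$ and absorbs one $P_1$. If the reduced forest still contains a $P_1$, repeat inside $C$; if it contains several $P_1$'s, or if one wishes to absorb a $P_3$, also pick a second anchor $v$ at distance $2$ from $u$ and use a path $\langle u,a,v\rangle$ through a common neighbour $a$ to absorb a $P_3$ or two $P_1$'s at once, exactly as in Theorems~\ref{thm:P3+2P1free},~\ref{thm:P2+3P1free} and~\ref{thm:2P2+P1free}. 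When the peeled-down forest has reached one of the base cases $P_1,P_2,P_3,P_4$, the deep region is structurally rigid — up to a constant-depth false-twin refinement it is a disjoint union of cliques, a complete multipartite graph, or a cograph — and one shows, as in the existing proofs, that the set of vertices that can serve as the $C$-endpoint of a length-$d$ shortest path collapses to a bounded number of (true- or false-) twin classes (a deepest clique-tail anticomplete to the anchors, a class with neighbourhood contained in the common-neighbour set $A$, and so on). A constant number of BFS runs from minimum-degree representatives of these classes, located via Theorem~\ref{thm:twinremoval} and Corollary~\ref{cor:bipartitetwins}, then decides \dmaxproblem{} in $O(n+m)$ time.

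\textbf{The main obstacle is twofold.} First, the linear forests with \emph{no} $P_1$ component — the smallest open instance being $H=P_3+P_2$, which the conjecture includes while excluding only $(2,2)=2P_2$. Here single-vertex peeling reduces nothing, because an anticomplete $u$ can only play the role of an isolated vertex: to absorb a $P_{a_j}$ one must peel an entire induced path on $\max_i a_i$ vertices and then control the many ways the remaining vertices attach to it, and it is precisely this attachment analysis — already sixteen types for $P_4+P_1$ in Theorem~\ref{thm:P4+P1free} — that threatens to blow up. Second, even when $H$ does contain $P_1$'s, as soon as the peeling leaves a region whose largest surviving component is $P_4$ or longer, that region ceases to be ``simply structured'': it is at best a cograph (for $P_4$), already intricate for $(P_4+P_1)$-free graphs (which is why $H=P_4+2P_1$ remains open even for \dmaxproblem{}), and conjecturally as hard as {\sc Diameter} itself for $P_t$-free graphs with $t\ge 5$ (Conjecture~\ref{conj:Ptfree}). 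The only leverage available is that such a region has diameter at most its own $d_{\max}$, which is strictly less than $d$, so no diametral path of $G$ can lie inside it and it must thread through the already-understood outer layers; converting this observation into a genuinely $O(n+m)$ localization of one endpoint, uniformly over all $(a_i)_{i=1}^k$ and via a single invariant maintained through the induction — something like ``every vertex outside a fixed longest induced path has at most one non-neighbour in all but one block of the deep region'' — rather than rediscovered case by case, is the crux and the reason the statement is still only a conjecture.
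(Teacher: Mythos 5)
The statement you are addressing is Conjecture~\ref{conj:Paigeneral}; the paper offers no proof of it, and explicitly presents it as open. Your text is a research programme rather than a proof, and you say so yourself, so the honest verdict is that there is a genuine gap --- in fact the same gap the authors face. Let me make it concrete. Your induction does not actually close: if $H$ contains a $P_1$, then $G[C]$ with $C=V\setminus N[u]$ is indeed $H'$-free for $H'=H-P_1$, but the inductive hypothesis would only let you solve \dmaxproblem{} \emph{for the class of $H'$-free graphs}, i.e.\ detect paths of length $d_{\max}(H'\text{-free})$ lying inside $G[C]$. A diametral path of $G$ of length $d_{\max}(H\text{-free})$ does not live in $G[C]$; it threads through $B=N(u)$ and possibly $u$, and what you actually need recursively is a different, harder primitive: locating endpoints of long shortest paths of $G$ that happen to start in $C$, with full knowledge of how $B$ attaches to $C$. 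The paper's Theorems~\ref{thm:P3+2P1free}--\ref{thm:P4+P1free} each solve this attachment problem by a bespoke case analysis (sixteen path types for $P_4+P_1$, including the quotient-graph construction in case (3c.1)), and no uniform invariant of the kind you gesture at (``collapses to a bounded number of twin classes'') is proved anywhere, nor is it clear it is true for, say, $H=P_7+P_1$ or $H=2P_2+2P_1$.

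Two further points. First, your reduction machinery is silent on every linear forest without a $P_1$ component, yet the conjecture includes all of them except $2P_2$; the smallest such case, $H=P_3+P_2$ with $d_{\max}=4$, is singled out in the paper as open precisely because single-vertex peeling gives nothing there. Second, a small technical slip: Theorem~\ref{thm:HLongestPath} gives only an \emph{upper bound} of $k-2+\sum_i a_i$ vertices on induced paths, and your assertion that ``no induced path of $G$ can be extended by a further vertex'' holds only for induced paths already attaining $N$ vertices, not in general. None of this makes your outline worthless --- it is a fair description of the proof strategy the existing theorems share and of why it resists generalization --- but it is not a proof, and the statement remains a conjecture.
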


	Conjecture~\ref{conj:Paigeneral} rules out $k=1$ as these cases are hard under Theorem~\ref{thm:Ptodd} and Conjecture~\ref{conj:Ptfree}. It also rules out the case of $2P_2$-free graphs, for which we showed that deciding whether the diameter is equal to $d_{\max} = 3$ is hard under SETH (see Theorem~\ref{thm:Ptodd}).
	
	Because of the generic nature of the statement in Conjecture~\ref{conj:Paigeneral}, it may very well turn out the statement is false. We give a more specific version of Conjecture~\ref{conj:Paigeneral}, which one may find more plausible. In particular, this version of the conjecture lays emphasis on linear forests $H$ including some set of isolated vertices, as it seems that in our algorithms, the presence of a $P_1$ in $H$ helps out in structural analysis.
	
	\begin{restatable}{conjecture}{ConjPtP1}\label{conj:rPt+sP1}
		Let $\mathcal{G}$ be the class of $(rP_t + sP_1)$-free graphs, where $r,s,t\geq 1$. Then \dmaxproblem{} can be solved in $O(n+m)$ time.
	\end{restatable}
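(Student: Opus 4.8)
The plan is to generalise the template behind the small cases (Theorems~\ref{thm:P3+2P1free},~\ref{thm:P2+3P1free},~\ref{thm:2P2+P1free}): run a constant number of BFS sweeps, use $(rP_t+sP_1)$-freeness to carve $V(G)$ into a bounded number of ``layers'' whose mutual adjacencies are severely constrained, and argue that the endpoints of any length-$d_{\max}$ shortest path must lie in one small, highly structured part, from which a single additional BFS from a minimum-degree vertex settles the question. First I would record, via Theorem~\ref{thm:HLongestPath}, that $d_{\max}(\mathcal{G}) = r(t{+}1) + 2s - 3$, and observe that a diametral path of that length is a \emph{longest} induced path of $G$; in particular it simultaneously ``witnesses'' $r$ vertex-disjoint induced $P_t$'s together with slack for the $sP_1$ part, which is what allows $H$-freeness to bite hard on everything lying ``off'' a near-diametral path.

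The core of the argument would be a recursion on the forbidden pattern. Starting from a well-chosen anchor (a BFS root of low degree together with a vertex $v$ at distance $2$, giving an explicit small gadget analogous to the $\langle u,a,v\rangle$ of Theorem~\ref{thm:P3+2P1free}), one peels layers: the set $T$ anti-complete to an induced $P_t$ found in a far layer is $((r{-}1)P_t+sP_1)$-free, and the set anti-complete to a fixed $2P_1$ is $(rP_t+(s{-}2)P_1)$-free. Iterating, after $O_{r,s,t}(1)$ steps the ``deep'' leftover is $sP_1$-free — independence number at most $s-1$ — or $(P_2{+}P_1)$-free (complete multipartite) or $P_3$-free (a disjoint union of cliques). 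At each step one needs structural claims of the type already appearing in the paper — analogues of Claim~\ref{clm:completetoallC} and Claim~\ref{clm:2P2+P1_b1b2dominate}: a vertex with both a neighbour and a non-neighbour in one component or part of a deep layer is forced, by inserting the forbidden pattern, to be complete to all the \emph{other} components or parts of that layer. These ``complete to all but one'' statements make the layers behave like complete multipartite graphs or unions of cliques, so distances inside $T$ collapse to $O(1)$ except for a single twin class of ``extremal'' vertices whose closed neighbourhood is contained in $T$. Such a twin class is detected in linear time (Theorem~\ref{thm:twinremoval}, Corollary~\ref{cor:bipartitetwins}), and one BFS from any of its minimum-degree members realises a length-$d_{\max}$ path of the corresponding ``type'' whenever one exists; the bounded number of sweeps and partition-refinement steps gives the $O(n+m)$ bound.

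The main obstacle I expect is twofold. Combinatorially, as $r$ and $t$ grow the number of essentially different ``types'' of a length-$d_{\max}$ path — how its $r$ path-segments and $s$ extra vertices distribute across the layers, in the spirit of the types \textbf{(1a)}--\textbf{(3f)} in Theorem~\ref{thm:P4+P1free} — grows quickly, and one must show that all but $O_{r,s,t}(1)$ of them either cannot occur in an $H$-free graph or are ``covered'' by the computation for another type; doing this uniformly rather than ad hoc requires a clean canonical form for near-diametral induced paths. Structurally, it is not obvious that once $t\ge 3$ and $r\ge 2$ the extremal far-endpoints always form a \emph{single} twin class: the proofs for the handled cases lean on the leftover being a disjoint union of cliques (a $P_3$-free phenomenon) or complete multipartite (a $(P_2{+}P_1)$-free phenomenon), and the real content of the conjecture is an invariant guaranteeing that, after peeling, the deep part retains one of these two shapes. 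Finally, the essential use of $s\ge 1$ is unavoidable — it is exactly the step ``some peeled layer is $s'P_1$-free, hence of bounded independence number'' that breaks at $s=0$ — which is consistent with Conjecture~\ref{conj:Ptfree} asserting that $P_t$-free (the $s=0$ case) is already SETH-hard; so any correct proof must degrade precisely there.
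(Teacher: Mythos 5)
The statement you are trying to prove is not a theorem of the paper: it is Conjecture~\ref{conj:rPt+sP1}, stated in the conclusion as an explicitly open problem, so there is no proof in the paper to compare yours against. The paper itself lists the smallest untreated instances of this very conjecture --- $H=5P_1$, $H=2P_2+2P_1$, $H=P_4+2P_1$, $H=P_5+P_1$ --- as open, and even singles out $P_4+2P_1$ as ``the only graph $H$ for which we have no result.'' Your submission is a research plan rather than a proof, and it does not resolve any of these instances.

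Concretely, the plan has two genuine gaps, both of which you candidly flag but neither of which you close, and they are exactly the open content of the conjecture. First, the claim that after peeling layers ``distances inside $T$ collapse to $O(1)$ except for a single twin class of extremal vertices'' is asserted, not proved. In the cases the paper does handle, this collapse is obtained from very specific structure: the deep leftover is $P_3$-free (union of cliques) or $(P_2+P_1)$-free (complete multipartite), and even then the endpoint candidates form a single true-twin class only after delicate arguments (e.g.\ the auxiliary graph $G'$ and the good/bad non-edge analysis in the proof of Theorem~\ref{thm:P4+P1free}, or the minimum-degree arguments in Theorems~\ref{thm:P3+2P1free} and~\ref{thm:2P2+P1free}). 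Nothing in your sketch supplies the invariant guaranteeing the peeled layer retains one of these shapes once $t\geq 3$ or $r\geq 2$; the anti-completeness claims (analogues of Claim~\ref{clm:completetoallC}) only constrain vertices with mixed adjacency to a \emph{single} component, and it is unclear how they control paths that weave through several intermediate layers when $d_{\max}=r(t+1)+2s-3$ is large. Second, the reduction of the ``types'' of a length-$d_{\max}$ path to $O_{r,s,t}(1)$ computationally relevant ones is done entirely ad hoc in the paper even for $P_4+P_1$ (types \textbf{(1a)}--\textbf{(3f)}, each with its own correctness argument), and you give no canonical form or uniform covering argument that would replace this case analysis for general $r,s,t$. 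Your value $d_{\max}=r(t+1)+2s-3$ and the observation that the $s=0$ case must fail (consistent with Theorem~\ref{thm:Ptodd} and Conjecture~\ref{conj:Ptfree}) are correct, but the proposal as written is a plausible strategy for attacking the conjecture, not a proof of it.
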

	
	Conjecture~\ref{conj:rPt+sP1} demands $s \geq 1$, as $s = 0$ would contradict Theorem~\ref{thm:Ptodd}. $t\geq 1$ must hold for the correctness of the statement of $d_{\max}$.

	Note that none of our results distinguish between Conjecture~\ref{conj:Paigeneral} and Conjecture~\ref{conj:rPt+sP1}. That is, all our results support the more specific Conjecture~\ref{conj:rPt+sP1}, which in turn supports the more general conjecture. To the best of our knowledge, there is no evidence against either conjecture. A linear-time algorithm for the only open case with $d_{\max} = 4$, which is for $H=P_3+P_2$, would support Conjecture~\ref{conj:Paigeneral} but not Conjecture~\ref{conj:rPt+sP1}, and would suggest Conjecture~\ref{conj:rPt+sP1} is not the full truth.
	
	The main open problem posed by our work is then whether the Conjectures~\ref{conj:Ptfree},~\ref{conj:Paigeneral}, and/or~\ref{conj:rPt+sP1} hold true. The smallest open cases for algorithmic results to support Conjectures~\ref{conj:Paigeneral} and~\ref{conj:rPt+sP1} would be the class of $H$-free graphs with one of $H = 5P_1$ ($d_{\max} = 7$), $H = 2P_2+2P_1$ ($d_{\max} = 7$), $H = P_3+P_2$ ($d_{\max} = 4$), $H = P_4+2P_1$ ($d_{\max} = 6$), or $H = P_5+P_1$ ($d_{\max} = 5$). The smallest open case for a hardness result to support Conjecture~\ref{conj:Ptfree} is the class of $P_6$-free graphs. Conversely, an algorithmic result for $P_6$-free graphs refutes Conjecture~\ref{conj:Ptfree}, and hardness results for the other classes above refute Conjecture~\ref{conj:Paigeneral} or Conjecture~\ref{conj:rPt+sP1}.
	The specific case of $H = P_4+2P_1$ is interesting because it is the only graph $H$ for which we have no result.

	Further open questions are revealed by our results, or in particular, what our results do not show. For instance, for $(2P_2+P_1)$-free graphs, we know that we can decide in linear time whether the diameter is equal to 5, and cannot decide in truly subquadratic time whether the diameter is 2 or 3. So, settling the following open question would form a complete dichotomy for $(2P_2+P_1)$-free graphs:
	\begin{openquestion}\label{oq:2p2+p1}
		Given a $(2P_2+P_1)$-free graph $G$, can we decide in $O(n+m)$ time whether the diameter of $G$ is equal to $4$?
	\end{openquestion}
	
	We make progress for {\sc Diameter} computation on $H$-free graphs, but do not settle its complexity completely, so we ask:
	\begin{openquestion}
		When $H=4P_1$, $H=P_2+3P_1$, $H = P_3+2P_1$, $H=P_4+2P_1$, or $H = P_4+P_1$, can we solve the {\sc Diameter} problem on connected $H$-free graphs in $O(n+m)$ time?
	\end{openquestion}

	\bibliographystyle{plain}
	\bibliography{bib}
	
	\appendix
	
	\section{Existing Results}\label{appendix}
	
	\begin{figure}
		\centering
		\includegraphics[width=.8\textwidth]{./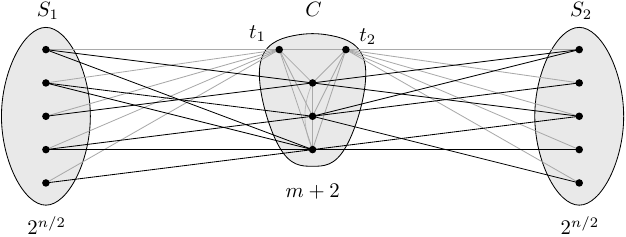}
		\caption{Reduction of Roditty and Williams~\cite{RodittyWilliams13}. The vertices corresponding to assignments of variables in the sets $S_1$ and $S_2$ form an independent set, and the $m+2$ vertices of $C$ form a clique. Hence, the graph is a split graph.}\label{fig:RWReduc}
	\end{figure}

	The following theorem and proof are due to Roditty and Williams~\cite{RodittyWilliams13}. We provide this result here because we argue about augmented versions of this construction several times.

	\begin{theorem}[\cite{RodittyWilliams13}]\label{thm:RodWilHardness}
		Under SETH, one cannot distinguish between diameter 2 and 3 in an $M$-edge split graph in $O(M^{2-\epsilon})$ time for any constant $\epsilon > 0$.
	\end{theorem}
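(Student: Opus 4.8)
The plan is to run the classical \emph{split-and-list} reduction --- in effect an \textsc{Orthogonal Vectors} argument packaged so that the output is a split graph. Fix $\epsilon>0$ for which we want to preclude an $O(M^{2-\epsilon})$-time algorithm, and pick $\delta>0$ with $\delta<\epsilon/2$; by SETH there is a width $k=k(\delta)$ such that $k$-SAT on $n$ variables cannot be decided in $O(2^{(1-\delta)n})$ time, and every $k$-CNF formula has $m=O(n^{k})=\mathrm{poly}(n)$ clauses. Given such a formula $\varphi$, split its variables into two halves of size $n/2$ and enumerate all $N=2^{n/2}$ partial assignments of each half, obtaining sets $S_1$ and $S_2$. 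The one fact we need about $\varphi$ is that a pair $(\alpha_1,\alpha_2)\in S_1\times S_2$ is a satisfying assignment iff \emph{no} clause of $\varphi$ is left unsatisfied by both $\alpha_1$ and $\alpha_2$; this is exactly what the graph will encode through common neighbourhoods.

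Next I would build the split graph $G$. Its independent side has one vertex per element of $S_1\cup S_2$. Its clique side $C$ has one vertex $w_j$ per clause $c_j$ together with two apex vertices $u_1,u_2$, and $C$ is turned into a clique of size $m+2$. For the cross edges: join $\alpha\in S_1\cup S_2$ to $w_j$ iff $\alpha$ does \emph{not} satisfy $c_j$, and join every vertex of $S_1$ to $u_1$ and every vertex of $S_2$ to $u_2$. By construction $G$ is a split graph; it has $O(N+m)$ vertices and $M=O(Nm+m^{2})=O(2^{n/2}\cdot\mathrm{poly}(n))$ edges, and it can be written down in $O(2^{n/2}\cdot\mathrm{poly}(n))$ time (for each partial assignment, determine which clauses it fails).

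The correctness step is to show that $\diam(G)\in\{2,3\}$ and that $\diam(G)=3$ iff $\varphi$ is satisfiable. Every vertex of $S_i$ is adjacent to $u_i\in C$, so every vertex is within distance $1$ of the clique $C$; hence $G$ is connected and $\diam(G)\le 3$ unconditionally, any two vertices both lying in $C\cup S_i$ are at distance at most $2$ (route through $u_i$, or stay inside the clique), and any two vertices of $C$ are adjacent. So the only pairs that can realize distance $3$ are pairs $(\alpha_1,\alpha_2)\in S_1\times S_2$. Such a pair has a common neighbour iff it has a common \emph{clause} neighbour (the independent side has no internal edges), i.e.\ iff some clause is unsatisfied by both $\alpha_1$ and $\alpha_2$, i.e.\ iff $(\alpha_1,\alpha_2)$ is \emph{not} satisfying; and when $(\alpha_1,\alpha_2)$ is satisfying, the walk $\alpha_1 - u_1 - u_2 - \alpha_2$ still witnesses distance exactly $3$. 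Hence $\diam(G)=3$ precisely when $\varphi$ has a satisfying assignment and $\diam(G)=2$ otherwise ($G$ is not complete, since $S_1\cup S_2$ is an independent set of size at least $2$).

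Putting it together, an $O(M^{2-\epsilon})$-time algorithm that distinguishes diameter $2$ from $3$ on $M$-edge split graphs would decide the $k$-SAT instance, construction time included, in time $O\big((2^{n/2}\mathrm{poly}(n))^{2-\epsilon}\big)+O(2^{n/2}\mathrm{poly}(n))$, which equals $2^{(1-\epsilon/2+o(1))n}$ and is therefore $O(2^{(1-\delta)n})$ for $n$ large (as $\delta<\epsilon/2$), contradicting the choice of $k$ and so refuting SETH. I expect the only delicate points to be the two apex vertices and the sparsity of the clique. The apices $u_1,u_2$ are exactly the gadget that keeps $G$ connected and caps its diameter at $3$ even when some partial assignment already satisfies every clause (without them such a vertex would be isolated), and obtaining the clean dichotomy ``distance $3$ iff satisfying, distance $2$ otherwise'' --- rather than merely ``$\ge 3$ when satisfiable'' --- is where the case analysis has to be done carefully. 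The clique contributes $\Theta(m^{2})$ edges, which is why we must start from bounded-width CNF so that $m=\mathrm{poly}(n)$ is negligible against $2^{n/2}$ and the bound on $M$ holds.
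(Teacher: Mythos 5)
Your proposal is correct and is essentially the same reduction the paper gives (Appendix~\ref{appendix}, Theorem~\ref{thm:RodWilHardness}): the same split-and-list construction with clause vertices plus two apex vertices $t_1,t_2$ forming the clique side, the same ``no common neighbour iff satisfying pair'' correctness argument, and the same edge-count/running-time calculation. The only cosmetic difference is that you pin down a bounded clause width via the $k$-SAT form of SETH to ensure $m=\mathrm{poly}(n)$, whereas the paper works with CNF-SAT and carries a $\mathrm{poly}(n,m)$ factor through; both are standard and equivalent here.
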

	\begin{proof}
		Consider an instance of \textsc{CNF-SAT} on $n$ variables and $m$ clauses. Let $X_1$ and $X_2$ be a partitioning of the $n$ variables, where $|X_1| = |X_2| = n/2$. For $1\leq i \leq 2$, create a set $S_i$ of vertices containing a vertex for every assignment to the variables in $X_i$. We get $|S_1| = |S_2| = 2^{n/2}$. For $1\leq i \leq 2$, create a vertex $t_i$ that is complete to $S_i$. Create a set $C$ of vertices that contains a vertex for each clause, and also contains $t_1$ and $t_2$. Make $C$ into a clique; note that it has size $m+2$. Now, connect a vertex in $S_1$ or $S_2$ to a vertex in $C\setminus \{t_1,t_2\}$ if the corresponding assignment does not satisfy the corresponding clause. Figure~\ref{fig:RWReduc} contains an illustration of the construction. Note that the construction is a split graph, because $C$ is a clique and $S_1 \cup S_2$ is an independent set, and the graph has $O(m + 2^{n/2})$ nodes and $O(m2^{n/2})$ edges.

		We claim that the diameter of this graph is 3 if and only if there is a satisfying assignment to the instance of \textsc{CNF-SAT}. To see this, note that the diameter is at least 2 and at most 3, and a distance-3 pair must have one vertex in $S_1$ and one in $S_2$. Any two assignments that both do not satisfy some clause $c \in C$ have $c$ as a common neighbour. If there are two partial assignments, $\phi_1 \in S_1$ and $\phi_2 \in S_2$, that together form a satisfying assignment, then the distance from $\phi_1$ to $\phi_2$ is at least 3 because they do not have a common neighbour. Hence, any algorithm for \textsc{Diameter} that can distinguish between diameter 2 and 3 in an $M$-edge split graph in time $O({M}^{2-\epsilon})$ for some constant $\epsilon > 0$, would imply an algorithm for \textsc{CNF-SAT} in $O\left({m2^{(n/2)}}^{(2-\epsilon)}\right) = O\left(2^{n(1-\epsilon/2)}\mathrm{poly}(n,m)\right)$ time, and refute SETH\@.
	\end{proof}

	Related to the above theorem is the {\sc Orthogonal Vectors} problem, where, for a value $d = \omega(\log n)$, we are given two sets $A,B\subseteq {\{0,1\}}^d$, $|A| = |B| = n$, and are tasked to determine whether there are $a\in A$, $b\in B$ with $a\cdot b = 0$, that is, find two orthogonal vectors. Under SETH, we cannot solve {\sc Orthogonal Vectors} in truly subquadratic time~\cite{Williams05}. A reduction with the same hardness implication for {\sc Diameter} can be done from {\sc Orthogonal Vectors}~\cite{AbboudWW16,Williams19OnSome}, and this reduction is essentially equivalent to that of Theorem~\ref{thm:RodWilHardness}. To see this, let $S_1$ have a vertex per vector in $A$, and $S_2$ a vertex per vector in $B$. Let $C$ have $d+2$ vertices. Again, $C$ is a clique, and has two vertices $t_1,t_2$ where $t_1$ is complete to $S_1$ and $t_2$ is complete to $S_2$. Then, connect a vertex in $S_1$ or $S_2$ with a vertex in $C$ if the corresponding vector has a $1$ at the corresponding index. Now, the diameter of the graph is 3 if and only if there is an orthogonal pair of vectors.

	When discussing $4P_1$-free graphs in the introduction, we mentioned a possible approach where we make this construction into three cliques. And, a linear-time algorithm for $4P_1$-free graphs would imply an $O(n^2 + d^2)$ algorithm for {\sc Orthogonal Vectors}. To see this, assume we have such an algorithm, and take some input to {\sc Orthogonal Vectors} $A,B$. As above, create the construction that reduces {\sc Orthogonal Vectors} to {\sc Diameter}, and also make $S_1$ and $S_2$ into a clique. The graph has $O(n+d)$ vertices and $O(n^2 + d^2)$ edges, and is clearly $4P_1$-free. It takes $O(n^2 + d^2)$ time to construct the graph. But then a linear-time algorithm for {\sc Diameter} on $4P_1$-free graphs implies an algorithm for {\sc Orthogonal Vectors} in $O(n^2 + d^2)$ time.

\end{document}